\RequirePackage{silence}
\WarningFilter{remreset}{The remreset package is obsolete}
\WarningFilter{hyperref}{Token not allowed in a PDF string (PDFDocEncoding)}

\documentclass[a4paper, UKenglish, cleveref, autoref,
thm-restate]{lipics-v2019}


\usepackage{mathpartir}
\usepackage{thmtools}
\usepackage{mathtools}
\usepackage{multicol}
\usepackage[bbgreekl]{mathbbol}
\usepackage{bbm}
\usepackage{amsmath}
\usepackage{amssymb}
\usepackage{tikz}
\usetikzlibrary{arrows,fit,calc}
\usepackage{array}
\usepackage{scalerel}
\usepackage{float}
\usepackage{thm-restate}
\usepackage{verbdef}

\DeclareSymbolFont{stmry}{U}{stmry}{m}{n}
\DeclareMathDelimiter\llbracket{\mathopen}{stmry}{"4A}{stmry}{"71}
\DeclareMathDelimiter\rrbracket{\mathclose}{stmry}{"4B}{stmry}{"79}
\DeclareMathSymbol\llparenthesis\mathopen{stmry}{"4C}
\DeclareMathSymbol\rrparenthesis\mathclose{stmry}{"4D}
\DeclareMathSymbol\subsetplus\mathrel{stmry}{"44}
\DeclareMathSymbol\supsetplus\mathrel{stmry}{"45}

\DeclareMathAlphabet{\mathpzc}{OT1}{pzc}{m}{it}

\newcommand\Mid{\mathrel{|}}
\newcommand\Coloneqq{\mathrel{\mathop{::}}=}
\newcommand\paren[1]{\left(#1\right)}
\renewcommand\brack[1]{\left[#1\right]}
\newcommand\set[1]{\left\{#1\right\}}
\newcommand\setcompr[2]{\left\{#1~\middle|~#2\right\}}
\newcommand\tuple[1]{\left\langle#1\right\rangle}
\newcommand\eqdef{\mathrel{\mathop{:}}=}
\newcommand\pset[1]{\mathcal{P}\paren{#1}}
\newcommand\fpset[1]{\mathcal{P}_f\paren{#1}}
\newcommand\Nat{\mathbb{N}}
\newcommand\rsem[1]{\stretchleftright{\llparenthesis}{\displaystyle \makebox[0pt][c]{\color{white} $\beta$}{#1}}{\rrparenthesis}}

\renewcommand\epsilon\varepsilon

\newcommand\psem[2][\sigma]{\left\llbracket {#2}\right\rrbracket_{#1}}
\newcommand\sem[1]{\left\llbracket {#1}\right\rrbracket}

\newcommand\Id[1]{Id_{#1}}

\newcommand\alphabet\Sigma

\newcommand\events[1][]{\mathcal{E}_{#1}}
\newcommand\order[1][]{\mathrel{\leq_{#1}}}
\newcommand\labels[1][]{\lambda_{#1}}
\newcommand\boxes[1][]{\mathcal{B}_{#1}}

\newcommand\posets[1][\alphabet]{\mathbb{P}_{#1}}
\newcommand\pomsets[1][\alphabet]{\text{\textnormal{\textbf{Pom}}}_{#1}}

\newcommand\pomeq{\cong}
\newcommand\subsetsim{%
  \mathrel{%
    \ooalign{%
      \raise0.2ex\hbox{$\subset$}\cr\hidewidth\raise-0.8ex\hbox{\scalebox{0.9}{$\sim$}}%
      \hidewidth\cr%
    }%
  }%
}
\newcommand\revsubsume{\sqsupseteq}
\newcommand\subsume{\sqsubseteq}
\newcommand\osubsume{\sqsubseteq_o}
\newcommand\bsubsume{\sqsubseteq_b}

\newcommand\clpom[1]{{#1{\downarrow}}}
\newcommand\upclpom[1]{{#1{\uparrow}}}

\newcommand\homo{\to}

\newcommand\atom[1]{\mathbb{#1}}
\newcommand\unitposet{\mathbb{\bbespilon}}
\newcommand\pomseq\otimes
\newcommand\pompar\oplus
\newcommand\boxing[1]{\brack{#1}}

\newcommand\issubpom{\mathrel{\subsetplus}}
\newcommand\contains{\mathrel{\supsetplus}}
\newcommand\restrict[2]{{#1}{\downharpoonright}_{#2}}

\newcommand\spterms[1][\alphabet]{\mathtt{SP}_{#1}}
\newcommand\terms[1][\alphabet]{\mathtt{T}_{#1}}

\newcommand\join{\mathop{+}}
\newcommand\tpar{\mathop{\parallel}}
\newcommand\tseq{\mathop{;}}

\newcommand\bigjoin{\sum}

\newcommand\proj[2]{{\pi_{#1}\paren{#2}}}

\newcommand\Kformulas[1][\alphabet]{\mathtt{F}_{#1}}
\newcommand\Jformulas[1][\alphabet]{\mathtt{F}_{#1}^+}

\newcommand\K{{\pomeq}}
\newcommand\Ju{{\revsubsume}}
\newcommand\Jd{{\subsume}}

\newcommand\satK{\mathrel{\models_\K}}
\newcommand\satJu{\mathrel{\models_{\Ju}}}
\newcommand\satJd{\mathrel{\models_{\Jd}}}
\newcommand\satKU{\mathrel{\models_\K^\forall}}
\newcommand\satJuU{\mathrel{\models_{\Ju}^\forall}}
\newcommand\satJdU{\mathrel{\models_{\Jd}^\forall}}
\newcommand\satKE{\mathrel{\models_\K^\exists}}
\newcommand\satJuE{\mathrel{\models_{\Ju}^\exists}}
\newcommand\satJdE{\mathrel{\models_{\Jd}^\exists}}

\newcommand\then{\mathop{\blacktriangleright}}
\newcommand\nextto{\mathop{\star}}
\newcommand\context[1]{\rsem{#1}}

\newcommand\emptyprop\bot

\newcommand\independent[1][\pomeq]{\mathrel{\#_{#1}}}

\newcommand\axeq{\mathrm{BiMon}_{\Box}}
\newcommand\axinf{\mathrm{CMon}_{\Box}}
\newcommand\axsr{\mathrm{SR}_{\Box}}
\newcommand\axsrinf{\mathrm{CSR}_{\Box}}

\newcommand\Read[1]{\text{\includegraphics[width=3ex]{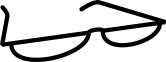}}_{#1}}
\usepackage{marvosym}
\newcommand\Print{\text{\Printer}}
\newcommand\Write[1]{\text{\WritingHand}_{#1}}
\newcommand\Compute[1]{\text{\Industry}_{#1}}

\newcommand\VoteProc{\mathtt{VoteProc}}
\newcommand\Choose{\mathtt{Choose}}
\newcommand\Publish{\mathtt{Publish}}
\renewcommand\choose[1]{\text{\Letter}_{#1}}
\newcommand\vote[1]{\mathtt{Vote}\paren{#1}}
\newcommand\send[1]{\text{\Telefon}_{#1}}

\newcommand\conflict[1]{\mathbf{conflict}_{{#1}}}
\newcommand\seqsep{\mathbf{SendAfterVote}}
\newcommand\fchoose[1]{\mathbf{choose}_{{#1}}}
\newcommand\votethensend{\mathbf{VoteThenSend}}

\newcounter{mysubtable}
\newcounter{myaxiom}
\newcounter{myeq}
\newcommand\axlabel[1]{%
  \stepcounter{myaxiom}%
  \label{#1}%
  \tag{\Alph{mysubtable}{\scriptsize{\arabic{myaxiom}}}}%
}
\newenvironment{tableequations}{%
  \setcounter{myaxiom}{0}%
  \stepcounter{mysubtable}%
  \ignorespaces
}{%
  \ignorespacesafterend
}

\setcounter{mysubtable}{0}%
\numberwithin{equation}{section}

\theoremstyle{plain}
\newtheorem{fact}[theorem]{Fact}


\bibliographystyle{plainurl}

\title{Pomsets with Boxes: Protection, Separation, \\ and Locality in Concurrent Kleene Algebra}

\titlerunning{Pomsets with Boxes}

\author{Paul Brunet}{University College London, UK \and \url{paul.brunet-zamansky.fr}}{paul@brunet-zamansky.fr}{https://orcid.org/0000-0002-9762-6872}{}

\author{David Pym}{University College London, UK \and \url{www.cantab.net/users/david.pym/} }{d.pym@ucl.ac.uk}{https://orcid.org/0000-0002-6504-5838}{}

\authorrunning{P. Brunet and D. Pym} 

\Copyright{Paul Brunet and David Pym} 

\begin{CCSXML}
  <ccs2012>
  <concept>
  <concept_id>10003752.10003790.10002990</concept_id>
  <concept_desc>Theory of computation~Logic and verification</concept_desc>
  <concept_significance>500</concept_significance>
  </concept>
  <concept>
  <concept_id>10003752.10010124</concept_id>
  <concept_desc>Theory of computation~Semantics and reasoning</concept_desc>
  <concept_significance>500</concept_significance>
  </concept>
  <concept>
  <concept_id>10003752.10003790.10011742</concept_id>
  <concept_desc>Theory of computation~Separation logic</concept_desc>
  <concept_significance>500</concept_significance>
  </concept>
  </ccs2012>
\end{CCSXML}

\ccsdesc[500]{Theory of computation~Logic and verification}
\ccsdesc[500]{Theory of computation~Semantics and reasoning}
\ccsdesc[500]{Theory of computation~Separation logic}

\keywords{Concurrent Kleene Algebra,
  Pomsets,
  Atomicity,
  Semantics,
  Separation,
  Local reasoning,
  Bunched logic, 
  Frame rules.}

\category{} 

\relatedversion{This preprint is based on a paper to appear in FSCD 2020.}

\supplement{ A formalization of Section~\ref{sec:algebra} in Coq is
  available on github:
  \href{https://github.com/monstrencage/AtomicCKA}{\texttt{AtomicCKA}}.}

\funding{This work has been supported by UK EPSRC Research Grant EP/R006865/1:
  Interface Reasoning for Interacting Systems (IRIS).}

\acknowledgements{The authors are grateful to their colleagues at UCL
  and within IRIS project for their interest. We also thank the
  anonymous referees for their comments and suggestions.}

\nolinenumbers 
\hideLIPIcs

\EventEditors{Zena M. Ariola}
\EventNoEds{1}
\EventLongTitle{5th International Conference on Formal Structures for Computation and Deduction (FSCD 2020)}
\EventShortTitle{FSCD 2020}
\EventAcronym{FSCD}
\EventYear{2020}
\EventDate{June 29--July 5, 2020}
\EventLocation{Paris, France}
\EventLogo{}
\SeriesVolume{167}
\ArticleNo{5}

\begin{document}

\maketitle

\begin{abstract}
  Concurrent Kleene Algebra is an elegant tool for equational
  reasoning about concurrent programs.
  An important feature of concurrent programs that is missing from CKA
  is the ability to restrict legal interleavings. To remedy this we
  extend the standard model of CKA, namely pomsets, with a new
  feature, called boxes, which can specify that part of the system is
  protected from outside interference. We study the algebraic properties
  of this new model.
  Another drawback of CKA is that the language used for expressing
  properties of programs is the same as that which is used to express
  programs themselves.
  This is often too restrictive for practical purposes.
  We provide a logic, `pomset logic', that is an assertion language
  for specifying such properties, and which is interpreted on pomsets
  with boxes.
  In contrast with other approaches, this logic is not state-based,
  but rather characterizes the runtime behaviour of a program.
  We develop the basic metatheory for the relationship between pomset
  logic and CKA, including frame rules to support local reasoning,  and 
  illustrate this relationship with simple examples.

\end{abstract}


\section{Introduction} \label{sec:intro}

Concurrent Kleene Algebra (CKA)
\cite{hoare-moeller-struth-wehrman-2009,kbswz20,kbsz18,bps17} is an
elegant tool for equational reasoning about concurrent programs. Its
semantics is given in terms of pomsets languages; that is, sets of
pomsets. Pomsets~\cite{gischer88}, also known as partial
words~\cite{grabowski-1981}, are a well-known model of concurrent
behaviour, traditionally associated with runs in Petri
nets~\cite{jategaonkarDecidingTrueConcurrency1996,bps17}.

However, in CKA the language used for expressing properties of
programs is the same as that which is used to express programs
themselves.
It is clear that this situation is not ideal for specifying and reasoning about 
properties of programs. Any language specifiable in CKA terms has bounded width 
(i.e., the number of processes in parallel; the size of a maximal independent set)  
and bounded depth (i.e., the number of alternations of parallel and
sequential compositions)\cite{ls14}. However, many properties of
interest --- for example, safety properties --- are satisfied by sets of
pomsets with both unbounded width and depth.

In this paper, we provide a logic, `pomset logic', that is an assertion language for 
specifying such properties. We develop the basic metatheory for the relationship 
between pomset logic and CKA and illustrate this relationship with simple examples. 
In addition, to the usual classical or intuitionistic connectives --- both are possible 
--- the logic includes connectives that characterize both sequential and parallel 
composition.

In addition, we note that CKA allows programs with every possible
interleaving of parallel threads. However, to prove the correctness of
such programs, some restrictions must be imposed on what are the legal
interleavings. We provide a mechanism of `boxes' for this
purpose. Boxes identify protected parts of the system, so restricting
the possible interleavings. From the outside, one may interact with
the box as a whole, as if the program inside was atomic. On the other
hand, it is not possible to interact with its individual components,
as that would intuitively require opening the box. However, boxes can
be nested, with this atomicity observation holding at each level.
Pomset logic has context and box modalities that characterize this
situation.
\begin{note*}
  The term `Pomset logic' has already been used in work by
  Retor\'e~\cite{retore97}. We feel that reusing it does not introduce
  ambiguity, since the two frameworks arise in different contexts.
\end{note*}
\begin{example}[Running example: a distributed counter]\label{ex:distrib-count}
  \verbdef\pinstrprint{print(counter);}
  \verbdef\pinstrx{x:=0;}
  \verbdef\pinstry{y:=0;}
  \verbdef\pinstrxc{x:=counter;}
  \verbdef\pinstryc{y:=counter;}
  \verbdef\pinstrxincr{x:=x+1;}
  \verbdef\pinstryincr{y:=y+1;}
  \verbdef\pinstrcx{counter:=x;}
  \verbdef\pinstrcy{counter:=y;}
  \verbdef\instrprint{print(counter)}
  \verbdef\instrx{x:=0}
  \verbdef\instry{y:=0}
  \verbdef\instrxc{x:=counter}
  \verbdef\instryc{y:=counter}
  \verbdef\instrxincr{x:=x+1}
  \verbdef\instryincr{y:=y+1}
  \verbdef\instrcx{counter:=x}
  \verbdef\instrcy{counter:=y}
  \verbdef\instratomic£atomic{£
    \verbdef\instrclose£}£
  
  We consider here a program where a counter is incremented in
  parallel by two processes. The intention is that the counter should
  be incremented twice, once by each process. However, to do so each
  process has to first load the contents of the counter, then compute
  the increment, and finally commit the result to memory. A naive
  implementation is presented in \Cref{tab:ex-counter:code}.
  Graphically, we represent the print instruction
  \instrprint by $\Print$, the read instruction {\instrxc} by
  $\Read x$, the increment instruction {\instrxincr} by $\Compute x$,
  and finally the write instruction {\instrcx} by $\Write x$. We thus
  represent the previous program as displayed in
  \Cref{tab:ex-counter:pomset}.   
  \begin{figure}[t]
    \centering
    \noindent%
    \fbox{
      \begin{subfigure}[b]{.4\linewidth}\centering
        \begin{tabular}{l@{~}||@{~}l}
          \multicolumn{2}{c}{\pinstrprint}
          \\
          \pinstrxc&\pinstryc
          \\
          \pinstrxincr&\pinstryincr
          \\
          \pinstrcx&\pinstrcy
          \\
          \multicolumn{2}{c}{\pinstrprint}
        \end{tabular}
        \caption{Pseudo code}
        \label{tab:ex-counter:code}
      \end{subfigure}\hspace{.05\linewidth}%
      \begin{subfigure}[b]{.5\linewidth}\centering
        \begin{tikzpicture}[xscale=1.5,yscale=.8]
          \node (I)at (0,0){$\Print$};
          \node (p1)at (1,0.5){$\Read x$};
          \node (p2)at (2,0.5){$\Compute x$};
          \node (p3)at (3,0.5) {$\Write x$};
          \node (q1)at (1,-0.5){$\Read y$};
          \node (q2)at (2,-0.5){$\Compute y$};
          \node (q3)at (3,-0.5) {$\Write y$};
          \node (F)at (4,0){$\Print$};
          \draw[thick,->,>=stealth](I)--(p1);
          \draw[thick,->,>=stealth](I)--(q1);
          \draw[thick,->,>=stealth](p1)--(p2);
          \draw[thick,->,>=stealth](p2)--(p3);
          \draw[thick,->,>=stealth](p3)--(F);
          \draw[thick,->,>=stealth](q1)--(q2);
          \draw[thick,->,>=stealth](q2)--(q3);
          \draw[thick,->,>=stealth](q3)--(F);
        \end{tikzpicture}
        \vspace{.4cm}
        \caption{Graphical representation}
        \label{tab:ex-counter:pomset}
      \end{subfigure}
    }
    \caption{Distributed counter}
    \label{tab:ex-counter}
  \end{figure}

  \noindent%
  This program does not comply with our intended semantics, since the
  following run is possible:
  \begin{center}
    \begin{tikzpicture}[xscale=1.5,yscale=.8]
      \node (I)at (0,0){$\Print$};
      \node (p1)at (1,0){$\Read x$};
      \node (p2)at (3,0){$\Compute x$};
      \node (p3)at (5,0) {$\Write x$};
      \node (q1)at (2,0){$\Read y$};
      \node (q2)at (4,0){$\Compute y$};
      \node (q3)at (6,0) {$\Write y$};
      \node (F)at (7,0){$\Print$};
      \draw[thick,->,>=stealth](I)--(p1);
      \draw[thick,->,>=stealth](p1)--(q1);
      \draw[thick,->,>=stealth](q1)--(p2);
      \draw[thick,->,>=stealth](p2)--(q2);
      \draw[thick,->,>=stealth](q2)--(p3);
      \draw[thick,->,>=stealth](p3)--(q3);
      \draw[thick,->,>=stealth](q3)--(F);
    \end{tikzpicture}
  \end{center}
  The result is that the counter has been incremented by one. We can
  identify a subset of instructions that indicate there is a fault:
  the problem is that both read instructions happened before both
  write instructions; i.e., 
  \begin{center}
    \begin{tikzpicture}[yscale=.5]
      \node (p1)at (1,0.5){$\Read x$};
      \node (p3)at (3,0.5) {$\Write x$};
      \node (q1)at (1,-0.5){$\Read y$};
      \node (q3)at (3,-0.5){$\Write y$};
      \draw[thick,->,>=stealth](p1)--(p3);
      \draw[thick,->,>=stealth](p1)--(q3);
      \draw[thick,->,>=stealth](q1)--(p3);
      \draw[thick,->,>=stealth](q1)--(q3);
    \end{tikzpicture}
  \end{center}

  To preclude this problematic behaviour, a simple solution is to make the sequence
  `read;compute;write' \emph{atomic}. This yields the program in \Cref{tab:ex-atomic-counter:code}. Diagrammatically, this can
  be represented by drawing solid boxes around the
  {\instratomic\instrclose} blocks, as shown in
  \Cref{tab:ex-atomic-counter:pomset}.
  \begin{figure}[t]
    \centering
    \noindent%
    \fbox{
      \begin{subfigure}[b]{.4\linewidth}\centering
      \begin{tabular}{l@{~}||@{~}l}
        \multicolumn{2}{c}{\pinstrprint}
        \\
        \instratomic&\instratomic
        \\
        ~~~\pinstrxc&~~~\pinstryc
        \\
        ~~~\pinstrxincr&~~~\pinstryincr
        \\
        ~~~\pinstrcx&~~~\pinstrcy
        \\
        \instrclose&\instrclose
        \\
        \multicolumn{2}{c}{\pinstrprint}
      \end{tabular}
      \caption{Pseudo code}
      \label{tab:ex-atomic-counter:code}
    \end{subfigure}\hspace{.05\linewidth}%
    \begin{subfigure}[b]{.5\linewidth}\centering
      \begin{tikzpicture}[xscale=1.5]
        \node (I)at (0,0){$\Print$};
        \node (p1)at (1,0.5){$\Read x$};
        \node (p2)at (2,0.5){$\Compute x$};
        \node (p3)at (3,0.5) {$\Write x$};
        \node (q1)at (1,-0.5){$\Read y$};
        \node (q2)at (2,-0.5){$\Compute y$};
        \node (q3)at (3,-0.5) {$\Write y$};
        \node (F)at (4,0){$\Print$};
        \draw[thick,->,>=stealth](I)--(p1);
        \draw[thick,->,>=stealth](I)--(q1);
        \draw[thick,->,>=stealth](p1)--(p2);
        \draw[thick,->,>=stealth](p2)--(p3);
        \draw[thick,->,>=stealth](p3)--(F);
        \draw[thick,->,>=stealth](q1)--(q2);
        \draw[thick,->,>=stealth](q2)--(q3);
        \draw[thick,->,>=stealth](q3)--(F);
        \node[draw,thick,rectangle,fit=(p1)(p2)(p3),inner sep=1mm](P){};
        \node[draw,thick,rectangle,fit=(q1)(q2)(q3),inner sep=1mm](Q){};
      \end{tikzpicture}
      \vspace{.6cm}
      \caption{Graphical representation}
      \label{tab:ex-atomic-counter:pomset}
    \end{subfigure}
    }
    \caption{Distributed counter with atomic increment}
    \label{tab:ex-atomic-counter}
  \end{figure}
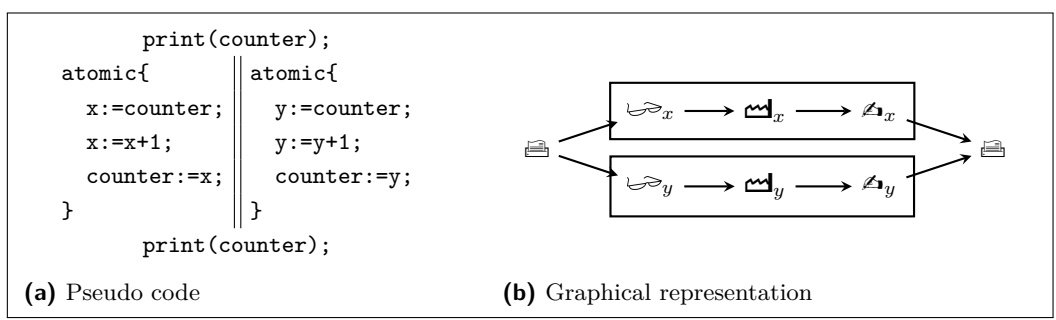
  This paper shows how to make these ideas formal.
\end{example}


In Section \ref{sec:algebra}, we extend pomsets with a new construct
for protection, namely boxes.  We provide a syntax for specifying such
pomsets and characterize precisely its expressivity. This enables us,
for example, to correctly represent the program from
Example~\ref{ex:distrib-count}. We present a sound and complete
axiomatization of these terms, with operators for boxing, sequential
and parallel composition, and non-deterministic choice, as well as the
constants \emph{abort} and \emph{skip}.

In Section \ref{sec:logic}, we introduce pomset logic. This logic
comes in both classical and intuitionistic variants. In addition to
the usual classical or intuitionistic connectives, this logic includes
connectives corresponding to each of sequential and parallel
composition. These two classes of connectives are combined to give the
overall logics, in the same way as the additives and multiplicatives
of BI (bunched implications logic) \cite{OP99,AP16,Pym2019}.  Just as
in BI and its associated separation logics \cite{OP99,IO01,Rey02},
pomset logic has both classical and intuitionistic variants. It also
includes modalities that characterize, respectively, protection, and
locality. These correspondences are made precise by
van~Benthem--Hennessy--Milner-type theorems asserting that two
programs are (operationally) equivalent iff they satisfy the same
formulae. We obtain such correspondences for several variants of our
framework.  In contrast to Hennessy--Milner logic, however, pomset
logic is a logic of \emph{behaviours} rather than of states and
transitions.

In Section \ref{sec:local}, we investigate local reasoning principles
for our logic of program behaviours. We showcase the possibilities of
our framework on an example. We conclude by briefly discussing future
work in Section~\ref{sec:future}.

\section{Algebra of Pomsets with Boxes} \label{sec:algebra}

In this section, we define our semantic model, and the corresponding syntax. 
We characterize the expressivity of the syntax, and axiomatize its equational theory.

Throughout this paper, we will use $\alphabet$ to denote a given set of atomic actions.

\subsection{Pomsets with boxes} 
\label{sec:def}

\subsubsection{Definitions and elementary properties}
\label{sec:basic-def}

\begin{definition}[Poset with boxes]
  A \emph{poset with boxes} is a tuple
  ${P\eqdef\tuple{\events[P],\order[P],\labels[P],\boxes[P]}}$, where
  $\events[P]$ is a finite set of \emph{events};
  $\order[P]\subseteq \events[P]\times\events[P]$ is a partial order;
  $\labels[P]:\events[P]\to\alphabet$ is a labelling function;
  $\boxes[P]\subseteq \pset{\events[P]}$ is a set of \emph{boxes},
  such that $\emptyset\notin\boxes[P]$.
\end{definition}
The partial order should be viewed as a set of necessary dependencies:
in any legal scheduling of the pomset, these dependencies have to be
satisfied. We therefore consider that a stronger ordering --- that is, one
containing more pairs --- yields a smaller pomset. The intuition is that
the set of legal schedulings of the smaller pomset is contained in
that of the larger one. The boxes are meant to further restrict the
legal schedulings: no event from outside a box may be interleaved
between the events inside the box. Subsequently, a pomset with more
boxes is smaller than one with less boxes. This ordering between
pomsets with boxes is formalized by the notion of homomorphism:
\begin{definition}[Poset morphisms]
  A \emph{(poset with boxes) homomorphism} is a map between event-sets
  that is bijective, label respecting, order preserving, and box
  preserving. In other words, a map $\phi:\events[P] \to \events[Q]$
  such that
  (i) $\phi$ is a bijection;
  (ii) $\labels[Q]\circ\phi=\labels[P]$;
  (iii) $\phi(\order[P])\subseteq\order[Q]$;
  (iv) $\phi(\boxes[P])\subseteq\boxes[Q]$.
  If in addition~(iii) holds as an equality, $\phi$ is
  called \emph{order-reflecting}. If on the other
  hand~(iv) holds as an equality $\phi$ is
  \emph{box-reflecting}. A homomorphism that is both order- and
  box-reflecting is a \emph{(poset with boxes) isomorphism}.
\end{definition}
In \Cref{fig:ex-subsumption} are some examples and a non-example of
subsumption between posets.
\begin{figure}[t]
  \centering
  \noindent%
  \fbox{  
    \begin{minipage}{.9\linewidth}\centering
      \begin{tikzpicture}[xscale=1]
        \node (I)at (7,0){$\Print$};
        \node (p1)at (8,0.5){$\Read x$};
        \node (p3)at (9,0.5) {$\Write x$};
        \node (q1)at (8,-0.5){$\Read y$};
        \node (q3)at (9,-0.5) {$\Write y$};
        \node (F)at (10,0){$\Print$};
        \draw[thick,->,>=stealth](I)--(p1);
        \draw[thick,->,>=stealth](I)--(q1);
        \draw[thick,->,>=stealth](p1)--(p3);
        \draw[thick,->,>=stealth](p3)--(F);
        \draw[thick,->,>=stealth](q1)--(q3);
        \draw[thick,->,>=stealth](q3)--(F);

        \node (I')at (0,0){$\Print$};
        \node (p1')at (1,0){$\Read x$};
        \node (p3')at (3,0) {$\Write x$};
        \node (q1')at (2,0){$\Read y$};
        \node (q3')at (4,0) {$\Write y$};
        \node (F')at (5,0){$\Print$};
        \draw[thick,->,>=stealth](I')--(p1');
        \draw[thick,->,>=stealth](p1')--(q1');
        \draw[thick,->,>=stealth](q1')--(p3');
        \draw[thick,->,>=stealth](p3')--(q3');
        \draw[thick,->,>=stealth](q3')--(F');

        \node()at($(I)!.5!(F')$){$\subsume$};
        \draw[thick,blue,->,>=stealth,dotted](I)to[out=160,in=20](I');
        \draw[thick,blue,->,>=stealth,dotted](F)to[out=-170,in=-10](F');
        \draw[thick,blue,->,>=stealth,dotted](p1)to[out=170,in=20](p1');
        \draw[thick,blue,->,>=stealth,dotted](p3)to[out=170,in=20](p3');
        \draw[thick,blue,->,>=stealth,dotted](q1)to[out=-170,in=-20](q1');
        \draw[thick,blue,->,>=stealth,dotted](q3)to[out=-170,in=-20](q3');
      \end{tikzpicture}
      \begin{tikzpicture}[xscale=1]
        \node (I)at (7.5,0){$\Print$};
        \node (p1)at (8.5,0.4){$\Read x$};
        \node (p3)at (9.5,0.4) {$\Write x$};
        \node (q1)at (8.5,-0.4){$\Read y$};
        \node (q3)at (9.5,-0.4) {$\Write y$};
        \node (F)at (10.5,0){$\Print$};
        \draw[thick,->,>=stealth](I)--(p1);
        \draw[thick,->,>=stealth](I)--(q1);
        \draw[thick,->,>=stealth](p1)--(p3);
        \draw[thick,->,>=stealth](p3)--(F);
        \draw[thick,->,>=stealth](q1)--(q3);
        \draw[thick,->,>=stealth](q3)--(F);
        \node[draw,thick,rectangle,fit=(p1)(p3),inner sep=1mm](P){};
        \node[draw,thick,rectangle,fit=(q1)(q3),inner sep=1mm](Q){};

        \node (I')at (0,0){$\Print$};
        \node (p1')at (1,0){$\Read x$};
        \node (p3')at (2,0) {$\Write x$};
        \node (q1')at (3.5,0){$\Read y$};
        \node (q3')at (4.5,0) {$\Write y$};
        \node (F')at (5.5,0){$\Print$};
        \draw[thick,->,>=stealth](I')--(p1');
        \draw[thick,->,>=stealth](p1')--(p3');
        \draw[thick,->,>=stealth](p3')--(q1');
        \draw[thick,->,>=stealth](q1')--(q3');
        \draw[thick,->,>=stealth](q3')--(F');
        \node[draw,thick,rectangle,fit=(p1')(p3'),inner sep=1mm](P'){};
        \node[draw,thick,rectangle,fit=(q1')(q3'),inner sep=1mm](Q'){};

        \node()at($(I)!.5!(F')$){$\subsume$};
        \draw[thick,blue,->,>=stealth,dotted](I)to[out=160,in=20](I');
        \draw[thick,blue,->,>=stealth,dotted](F)to[out=-170,in=-10](F');
        \draw[thick,blue,->,>=stealth,dotted](p1)to[out=170,in=20](p1');
        \draw[thick,blue,->,>=stealth,dotted](p3)to[out=170,in=20](p3');
        \draw[thick,blue,->,>=stealth,dotted](q1)to[out=-170,in=-20](q1');
        \draw[thick,blue,->,>=stealth,dotted](q3)to[out=-170,in=-20](q3');
      \end{tikzpicture}
      \begin{tikzpicture}[xscale=1]
        \node (I)at (7.5,0){$\Print$};
        \node (p1)at (8.5,0.4){$\Read x$};
        \node (p3)at (9.5,0.4) {$\Write x$};
        \node (q1)at (8.5,-0.4){$\Read y$};
        \node (q3)at (9.5,-0.4) {$\Write y$};
        \node (F)at (10.5,0){$\Print$};
        \draw[thick,->,>=stealth](I)--(p1);
        \draw[thick,->,>=stealth](I)--(q1);
        \draw[thick,->,>=stealth](p1)--(p3);
        \draw[thick,->,>=stealth](p3)--(F);
        \draw[thick,->,>=stealth](q1)--(q3);
        \draw[thick,->,>=stealth](q3)--(F);
        \node[draw,thick,rectangle,fit=(p1)(p3),inner sep=1mm](P){};
        \node[draw,thick,rectangle,fit=(q1)(q3),inner sep=1mm](Q){};

        \node (I')at (0,0){$\Print$};
        \node (p1')at (1,0){$\Read x$};
        \node (p3')at (3,0) {$\Write x$};
        \node (q1')at (2,0){$\Read y$};
        \node (q3')at (4,0) {$\Write y$};
        \node (F')at (5,0){$\Print$};
        \draw[thick,->,>=stealth](I')--(p1');
        \draw[thick,->,>=stealth](p1')--(q1');
        \draw[thick,->,>=stealth](q1')--(p3');
        \draw[thick,->,>=stealth](p3')--(q3');
        \draw[thick,->,>=stealth](q3')--(F');

        \node()at($(I)!.5!(F')$){$\not\subsume$};
      \end{tikzpicture}
    \end{minipage}
  }
  \caption{Poset subsumption}
  \label{fig:ex-subsumption}
\end{figure}
We introduce some notations. $\posets$ is the set of posets with
boxes. If $\phi$ is a homomorphism from $P$ to $Q$, we write
$\phi:P\homo Q$. If there exists such a homomorphism (respectively an
isomorphism) from $P$ to $Q$, we write $Q\subsume P$ (resp.
$Q\pomeq P$).

\begin{lemma}
  ${\pomeq}$ is an equivalence relation. ${\subsume}$ is a partial order with respect to ${\pomeq}$.
\end{lemma}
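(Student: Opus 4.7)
The plan is to verify each claim directly from the definitions, with the real content concentrated in the antisymmetry of $\subsume$. For $\pomeq$: reflexivity is witnessed by the identity map on events, which trivially satisfies conditions (i)--(iv), with equalities in (iii) and (iv). Transitivity follows by composing two isomorphisms pointwise and checking that bijectivity, label preservation, and the equalities in (iii)--(iv) all compose. For symmetry, I would verify that the set-theoretic inverse $\phi^{-1}$ of an isomorphism $\phi:P\homo Q$ is itself an isomorphism: the hypotheses on labels, order, and boxes are all equalities, so they can be transported across the bijection without loss.

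For $\subsume$, reflexivity and transitivity are likewise immediate, using that the identity is a homomorphism and that composing two homomorphisms preserves the (weaker) inclusion conditions (iii) and (iv). The substantive statement is antisymmetry modulo $\pomeq$: if $P\subsume Q$ and $Q\subsume P$, then $P\pomeq Q$.

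My plan for antisymmetry is a counting argument that leverages the finiteness of $\events[P]$ built into the definition. Pick homomorphisms $\phi:P\homo Q$ (witnessing $Q\subsume P$) and $\psi:Q\homo P$ (witnessing $P\subsume Q$). Since both are bijections on events, they act injectively on pairs of events and on subsets of events, so $|\phi(\order[P])|=|\order[P]|$ and $|\psi(\order[Q])|=|\order[Q]|$. Combined with $\phi(\order[P])\subseteq \order[Q]$ and $\psi(\order[Q])\subseteq \order[P]$, this squeezes both cardinalities to a common value $|\order[P]|=|\order[Q]|$. Hence $\phi(\order[P])$ is a subset of the finite set $\order[Q]$ with matching cardinality, so $\phi(\order[P])=\order[Q]$; in other words $\phi$ is order-reflecting. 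The same cardinality argument applied to the finite set $\boxes[Q]$ yields $\phi(\boxes[P])=\boxes[Q]$, so $\phi$ is box-reflecting as well, and is therefore an isomorphism, witnessing $P\pomeq Q$.

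The main obstacle is precisely this final step: one might hope for a purely algebraic or categorical argument, but the cleanest route seems to be the finite counting trick above, which crucially depends on the standing finiteness assumption on $\events[P]$ (without which neither inclusion could be upgraded to equality). The rest of the proof amounts to routine diagram-chasing with the four clauses in the definitions of homomorphism and isomorphism.
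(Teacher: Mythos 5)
Your proof is correct. The paper states this lemma without proof, offering only a remark that the antisymmetry of $\subsume$ relative to $\pomeq$ depends on the finiteness of the event sets (together with an infinite counterexample); your cardinality argument --- squeezing $|\order[P]|=|\order[Q]|$ and $|\boxes[P]|=|\boxes[Q]|$ via the two bijective homomorphisms and then upgrading the inclusions $\phi(\order[P])\subseteq\order[Q]$ and $\phi(\boxes[P])\subseteq\boxes[Q]$ to equalities --- is exactly the standard realization of that remark, and the remaining clauses are the routine diagram-chasing you describe.
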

\begin{remark}
  Note that the fact that ${\subsume}$ is antisymmetric with respect
  to ${\pomeq}$ relies on the finiteness of the posets considered
  here.
  %
  \begin{figure}[t]
    \centering
    \noindent%
    \fbox{
      \begin{tikzpicture}[xscale=2,yscale=.75]
        \node(P) at (0.5,2){$P:$};
        \node(p00) at (1,3){$\tuple{0,0}:a$};
        \node(p10) at (2,3){$\tuple{1,0}:a$};
        \node(p20) at (3,3){$\tuple{2,0}:a$};
        \node(p30) at (4,3){$\tuple{3,0}:a$};
        \node(pp0) at (5,3){$\cdots$};
        \node(p01) at (1,2){$\tuple{0,1}:a$};
        \node(p11) at (2,2){$\tuple{1,1}:a$};
        \node(p21) at (3,2){$\tuple{2,1}:a$};
        \node(p31) at (4,2){$\tuple{3,1}:a$};
        \node(pp1) at (5,2){$\cdots$};
        \node(p02) at (1,1){$\tuple{0,2}:a$};
        \node(p12) at (2,1){$\tuple{1,2}:a$};
        \node(p22) at (3,1){$\tuple{2,2}:a$};
        \node(p32) at (4,1){$\tuple{3,2}:a$};
        \node(pp2) at (5,1){$\cdots$};

        \draw[->,thick](p00) to (p10);
        \draw[->,thick](p10) to (p20);
        \draw[->,thick](p20) to (p30);
        \draw[->,thick](p30) to (pp0);
        
        \draw[->,thick](p01) to (p11);
        \draw[->,thick](p11) to (p21);
        \draw[->,thick](p21) to (p31);
        \draw[->,thick](p31) to (pp1);

        \draw[dashed](.5,0) to (5,0);
        
        \node(Q) at (0.5,-1.5){$Q:$};
        \node(q00) at (1,-1){$\tuple{0,0}:a$};
        \node(q10) at (2,-1){$\tuple{1,0}:a$};
        \node(q20) at (3,-1){$\tuple{2,0}:a$};
        \node(q30) at (4,-1){$\tuple{3,0}:a$};
        \node(qp0) at (5,-1){$\cdots$};
        \node(q01) at (1,-2){$\tuple{0,1}:a$};
        \node(q11) at (2,-2){$\tuple{1,1}:a$};
        \node(q21) at (3,-2){$\tuple{2,1}:a$};
        \node(q31) at (4,-2){$\tuple{3,1}:a$};
        \node(qp1) at (5,-2){$\cdots$};

        \draw[->,thick](q00) to (q10);
        \draw[->,thick](q10) to (q20);
        \draw[->,thick](q20) to (q30);
        \draw[->,thick](q30) to (qp0);
        
      \end{tikzpicture}
    }
    \caption{Example of mutual homomorphic posets that are not isomorphic}
    \label{fig:counterexample}
  \end{figure}
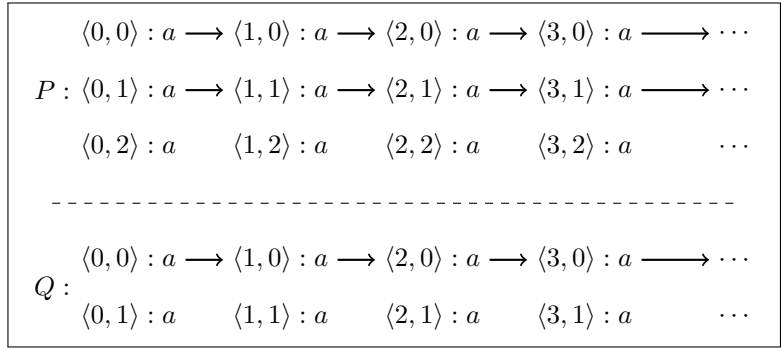
  For instance, consider the example is depicted in
  \Cref{fig:counterexample}. Formally, we fix some fixed symbol
  $a\in\alphabet$, and let $P$ and $Q$ be defined as follows:
  \begin{align*}
    P&\eqdef\tuple{\Nat\times\set{0,1,2},\order[P],\brack{\_\mapsto a},\emptyset}
    &\text{with}&\tuple{n,i}\order[P]\tuple{m,j}\eqdef \paren{n\leqslant m \wedge i=j<2}\\
    Q&\eqdef\tuple{\Nat\times\set{0,1},\order[Q],\brack{\_\mapsto a},\emptyset}
    &\text{with}&\tuple{n,i}\order[Q]\tuple{m,j}\eqdef \paren{n\leqslant m \wedge i=j=0}.
  \end{align*}
  One can plainly see that $P$ and $Q$ are not isomorphic, but there
  are indeed homomorphisms in both directions. Let us define the following functions:
  \begin{align*}
    \phi&:\events [P]\to\events [Q]&\psi&:\events[Q]\to\events[P]\\
        &\tuple{n,0}\mapsto\tuple{2n,0}&&\tuple{n,0}\mapsto{n,0}\\
        &\tuple{n,1}\mapsto\tuple{2n+1,0}&&\tuple{2n,1}\mapsto\tuple{n,1}\\
        &\tuple{n,2}\mapsto\tuple{n,1}&&\tuple{2n+1,1}\mapsto\tuple{n,2}.
  \end{align*}
  One may easily check that $\phi$ and $\psi$ are both
  poset-homomorphisms. 
\end{remark}

\begin{definition}[Pomsets with boxes]
  \emph{Pomsets with boxes} are equivalence classes of
  ${\pomeq}$. The set $\pomsets$ of pomsets with boxes is
  defined as $\posets/_{\pomeq}$.
\end{definition}

We now define some elementary poset-building operations.
\begin{definition}[Constants]
  Given a symbol $a\in\alphabet$, the \emph{atomic poset} associated
  with $a$ is defined as
  $\atom a\!\eqdef\!\tuple{\set{0},\brack{0\mapsto a},\Id{\set
      0},\emptyset}\!\in\!\posets$. The empty poset is defined as
  $\unitposet\!\eqdef\!\tuple{\emptyset,\emptyset,\emptyset,\emptyset}\!\in\!\posets$.
\end{definition}

\begin{remark}\label{rmk:nil}
  For any poset $P\in\posets$,
  $P\subsume \unitposet\Leftrightarrow P\revsubsume
  \unitposet\Leftrightarrow P\pomeq\unitposet$. This is because each
  of those relations imply there is a bijection between the events of
  $P$ and $\events[\unitposet]=\emptyset$. So we know that $P$ has no
  events, and since boxes cannot be empty, $P$ has no boxes
  either. Hence $P\pomeq \unitposet$.
\end{remark}

\begin{definition}[Compositions]
  Let $P,Q$ be two posets with boxes. The sequential composition $P\pomseq Q$
  and parallel composition $P\pompar Q$ are defined by:
  \begin{align*}
    P\pomseq Q
    &\eqdef\tuple{\events[P]\uplus\events[Q],
      \order[P]\cup\order[Q]\cup\paren{\events[P]\times\events[Q]},
      \labels[P]\sqcup\labels[Q], \boxes[P]\cup\boxes[Q]}\\
    P\pompar Q
    &\eqdef\tuple{\events[P]\uplus\events[Q],\order[P]\cup\order[Q],
      \labels[P]\sqcup\labels[Q], \boxes[P]\cup\boxes[Q]}, 
  \end{align*}
  where the symbol $\sqcup$ denotes the union of two functions; 
 that is, given $f:A\to C$ and $g:B\to C$, the function
  $f\sqcup g:A\uplus B\to C$ associates $f(a)$ to $a\in A$ and $g(b)$
  to $b\in B$.
\end{definition}
Intuitively, $P\pompar Q$ consists of disjoint copies of $P$ and $Q$
side by side. $P\pomseq Q$ also contains disjoint copies of $P$ and $Q$, but
also orders every event in $P$ before any event in $Q$.

\begin{definition}[Boxing]
  Given a poset $P$ its \emph{boxing} is denoted by $\boxing P$ and
  is defined by: $\boxing P\eqdef\tuple{\events[P],\order[P],\labels[P],\boxes[P]\cup\set{\events[P]}}$.
\end{definition}
Boxing a pomset simply amounts to drawing a box around it.

In our running example, the pattern of interest is a subset of the
events of the whole run. To capture this, we define the restriction of
a poset to a subset of its events. 

\begin{definition}[Restriction, sub-poset]
  For a given set of events $A\subseteq \events[P]$, we define the
  \emph{restriction of $P$ to $A$} as
  $\restrict P A\eqdef\tuple{A,\order[P]\cap\paren{A\times
      A},\restrict{\labels[P]}A,\boxes[P]\cap\pset A}$.
  We say that $P$ is a \emph{sub-poset} of $Q$, and write
  $P\issubpom Q$, if there is a set $A\subseteq \events[Q]$ such that
  $P\pomeq \restrict Q A$.
\end{definition}

Given a poset $P$, a set of events $A\subseteq\events[P]$ is called:
\begin{itemize}
\item\textbf{non-trivial} if $A\notin\set{\emptyset,\events}$.
\item\textbf{nested} if for any box $\beta\in\boxes[P]$ either $\beta\subseteq A$ or $A\cap\beta=\emptyset$;
\item\textbf{prefix} if for any $e\in A$ and $f\notin A$ we have $e\order[P] f$; and 
\item\textbf{isolated} if for any $e\in A$ and $f\notin A$ we have $e\not\order[P] f$ and $f\not\order[P] e$.
\end{itemize}

These properties characterize sub-posets of particular interest to
$P$. This is made explicit in the following observation:
\begin{fact}\label{fact:nested-sub-posets}
  Given a poset $P$ and a set of events $A\subseteq\events[P]$:
  \begin{enumerate}[(i)]
  \item $A$ is prefix and nested iff 
    $P\pomeq \restrict P A \pomseq \restrict P {\overline A}$;
  \item $A$ is isolated and nested iff
    $P\pomeq \restrict P A \pompar \restrict P {\overline A}$.
  \end{enumerate}
  (Here $\overline A$ denotes the complement of $A$ relative to
  $\events[P]$; that is, $\overline A\eqdef\events[P]\setminus A$.)
\end{fact}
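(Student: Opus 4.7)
The plan is to establish both equivalences by showing that the identity bijection on $\events[P]$---legitimate as a map into $A \uplus \overline A$ since $A$ and $\overline A$ partition $\events[P]$---is an isomorphism between $P$ and the candidate decomposition precisely when the two listed conditions hold. Labels trivially agree on both sides by construction (since $\labels[\restrict P A] \sqcup \labels[\restrict P{\overline A}] = \labels[P]$), so the real work is comparing the order relations and the box-sets against the pointwise definitions of $\pomseq$, $\pompar$, and restriction.

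For (i), the sequential composition carries the order $(\order[P] \cap (A \times A)) \cup (\order[P] \cap (\overline A \times \overline A)) \cup (A \times \overline A)$. Assuming $A$ is prefix and nested, I would verify that this coincides with $\order[P]$: the inclusion from left to right uses the prefix hypothesis to justify the third summand (the first two being automatic); the reverse inclusion handles a pair $(e,f) \in \order[P]$ by cases on which side of the partition each event lies, the only delicate case being $e \in \overline A,\ f \in A$, which I would rule out by invoking the prefix hypothesis on $(f,e)$ together with antisymmetry of $\order[P]$. The box equality $\boxes[P] = (\boxes[P] \cap \pset A) \cup (\boxes[P] \cap \pset{\overline A})$ is literally the definition of nestedness. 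Reversing these arguments yields the converse: the inclusion $A \times \overline A \subseteq \order[P]$ extracted from the sequential order is exactly the prefix property, and the decomposition of $\boxes[P]$ into subsets of $\pset A$ and $\pset{\overline A}$ is exactly nestedness.

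For (ii), the strategy is identical, except that parallel composition contributes no cross-pairs to the order at all. The forward direction is in fact simpler: the isolated hypothesis directly excludes any order pair that crosses $A$ and $\overline A$ in either direction, so the parallel order agrees with $\order[P]$ on the nose, and the box-set argument from (i) transfers verbatim. Conversely, the absence of cross-pairs in the parallel order immediately witnesses the isolated condition. The only real obstacle in the whole proof is the antisymmetry argument needed to rule out $e \in \overline A,\ f \in A$ in (i); everything else reduces to mechanical unfolding of definitions, and neither part requires any nontrivial use of finiteness or of the non-emptiness constraint on boxes.
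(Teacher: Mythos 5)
The paper records this Fact as an unproved observation, so your argument has to stand on its own. Your treatment of the left-to-right directions is correct and complete: the identity bijection on $\events[P]$ is the right witness, the label and box components unfold immediately (the box equality $\boxes[P]=(\boxes[P]\cap\pset A)\cup(\boxes[P]\cap\pset{\overline A})$ is indeed a restatement of nestedness), and you correctly isolate the one non-mechanical step, namely using antisymmetry of $\order[P]$ to exclude an order pair from $\overline A$ into $A$ in case (i). These are the only directions the paper ever invokes (in the proof of Theorem~\ref{thm:char-sp} and in Fact~\ref{lem:proj}), so your proof does all the work the paper needs.

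The gap is in the converse. The relation $P\pomeq \restrict P A\pomseq\restrict P{\overline A}$ asserts the existence of \emph{some} isomorphism, not that the identity map is one, and ``reversing the arguments'' tacitly assumes the witnessing isomorphism is the identity on events. For part (i) this cannot be repaired as literally stated: take $\events[P]=\set{e_1,e_2}$, both labelled $a$, with $e_1\order[P]e_2$ and no boxes, and take $A=\set{e_2}$. Then $\restrict P A\pomseq\restrict P{\overline A}$ is again two $a$-labelled events in sequence, hence isomorphic to $P$ via the transposition of $e_1$ and $e_2$, yet $A$ is not prefix. So the right-to-left implication of (i) only holds under the reading ``the identity map is an isomorphism iff\dots'', which is evidently what the Fact intends and what your argument actually establishes; you should make that reading explicit. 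For part (ii) the converse does survive arbitrary isomorphisms, but by a different argument from yours: an isomorphism preserves the number of order pairs and the number of boxes, and if $A$ fails to be isolated (respectively, nested) then $\restrict P A\pompar\restrict P{\overline A}$ has strictly fewer order pairs (respectively, boxes) than $P$, so no isomorphism can exist. The same counting rescues the box half of the converse of (i), but not the order half, as the example above shows.
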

This fact is very useful as a way to `reverse-engineer' how a poset
was built.
\subsubsection{Series--parallel pomsets}
\label{sec:sp-pom}

In the sequel, we will often restrict our attention to series--parallel 
pomsets. These are of particular interest since they are defined as 
those pomsets that can be generated from constants using the 
operators we have defined. 

\begin{definition}[Pomset terms, SP-Pomsets]
  A \emph{(pomset) term} is a syntactic expression generated from the following grammar:
  $s,t\in\spterms \Coloneqq 1 \Mid a \Mid s\tseq t\Mid s \tpar t\Mid \boxing s$.
  By convention $\tseq$ binds tighter than $\tpar$. A
  term is interpreted as a poset as follows:
  \begin{align*}
    \sem a&\eqdef \atom a
    &\sem 1&\eqdef \unitposet
    &\sem {\,\boxing s\,} &\eqdef \boxing{\,\sem s\,}\\
    \sem {s\tseq t}&\eqdef \sem s\pomseq \sem t
    &\sem{s\tpar t}&\eqdef \sem s \pompar \sem{t}. 
    & &
  \end{align*}
  A pomset $\brack P_{\pomeq}$ is called \emph{series--parallel} (or SP
  for short) if it is the interpretation of some term; that is, 
  $\exists s\in\spterms:\,\sem s\pomeq P$.
\end{definition}

\begin{example}
  The program in \Cref{tab:ex-counter} of the running example
  corresponds to
  \[\sem{\Print\tseq\paren{\Read x\tseq\Compute x\tseq\Write x\tpar\Read y\tseq\Compute y\tseq\Write y}\tseq\Print}.\]
  The corrected program, from \Cref{tab:ex-atomic-counter},
  corresponds to 
  \[\sem{\Print\tseq\paren{\boxing{\Read x\tseq\Compute x\tseq\Write
          x}\tpar\boxing{\Read y\tseq\Compute y\tseq\Write
          y}}\tseq\Print}.\]
  Finally, the problematic pattern we identified may be represented as
  $\sem{\paren{\Read x\tpar\Read y}\tseq\paren{\Write x\tpar\Write y}}$.
\end{example}

Series--parallel pomsets with boxes may also be defined by excluded
patterns, in the same style as the characterization of
series--parallel pomsets~\cite{vtl82,grabowski-1981,gischer88}. More
precisely, one can prove that a pomset $\brack P_{\pomeq}$ is
series--parallel iff and only if it does not contain any of the
patterns in \Cref{fig:sp-patterns}. Formally,
\newcommand\refpattern[1]{\ensuremath{\text{\bfseries P}_{\ref{pattern:#1}}}}
\begin{theorem}\label{thm:char-sp}
  A pomset $\brack P_{\pomeq}$ is series-parallel iff and only if it none of the following properties are satisfied:
  \begin{enumerate}[$\text{\bfseries P}_1:$]
  \item\label{pattern:1} $\exists e_1,e_2,e_3,e_4\in \events[P]:\,
    {e_1\order[P] e_3} \wedge
    {e_2\order[P] e_3} \wedge
    {e_2\order[P] e_4} \wedge
    {e_1\not\order[P] e_4} \wedge
    {e_2\not\order[P] e_1} \wedge
    {e_4\not\order[P] e_3}$
  \item\label{pattern:2} $\exists e_1,e_2,e_3\in\events[P],\exists A,B\in\boxes[P]:\,
    {e_1\in A\setminus B} \wedge
    {e_2\in A\cap B} \wedge
    {e_3\in B\setminus A}$
  \item\label{pattern:3} $\exists e_1,e_2,e_3\in\events[P],\exists A\in\boxes[P]:\,
    {e_1\notin A} \wedge
    {e_2,e_3\in A} \wedge
    {e_1\order[P] e_2} \wedge
    {e_1\not\order[P] e_3}$
  \item\label{pattern:4} $\exists e_1,e_2,e_3\in\events[P],\exists A\in\boxes[P]:\,
    {e_1\notin A} \wedge
    {e_2,e_3\in A} \wedge
    {e_2\order[P] e_1} \wedge
    {e_3\not\order[P] e_1}$.
  \end{enumerate}
\end{theorem}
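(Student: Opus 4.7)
I would prove both directions separately, using structural induction for the ``only if'' direction and induction on the number of events for the ``if'' direction, the latter turning on a quotient construction that reduces to the classical Valdes--Tarjan--Lawler characterisation~\cite{vtl82}.

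For the forward direction, I induct on a term $t\in\spterms$ with $\sem t\pomeq P$ and check that each of the five constructors preserves the absence of $\mathbf{P}_1$--$\mathbf{P}_4$. The constants $\unitposet$ and $\atom a$ have carriers too small to witness any pattern. For $\sem{t_1}\pomseq\sem{t_2}$ and $\sem{t_1}\pompar\sem{t_2}$ no box straddles the two operands, so any instance of $\mathbf{P}_2$--$\mathbf{P}_4$ is confined to a single operand and ruled out by the induction hypothesis; pattern $\mathbf{P}_1$ is excluded because across the two sides $\pomseq$ creates a complete bipartite order (forbidding the incomparabilities needed for an N) while $\pompar$ creates no cross-edges at all. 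For $\boxing{\sem t}$ the only new box is $\events[\sem t]$, which contains every event and therefore cannot play the role of $A$ or $B$ in any of $\mathbf{P}_2$, $\mathbf{P}_3$, $\mathbf{P}_4$.

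For the backward direction, I induct on $|\events[P]|$. The case $|\events[P]|\leq 1$ is immediate: $P$ is isomorphic to one of $\sem 1$, $\sem a$, or $\sem{\boxing a}$. For the step, first dispose of the easy case $\events[P]\in\boxes[P]$ by peeling off the top box: $P\pomeq\boxing{P'}$ where $P'$ coincides with $P$ but with that box removed, $P'$ still satisfies the pattern-freeness hypothesis, and the induction hypothesis applies. Otherwise, $\mathbf{P}_2$-freeness makes $\boxes[P]$ a laminar family of proper subsets, so the maximal boxes $M_1,\dots,M_k$ are pairwise disjoint. I then define a box-free quotient $Q$ on the carrier $\paren{\events[P]\setminus\bigcup_i M_i}\uplus\set{\star_1,\dots,\star_k}$: the order is inherited on events outside every $M_i$, and $x\order[Q]\star_i$ (resp.\ $\star_i\order[Q]x$) iff $x\order[P]m$ (resp.\ $m\order[P]x$) for \emph{some} $m\in M_i$. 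Patterns $\mathbf{P}_3$- and $\mathbf{P}_4$-freeness make this coincide with the corresponding \emph{universally}-quantified statements, which in turn yields transitivity and well-definedness. Pattern-$\mathbf{P}_1$-freeness of $P$ then lifts to N-freeness of $Q$ by choosing any representatives inside the $M_i$, and since our case assumption guarantees $|Q|\geq 2$, the Valdes--Tarjan--Lawler theorem provides a non-trivial decomposition $Q\pomeq Q_1\pomseq Q_2$ or $Q\pomeq Q_1\pompar Q_2$. Pulling $\events[Q_1]$ back to $A\subseteq\events[P]$ (replacing each $\star_i$ by $M_i$) yields a non-trivial nested subset of $\events[P]$ that is respectively prefix or isolated; by \Cref{fact:nested-sub-posets} we get $P\pomeq \restrict{P}{A}\pomseq \restrict{P}{\overline A}$ or $P\pomeq \restrict{P}{A}\pompar \restrict{P}{\overline A}$, and both restrictions inherit pattern-freeness and are strictly smaller than $P$, so the induction hypothesis makes them series--parallel and hence so is $P$.

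The main obstacle is the quotient construction and, in particular, the verification that patterns $\mathbf{P}_3$ and $\mathbf{P}_4$ make $Q$ a genuine partial order whose N-freeness faithfully reflects $\mathbf{P}_1$-freeness of $P$. Once this is in hand, the rest is careful but routine bookkeeping with laminar families and the restriction operation.
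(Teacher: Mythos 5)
Your argument is correct in substance but takes a genuinely different route from the paper's in the key step of the backward direction. (The forward direction is the same structural induction the paper dispatches in one sentence; your case analysis for it is sound.) Where you quotient $P$ by its maximal boxes --- using $\mathbf{P}_2$-freeness for laminarity and $\mathbf{P}_3$/$\mathbf{P}_4$-freeness to turn the existentially defined order on the quotient into a universally defined one --- and then invoke the classical N-free characterisation on the resulting box-free poset $Q$, the paper instead applies Gischer's theorem directly to $P$ to obtain a non-trivial prefix or isolated set, and then proves an auxiliary lemma (\Cref{lem:aux-sp-pom}) showing that a \emph{minimal} such set can be traded for a \emph{nested} one (either the minimal set is already nested, or it is absorbed into a maximal box which is itself shown to be a nested prefix/isolated set). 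Both routes finish identically via \Cref{fact:nested-sub-posets} and restriction. Your construction makes explicit the ``hyper-pomset'' reading that the paper only mentions in passing, and concentrates all the box-handling in one place; the price is the well-definedness of $Q$ (including the order between two collapsed boxes, which you leave implicit but which does follow from the same ``some iff all'' argument) and the faithfulness of the lifting of $\mathbf{P}_1$, all of which go through. The paper's route avoids building a new object but pays with the minimality/maximality bookkeeping of \Cref{lem:aux-sp-pom}.

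One small repair is needed: you induct on $\left|\events[P]\right|$, but in the case $\events[P]\in\boxes[P]$ the peeled poset $P'$ has exactly as many events as $P$, so ``the induction hypothesis applies'' is not literally available for your measure. Either induct on $\left|\events[P]\right|+\left|\boxes[P]\right|$, as the paper does, or note that after peeling once you have $\events[P']\notin\boxes[P']$ and can fall directly into your base case or your quotient case, whose recursive calls do strictly decrease the event count.
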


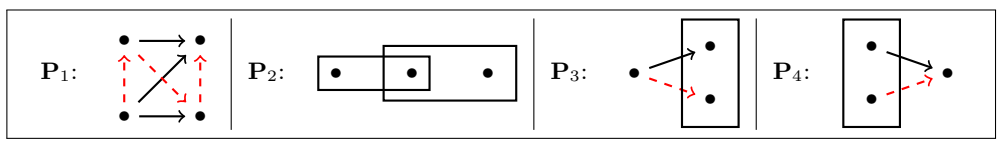
\begin{figure}[t]
  \centering
  \noindent%
  \fbox{
    \bgroup
    \begin{tabular}{rc|rc|rc|rc}
      \refpattern{1}:&
                        \begin{tikzpicture}[baseline=(m.base)]
                          \node(a)at(0,1){$\bullet$};
                          \node(b)at(0,0){$\bullet$};
                          \node(c)at(1,1){$\bullet$};
                          \node(d)at(1,0){$\bullet$};
                          \node(m)at($(a)!.5!(b)$){};
                          \draw[thick,->](a) to (c);
                          \draw[thick,->](b) to (c);
                          \draw[thick,->](b) to (d);
                          \draw[thick,->,dashed,red](b) to (a);
                          \draw[thick,->,dashed,red](d) to (c);
                          \draw[thick,->,dashed,red](a) to (d);
                        \end{tikzpicture}
      &
        \refpattern{2}:&
                          \begin{tikzpicture}[baseline=(a.base)]
                            \node(a)at(0,0){$\bullet$};
                            \node(b)at(1,0){$\bullet$};
                            \node(c)at(2,0){$\bullet$};
                            \node[fit=(a)(b),draw,thick,inner sep=1pt](box1){};
                            \node[fit=(c)(b),draw,thick,inner sep=5pt](box2){};
                          \end{tikzpicture}
      &
        \refpattern{3}:&
                          \begin{tikzpicture}[baseline=(a.base),yscale=.7]
                            \node(a)at(0,.5){$\bullet$};
                            \node(b)at(1,1){$\bullet$};
                            \node(c)at(1,0){$\bullet$};
                            \draw[thick,->](a) to (b);
                            \draw[thick,->,dashed,red](a) to (c);
                            \node[fit=(c)(b),draw,thick,inner sep=5pt](box2){};
                          \end{tikzpicture}
      &
        \refpattern{4}:&
                          \begin{tikzpicture}[baseline=(a.base),yscale=.7]
                            \node(a)at(1,.5){$\bullet$};
                            \node(b)at(0,1){$\bullet$};
                            \node(c)at(0,0){$\bullet$};
                            \draw[thick,->](b) to (a);
                            \draw[thick,->,dashed,red](c) to (a);
                            \node[fit=(c)(b),draw,thick,inner sep=5pt](box2){};
                          \end{tikzpicture}
    \end{tabular}
    \egroup
  }
  \caption{Forbidden patterns of SP-pomsets: dashed arrows (in {\color{red}red}) are negated.}
  \label{fig:sp-patterns}
\end{figure}

Before we discuss the proof of this result, we make a number of comments.

The four properties in \Cref{thm:char-sp} are invariant under
isomorphism, since they only use the ordering between events and the
membership of events to boxes. This is consistent with SP being a
property of pomsets, not posets.

Pattern \refpattern{1} is known as $N$, and is the forbidden pattern
of series-parallel pomsets (without boxes), as proved by
Gischer~\cite{gischer88}. Pattern \refpattern{2} indicates that the
boxes in an SP-pomset are well nested: two boxes are either disjoint,
or one is contained in the other.  Patterns \refpattern{3} and
\refpattern{4} reflect that an event is outside of a box cannot
distinguish the events inside by the order: it is either smaller than
all of them, larger than all of them, or incomparable with them.
  
Together, \refpattern{2}, \refpattern{3}, and \refpattern{4}
provide an alternative view of pomsets with boxes: one may see them as
\emph{hyper-pomsets}; that is, pomsets in which some events
(the boxes) can be labelled with non-empty pomsets (the contents of
the boxes). However, it seems that for our purposes the definition we provide
is more convenient. In particular, the definition of hyper-pomset
homomorphism is more involved.

Two auxiliary results on sub-posets, which we collect in the following
lemma, will be useful in this proof.
\begin{lemma}\label{lem:aux-sp-pom}
  Let $P$ be a poset with at least two events, such that
  $\events[P]\notin\boxes[P]$. Then:
  \begin{enumerate}[(i)]
  \item\label{claim:1} if $P$ contains a non-trivial prefix set, it
    contains one that is nested;
  \item\label{claim:2} if $P$ contains a non-trivial isolated set, it
    contains one that is nested.
  \end{enumerate}
\end{lemma}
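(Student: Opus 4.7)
My plan is iterative: starting from a non-trivial prefix (resp.\ isolated) set $A$, I repeatedly replace $A$ by another non-trivial set of the same type, while strictly reducing the number of \emph{split} boxes, where a box $\beta\in\boxes[P]$ is \emph{split by} $A$ when $A\cap\beta\neq\emptyset$ and $\beta\not\subseteq A$. Since $\boxes[P]$ is finite, the process terminates at a set with no split boxes, which is nested by definition.

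For the reduction step, suppose $A$ is non-trivial and prefix (resp.\ isolated) but not nested, and pick a split box $\beta$, choosing it minimal with respect to inclusion among split boxes. The two natural candidates are $A_+\eqdef A\cup\beta$ and $A_-\eqdef A\setminus\beta$. A key preliminary observation is that, since $A$ is prefix, every element of $A\cap\beta$ lies $\order[P]$-below every element of $\beta\setminus A$; in the isolated case, the elements of $A\cap\beta$ and $\beta\setminus A$ are pairwise incomparable. This strongly constrains how $\beta$ sits inside $P$ relative to $A$.

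I then argue that at least one of $A_+,A_-$ is non-trivial: if both were trivial, we would have $A_+=\events[P]$ and $A_-=\emptyset$, so $\beta\supseteq A$ and $\beta\supseteq\events[P]\setminus A$, forcing $\beta=\events[P]$ and contradicting the hypothesis $\events[P]\notin\boxes[P]$. I select the non-trivial candidate and verify that it still satisfies the prefix (resp.\ isolated) property, by case analysis on the position of two arbitrary events with respect to $A$ and $\beta$, using the preliminary observation. Finally, the count of split boxes strictly decreases: $\beta$ itself is no longer split after the modification, and the minimality of $\beta$ is used to check that no previously-unsplit box becomes split (any box intersecting $\beta$ non-trivially would either violate minimality or sit inside $\beta$).

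The main obstacle is the preservation of the prefix (resp.\ isolated) property under the modification: shifting $A$ by $\beta$ changes which pairs of events sit on opposite sides of $A$, and one must check that every newly-separated pair is correctly ordered (resp.\ incomparable). This requires a careful case split over whether each event lies in $A\setminus\beta$, $A\cap\beta$, $\beta\setminus A$, or outside $A\cup\beta$, and whether they share a common box. Organising this case analysis cleanly---and in particular extracting the right consequence of the minimality of $\beta$ among split boxes---is the most technical part of the argument.
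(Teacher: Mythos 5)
Your iterative strategy is plausible in outline, but the two steps you yourself flag as the most technical — preservation of the prefix (resp.\ isolated) property under the move $A\mapsto A_\pm$, and the claim that no previously-unsplit box becomes split — cannot be carried out from the ingredients you allow yourself, and this is a genuine gap rather than a routine verification. The lemma is an auxiliary step inside the proof of \Cref{thm:char-sp} and implicitly carries the standing hypothesis that $P$ avoids the forbidden patterns $\text{P}_2$--$\text{P}_4$; your argument never invokes these, and the statement is actually \emph{false} without them. Concretely, take $\events[P]=\set{e,f,g}$ with $e\order[P]g$, $f\order[P]g$, $e$ and $f$ incomparable, and a single box $\beta=\set{f,g}$. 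Then $A=\set{e,f}$ is the unique non-trivial prefix set; it is split by $\beta$, $A_+=\events[P]$ is trivial, and $A_-=\set{e}$ is not prefix since $e\not\order[P]f$ — so both your candidates fail, and indeed no nested non-trivial prefix set exists. This poset exhibits pattern $\text{P}_3$, and it is precisely $\text{P}_3$-freeness (for pairs $e\in A\setminus\beta$, $f\in A\cap\beta$ when passing to $A_-$) and $\text{P}_4$-freeness (for pairs $e\in\beta\setminus A$, $f\notin A\cup\beta$ when passing to $A_+$) that rescue the preservation step; your ``preliminary observation'' only concerns pairs of events both inside $\beta$ and says nothing about these. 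The termination argument has the same defect: minimality of $\beta$ among \emph{split} boxes does not prevent an unsplit box that overlaps $\beta$ from becoming split after the move (on the antichain $\set{1,2,3}$ with boxes $\set{1,2}$ and $\set{2,3}$, starting from $A=\set{1}$ your procedure cycles forever, and no nested non-trivial isolated set exists). What you need there is that distinct boxes never overlap, i.e.\ $\text{P}_2$-freeness, not minimality of $\beta$.

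For comparison, the paper sidesteps the bookkeeping entirely: it takes a \emph{minimal} non-trivial prefix set $A$ and a \emph{maximal} box $\beta$ splitting it, shows $A\subseteq\beta$ (because $A\setminus\beta$ is again prefix by $\text{P}_3$-freeness, hence empty by minimality of $A$), and then shows that $\beta$ itself is the desired witness: nested by $\text{P}_2$-freeness and maximality, prefix by $\text{P}_4$-freeness, and non-trivial because $\events[P]\notin\boxes[P]$. Your approach could probably be repaired by importing the same three pattern-freeness hypotheses at the corresponding points, but as written it does not go through.
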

\begin{proof}
  \begin{description}
  \item[(\ref{claim:1})]
    Assume there exists non-trivial prefix set. We pick a minimal one; 
    that is, a non-trivial prefix set $A$ such that for any non-trivial
    prefix $B$, if $B\subseteq A$, then $B=A$ (this is always possible
    since $\subseteq$ is a well-founded partial order on finite
    sets). If $A$ is nested, then $A$ satisfies our
    requirements. Otherwise, there is a box that is not contained in $A$
    while intersecting $A$. Since $P$ does not contain the
    pattern~\refpattern{2}, we know that we may pick a maximal such box
    $\beta$. This means that we know the following:
    \begin{mathpar}
      \beta \in\boxes[P]\and
      \paren{\forall\alpha\in\boxes[P],\,\beta\subseteq\alpha\Rightarrow\beta=\alpha}\and
      \exists e_1\in\beta\cap A\and
      \exists e_2\in\beta\setminus A.
    \end{mathpar}

    First, we show that $A\subseteq \beta$. Consider the set
    $A'\eqdef A\setminus \beta$. Clearly $A'\subsetneq A$ (since
    $e_1\in A\setminus A'$). We may also show that $A'$ is prefix. Let
    $e\in A'$ and $f\notin A'$. There are two cases:
    \begin{itemize}
    \item either $f\notin A$, then since $A$ is prefix and
      $e\in A'\subseteq A$ we have $e\order[P] f$;
    \item or $f\in A\cap \beta$. In this case, we use the fact that $P$
      does not have pattern~\refpattern{3}: we know that
      $e_2\in \beta\setminus A$, so since $e\in A'\subseteq A$ we have
      $e\order[P]e_2$, and since $e\notin\beta$ and $e_2,f\in\beta$ we may
      conclude that $e\order[P] f$.
    \end{itemize}
    Therefore, $A'$ is prefix and strictly contained in $A$. By
    minimality of $A$, $A'$ has to be trivial. Since $A$ is non-trivial
    this means that $A'=\emptyset$, hence that $A\subseteq\beta$.

    Now we know that $A\subseteq\beta$. Because we know that $\beta$ is
    not empty, and that $\events[P]\notin\boxes[P]$, we deduce that
    $\beta$ is non-trivial. Since $P$ does not contain
    pattern~\refpattern{2}, and by maximality of $\beta$, we know that
    $\beta$ is nested. We now conclude by showing that $\beta$ is in
    fact prefix. Let $e\in\beta$, and $f\notin\beta$. Since
    $A\subseteq\beta$ we get that $f\notin A$. By the prefix property of
    $A$ we get $e_1\order[P] f$, and since $e_1,e\in\beta$ and
    $f\notin\beta$, by the absence of pattern~\refpattern{4} we get
    that $e\order[P]f$.
  \item [(\ref{claim:2})] This proceeds in a similar manner. We pick a
    minimal non-trivial isolated set $A$, and try to find a maximal
    box $\beta$ such that $\beta\cap A\neq\emptyset$ and
    $\beta\setminus A\neq\emptyset$. If no such box exists, $A$ is
    already nested. If we do find such a box, we first show that $A$
    has to be contained in $\beta$. Then we use this to show that
    $\beta$ is a non-trivial nested isolated set.
  \end{description}
\end{proof}

\begin{proof}[Proof of~\Cref{thm:char-sp}]
  By a simple induction on terms, we can easily show that SP-posets
  avoid all four forbidden patterns. The more challenging direction is
  the converse: given a poset $P$ that does not contain any of the
  forbidden patterns, can we build a term $s\in\spterms$ such that
  $P\pomeq\sem s$. We construct this by induction on the size of $P$,
  defined as number of boxes plus the number of events. Notice that if
  a poset does not contain a pattern, then nor does any of its sub-posets.

  If $P$ has at most one event, then the following property holds:
  \begin{itemize}
  \item if $P$ has no events, then $P\pomeq\sem 1$;
  \item if $P$ has a single event $e$, let $a=\labels[P](e)$; we know that 
  	\[
		\boxes[P]\subseteq\setcompr{\beta\subseteq \set e}{\beta\neq\emptyset}=\set {\set e};
	\]
    \begin{itemize}
    \item if $\boxes[P]=\emptyset$ then $P\pomeq\atom a=\sem a$;
    \item if $\boxes[P]=\set{\set e}$ then $P\pomeq\boxing{\atom a}=\sem{\boxing a}$.
    \end{itemize}
  \end{itemize}

  If $\events[P]\in\boxes[P]$, then let $P'$ be the poset obtained by
  removing the box $\events[P]$. The size of $P'$ is strictly smaller
  than that of $P$, and $P\pomeq\boxing {P'}$. By induction, we get a
  term $s$ such that $\sem s\pomeq P'$, so $P\pomeq\sem{\boxing s}$.
  
  Consider now a pomset $P$ with at least two events, and such that
  $\events[P]\notin\boxes[P]$. As a corollary of Gischer's
  characterization theorem~\cite[Theorem 3.1]{gischer88}, we know that
  since $P$ is N-free (i.e., does not contain~\refpattern{1}) and
  contains at least two events, it contains either a non-trivial
  prefix set or a non-trivial isolated set.

  If $P$ contains a non-trivial, prefix set $A$, then by
  \Cref{lem:aux-sp-pom} $P$ contains a non-trivial, \emph{nested},
  prefix set $A'$. By \Cref{fact:nested-sub-posets}, this means that
  $P\pomeq \restrict P A\pomseq\restrict P{\overline A}$. We may thus
  conclude by induction.

  The case in which $P$ contains a non-trivial, isolated set $A$
  is handled similarly.
\end{proof}

\subsection{Sets of posets}

We now lift our operations and relations to sets of posets. This
allows us to enrich our syntax with a non-deterministic choice
operator.

\begin{definition}[Orderings on sets of posets]
  Let $A,B\subseteq\posets$, we define the following:
  \begin{description}
  \item[Isomorphic inclusion]: $A\subsetsim B$ iff $\forall P\in A,\,\exists Q\in B$ such that  $P\pomeq Q$
  \item[Isomorphic equivalence]: $A\pomeq B$ iff $A\subsetsim B\wedge B\subsetsim A$
  \item[Subsumption]: $A\subsume B$ iff $\forall P\in A,\,\exists Q\in B$ such that $P\subsume Q$. 
  \end{description}
\end{definition}
\begin{remark}
  Isomorphic inclusion and subsumption are partial orders with
  respect to isomorphic equivalence, which is an equivalence relation.
\end{remark}

\begin{definition}[Operations on sets of posets]
  We will use the set-theoretic union of sets of posets, as well as
  the pointwise liftings of the two products of posets and the boxing
  operators:
  \begin{align*}
    A\pomseq B&\eqdef\setcompr{P\pomseq Q}{\tuple{P,Q}\in A\times B}
    &\boxing A&\eqdef\setcompr{\boxing P}{P\in A}\\
    A\pompar B&\eqdef\setcompr{P\pompar Q}{\tuple{P,Q}\in A\times B}\!.
  \end{align*}
\end{definition}

\begin{definition}[Closure of a set of posets]
  The \emph{(downwards) closure} of a set of posets $S$ is the smallest set containing $S$
  that is downwards closed with respect to the subsumption order; that is,
  $\clpom S\eqdef\setcompr{P\in \posets}{\exists Q \in S:\,P\subsume Q}$.
  Similarly, the \emph{upwards closure} of $S$ is defined as:
  $\upclpom S\eqdef\setcompr{P\in \posets}{\exists Q \in S:\,P\revsubsume Q}$.
\end{definition}

\begin{remark}
  $\clpom{\paren{\_}}$ and $\upclpom{\paren{\_}}$ are Kuratowski
  closure operators~\cite{kuratowski1922}; i.e., they satisfy the
  following properties:
  \begin{mathpar}
    \clpom \emptyset=\emptyset\and
    A\subseteq \clpom A\and
    \clpom{\clpom A} = \clpom A\and
    \clpom{\paren{A\cup B}} = \clpom A\cup\clpom B.
  \end{mathpar}
  (And, similarly, for the upwards closure.) Using downwards-closures, we
  can express subsumption in terms of isomorphic inclusion:
  \begin{mathpar}
    A\subsume B\and
    \Leftrightarrow \and
    A\subsetsim \clpom B\and
    \Leftrightarrow \and
    \clpom A\subsetsim \clpom B.
  \end{mathpar}
  Similarly, the equivalence relation associated with $\subsume$,
  defined as the intersection of the relation and its converse,
  corresponds to the predicate $\clpom A\pomeq\clpom B$.
\end{remark}

\begin{definition}
  Terms are defined by the following grammar:
  \[
  	e,f\in \terms\Coloneqq 0 \;\Mid\; 1 \;\Mid\; a \;\Mid\; e\tseq f \;\Mid\; e \tpar f \;\Mid\; e \join f \;\Mid\; \boxing e. 
  \]
  Terms can be interpreted as finite sets of posets with boxes as follows:
  \begin{mathpar}
    \sem 0\eqdef \emptyset\and
    \sem 1 \eqdef\set\unitposet\and
    \sem a \eqdef\set{\atom a}\\
    \sem{\,\boxing e\,}\eqdef\boxing{\,\sem e\,}\and
    \sem {e\tseq f}\eqdef\sem e\pomseq\sem f\and
    \sem {e\join f}\eqdef\sem e\cup \sem f\and
    \sem {e\tpar f}\eqdef\sem e\pompar \sem f.
  \end{mathpar}
\end{definition}
\begin{remark}
  Interpreted as a program, $0$ represents failure: this is a program
  that aborts the whole execution. $\join$, on the other hand,
  represents non-deterministic choice. It can be used to model
  conditional branching. 
\end{remark}

\subsection{Axiomatic presentations of pomset algebra}
\label{sec:axioms}
\begin{table}[t]
  \centering
  \caption{Equational and inequational logic}
  \label{tab:eqlogic}
  \noindent%
  \fbox{
    \begin{minipage}{.95\linewidth}
      \begin{mathpar}
        \infer{e=f\in A}{A\vdash e=f}\and
        \infer{}{A\vdash e=e}\and
        \infer{A\vdash e = f}{A\vdash f = e}\and
        \infer{A\vdash e = f\and A\vdash f = g}{A\vdash e = g}\\
        \sigma,\tau:\alphabet\to\terms,\;
        \infer{\forall a\in \alphabet,\, A\vdash \sigma(a) = \tau (a)}{A\vdash \hat\sigma(e) = \hat\tau (e)}
      \end{mathpar}
      \hrule
      \begin{mathpar}
        \infer{e=f\in A}{A\vdash e\leq f}\and
        \infer{f=e\in A}{A\vdash e\leq f}\and
        \infer{e\leq f\in A}{A\vdash e\leq f}\and
        \infer{}{A\vdash e\leq e}\and
        \infer{A\vdash e \leq f\and A\vdash f \leq g}{A\vdash e \leq g}\\
        \sigma,\tau:\alphabet\to\terms,\;
        \infer{\forall a\in \alphabet,\, A\vdash \sigma(a) \leq \tau (a)}{A\vdash \hat\sigma(e) \leq \hat\tau (e)}
      \end{mathpar}
    \end{minipage}
  }
\end{table}
\begin{table}[t]
  \centering
  \caption{Axioms}
  \label{tab:axioms}
  \noindent%
  \fbox{
    \begin{minipage}{.95\linewidth}
      \begin{minipage}[t]{.45\linewidth}
        \begin{tableequations}
          \begin{align}
            s \tseq (t \tseq u) &= (s \tseq t) \tseq u \axlabel{bsp-eq-seq-ass}\\
            s \tpar (t \tpar u) &= (s \tpar t) \tpar u \axlabel{bsp-eq-par-ass}\\
            s \tpar t &= t \tpar s \axlabel{bsp-eq-par-comm}\\
            1 \tseq s &= s \axlabel{bsp-eq-seq-left-unit}\\
            s \tseq 1 &= s \axlabel{bsp-eq-seq-right-unit}\\
            1 \tpar s &= s \axlabel{bsp-eq-par-unit}\\
            \boxing{\boxing s} &= \boxing s \axlabel{bsp-eq-box-box}\\
            \boxing 1 &= 1  \axlabel{bsp-eq-box-unit}
          \end{align}
        \end{tableequations}
        \begin{tableequations}
          \begin{align}
            (s \tpar t) \tseq (u \tpar v)
            &\leq (s \tseq u) \tpar (t \tseq v) \axlabel{subs-exchange}\\
            \boxing s &\leq s \axlabel{subs-box-inf}
          \end{align}
        \end{tableequations}
      \end{minipage}\hfill%
      \begin{minipage}[t]{.45\linewidth}
        \begin{tableequations}
          \begin{align}
            e \join (f \join g) &= (e \join f) \join g \axlabel{bbsr-eq-plus-ass}\\
            e \join f &= f \join e \axlabel{bbsr-eq-plus-comm}\\
            e \join e &= e \axlabel{bbsr-eq-plus-idem}\\
            0 \join e &= e \axlabel{bbsr-eq-plus-unit}\\
            0 \tseq e &= e \tseq 0 =0 \axlabel{bbsr-eq-seq-absorbing}\\
            0 \tpar e &= 0 \axlabel{bbsr-eq-par-zero}\\
            e \tseq (f \join g) &= (e \tseq f) \join (e \tseq g)
                                  \axlabel{bbsr-eq-left-distr}\\
            (e \join f) \tseq g &= (e \tseq g) \join (f \tseq g)
                                  \axlabel{bbsr-eq-right-distr}\\
            e \tpar (f \join g) &= (e \tpar f) \join (e \tpar g)
                                  \axlabel{bbsr-eq-par-distr}\\
            \boxing 0 &= 0 \axlabel{bbsr-eq-box-zero}\\
            \boxing{e \join f} &= \boxing e \join \boxing f \axlabel{bbsr-eq-box-plus}
          \end{align}
        \end{tableequations}
      \end{minipage}\hfill%
    \end{minipage}}
\end{table}
We now introduce axioms to capture the various order and equivalence
relations we introduced over posets and sets of posets. Given a set of
axioms $A$ (i.e., universally quantified identities), we write
$A\vdash e=f$ to denote that the pair $\tuple{e,f}$ belongs to the
smallest congruence containing every axiom in $A$. Equivalently,
$A\vdash e=f$ holds iff this statement is derivable in the equational
logic described in \Cref{tab:eqlogic}. Similarly,
$A\vdash e\leq f$ is the smallest precongruence containing $A$, where
equality axioms are understood as pairs of inequational axioms. An
inference system is also provided in \Cref{tab:eqlogic}. We will
consider the following sets of axioms:
\begin{align*}
  \tag{Bimonoid with boxes}
  \axeq
  &\eqdef\eqref{bsp-eq-seq-ass}-\eqref{bsp-eq-box-unit}\\
  \tag{Concurrent monoid with boxes}
  \axinf
  &\eqdef\axeq,\eqref{subs-exchange}\eqref{subs-box-inf}\\
  \tag{Bisemiring with boxes}
  \axsr
  &\eqdef\axeq,\eqref{bbsr-eq-plus-ass}-\eqref{bbsr-eq-box-plus}\\
  \tag{Concurrent semiring with boxes} 
  \axsrinf
  &\eqdef\axsr,\eqref{subs-exchange},\eqref{subs-box-inf} 
\end{align*}
%
In the last theory, inequational axioms $e\leq f$ should be read as
$e+f=f$. Indeed one can show that for $A\in\set{\axsr,\axsrinf}$, we
have 
\begin{mathpar}
  A\vdash e\leq f \Leftrightarrow A\vdash e+f=f\and
  A\vdash e = f \Leftrightarrow A\vdash e\leq f \wedge A\vdash f \leq e.
\end{mathpar}
%

\subsubsection{Posets up to isomorphism: the free bimonoid with boxes}
\label{sec:bimon}

In this section we show that the axioms $\axeq$ provide a sound and complete axiomatisation of 
isomorphisms between posets with boxes; that is, 
\[
	\sem s \pomeq \sem t\Leftrightarrow \axeq\vdash s = t.
\]
Before we move to the prove of this statement, we need to make some
remarks and set up some auxiliary definitions.

First, notice that the constant $1$ can be handled easily:
\begin{lemma}\label{lem:eq1}
  For any term $t\in\spterms$, the set $\events[\sem t]$ is empty iff $\axeq\vdash t = 1$.
\end{lemma}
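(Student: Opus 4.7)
The plan is to prove both directions separately, with the forward (``$\Leftarrow$'') direction following from an elementary soundness observation and the backward (``$\Rightarrow$'') direction following by induction on the structure of $t$.

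For the ``$\Leftarrow$'' direction, I would invoke soundness of $\axeq$ with respect to $\pomeq$: a straightforward inspection of axioms \eqref{bsp-eq-seq-ass}--\eqref{bsp-eq-box-unit} shows that each is valid for $\pomeq$, hence if $\axeq \vdash t = 1$ then $\sem t \pomeq \sem 1 = \unitposet$. By \Cref{rmk:nil}, any poset isomorphic to $\unitposet$ has an empty event set, so $\events[\sem t] = \emptyset$. (If one wants to avoid appealing to the full soundness, it suffices to observe that every axiom in $\axeq$ preserves the cardinality $|\events[\sem{\cdot}]|$, a property that is stable under the rules of equational logic in \Cref{tab:eqlogic}, so by induction on the derivation of $\axeq\vdash t = 1$ we get $|\events[\sem t]| = |\events[\sem 1]| = 0$.)

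For the ``$\Rightarrow$'' direction, I proceed by induction on $t \in \spterms$. The base cases $t = 1$ and $t = a$ are immediate: $\sem 1$ has no events (giving $\axeq \vdash 1 = 1$ by reflexivity), and $\sem a = \atom a$ has exactly one event, making the hypothesis vacuous. For $t = s \tseq u$ and $t = s \tpar u$, the events of $\sem t$ are the disjoint union $\events[\sem s] \uplus \events[\sem u]$, so emptiness forces both components to be empty. By induction $\axeq \vdash s = 1$ and $\axeq \vdash u = 1$, and congruence together with \eqref{bsp-eq-seq-left-unit} (respectively \eqref{bsp-eq-par-unit}) then yields $\axeq \vdash t = 1 \tseq 1 = 1$ (respectively $\axeq \vdash t = 1 \tpar 1 = 1$). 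For $t = \boxing s$, we have $\events[\sem{\boxing s}] = \events[\sem s]$, so the induction hypothesis gives $\axeq \vdash s = 1$, and congruence combined with \eqref{bsp-eq-box-unit} delivers $\axeq \vdash \boxing s = \boxing 1 = 1$.

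There is no real obstacle: the statement is essentially a sanity-check lemma verifying that the unit axioms in $\axeq$ are strong enough to collapse every ``event-free'' term to $1$. The only thing to be mindful of is that the induction requires the unit equations for \emph{each} syntactic constructor of $\spterms$ — namely \eqref{bsp-eq-seq-left-unit}, \eqref{bsp-eq-par-unit}, and \eqref{bsp-eq-box-unit} — so the choice of axiomatisation is exactly what makes this bookkeeping go through cleanly.
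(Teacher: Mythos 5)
Your proposal is correct and follows essentially the same route as the paper: the paper observes that an event-free term contains no letters of $\alphabet$ and then shows (implicitly by the same structural induction, using the unit axioms \eqref{bsp-eq-seq-left-unit}, \eqref{bsp-eq-par-unit}, and \eqref{bsp-eq-box-unit}) that every such term is provably equal to $1$, while leaving the soundness direction implicit. Your write-up merely makes both the induction and the soundness half explicit.
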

\begin{proof}
  Since $\events[\sem t]=\emptyset$, $t$ does not feature any symbol
  from $\alphabet$, meaning $t\in\spterms[\emptyset]$. We may then
  show that for every term $t\in\spterms[\emptyset]$ we have
  $\axeq\vdash t=1$.\qedhere
\end{proof}

We can also remove boxes from terms, in the following sense.
\begin{lemma}\label{lem:rmbox}
  Given a term $s\in\spterms$ such that 
  $\events[\sem s]\in\boxes[\sem s]$, there exists a term $t$ such that $\axeq\vdash s=\boxing t$ and $\events[\sem {t}]\notin\boxes[\sem {t}]$.
\end{lemma}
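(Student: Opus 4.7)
I would proceed by structural induction on $s \in \spterms$. The two base cases are vacuous: if $s=1$ then $\events[\sem s]=\emptyset$, and since by definition of a poset with boxes we have $\emptyset\notin\boxes[\sem s]$, the hypothesis fails; if $s=a$ then $\boxes[\sem s]=\emptyset$, so again the hypothesis fails. So I may assume $s$ is either a boxing or a composition.

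For the boxing case $s = \boxing{s'}$, I split on whether $\events[\sem{s'}]\in\boxes[\sem{s'}]$. If not, take $t = s'$ and we are done immediately. Otherwise, apply the induction hypothesis to $s'$ to obtain $t'$ with $\axeq\vdash s'=\boxing{t'}$ and $\events[\sem{t'}]\notin\boxes[\sem{t'}]$; then by congruence and axiom~\eqref{bsp-eq-box-box},
\[
  \axeq \vdash s = \boxing{s'} = \boxing{\boxing{t'}} = \boxing{t'},
\]
so $t = t'$ works.

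For the composition cases $s = s_1 \tseq s_2$ and $s = s_1 \tpar s_2$, the key combinatorial observation is that $\boxes[\sem s] = \boxes[\sem{s_1}]\cup\boxes[\sem{s_2}]$, and every element of $\boxes[\sem{s_i}]$ is contained in $\events[\sem{s_i}]$. Combined with the disjoint union $\events[\sem s] = \events[\sem{s_1}]\uplus\events[\sem{s_2}]$, the hypothesis $\events[\sem s]\in\boxes[\sem s]$ forces exactly one $\events[\sem{s_i}]$ to be empty while the other equals $\events[\sem s]$ and belongs to $\boxes[\sem{s_i}]$. Say $\events[\sem{s_2}]=\emptyset$ and $\events[\sem{s_1}]\in\boxes[\sem{s_1}]$ (the opposite orientation is symmetric, using \eqref{bsp-eq-par-comm} in the parallel case). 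Then Lemma~\ref{lem:eq1} gives $\axeq\vdash s_2 = 1$, and from \eqref{bsp-eq-seq-right-unit} (respectively \eqref{bsp-eq-par-unit}) with congruence we get $\axeq\vdash s = s_1$. The induction hypothesis applied to $s_1$ then supplies the required $t$.

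The only real subtlety is the combinatorial step in the composition case: it depends crucially on the definitional constraint $\emptyset\notin\boxes$. Without this restriction one could imagine a full-event-set box that straddles both $s_1$ and $s_2$ in some ill-defined way; with it, the pigeonhole argument on disjoint event sets cleanly forces one side to vanish syntactically (via Lemma~\ref{lem:eq1}) and reduces the case to a strictly smaller subterm, at which point the induction closes.
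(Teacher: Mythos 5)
Your proof is correct and follows essentially the same route as the paper's: vacuous base cases, the split on whether $\events[\sem{s'}]\in\boxes[\sem{s'}]$ in the boxing case with axiom~\eqref{bsp-eq-box-box}, and the disjointness argument forcing one factor of a composition to be provably $1$ via Lemma~\ref{lem:eq1}. You merely spell out a few steps (the unit axioms, the reason the base cases are vacuous) that the paper leaves implicit.
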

\begin{proof}
  We proceed by induction on $s$:
  \begin{itemize}
  \item$s=1$, $s=a$: contradicts the premiss;
  \item$s=\boxing {s'}$: we have to consider two cases:
    \begin{itemize}
    \item$\events[\sem{s'}]\in\boxes[\sem{s'}]$: in this case, by
      induction we get $t$ such that
      $\events[\sem{t}]\notin\boxes[\sem{t}]$, and
      $$\axeq\vdash s=\boxing{s'}=\boxing{\boxing{t}}=\boxing{t}$$
    \item$\events[\sem{s'}]\notin\boxes[\sem{s'}]$: pick $t=s'$; 
    \end{itemize}
  \item$s=s_1\tseq s_2$: we know the following facts:
    \begin{align*}
      &\events [\sem {s_1}]\cup\events[\sem {s_2}]=\events[\sem s]\in\boxes[\sem s]=\boxes[\sem {s_1}]\cup\boxes[\sem {s_2}]\\
      &\forall \beta\in\boxes[\sem {s_i}],\,\beta\subseteq\events[\sem{s_i}]\\
      &\events[\sem{s_1}]\cap\events[\sem{s_2}]=\emptyset .
    \end{align*}
    From them, we deduce that either $\events[\sem{s_1}]=\emptyset$ or
    $\events[\sem{s_2}]=\emptyset$. We conclude by applying the
    induction hypothesis to the appropriate sub-term, and use
    \Cref{lem:eq1} to conclude; 
  \item$s=s_1\tpar s_2$: same as $s_1\tseq s_2$.\qedhere
  \end{itemize}
\end{proof}

We now introduce a syntactic variant of the $\restrict P A$ operator
we introduced earlier, which extracts a sub-poset out of a poset,
guided by a subset of events.
\begin{definition}[Syntactic restriction]
  Let $s\in\spterms$ be a term and $A\subseteq\events[\sem s]$ a set
  of events. The \emph{syntactic restriction} of $s$ to $A$, written
  $\proj A s$ is defined by induction on terms as follows:
  \begin{align*}
    \proj A 1&\eqdef 1
    &
    \proj A a&\eqdef \left\{
               \begin{array}{ll}
                 1&if~A=\emptyset\\
                 a&otherwise
               \end{array}
                    \right.
    &
      \proj A {s\tseq t}
    &\eqdef \proj{A\cap\events[\sem s]}s\tseq\proj{A\cap\events[\sem s]} t\\
    &&\proj A {\boxing s}&\eqdef \left\{
                         \begin{array}{ll}
                           \boxing s&if~A=\events[\sem e]\\
                           \proj A s&otherwise
                         \end{array}
                                      \right.
    &
      \proj A {s\tpar t}
    &\eqdef \proj{A\cap\events[\sem s]}s\tpar\proj{A\cap\events[\sem s]} t.
  \end{align*}
  
\end{definition}

The main properties of this operator are stated in the following observation:
\begin{fact}\label{lem:proj}
  For any $s\in\spterms$ and $A\subseteq\events[\sem s]$ the following
  hold:
  \begin{enumerate}[(i)]
  \item $\sem{\proj A s}\pomeq\restrict{\sem s} A$;
  \item if $A$ is prefix and nested, then
    $\axeq\vdash\proj A s\tseq\proj{\overline A}s=s$;
  \item if $A$ is isolated and nested, then
    $\axeq\vdash\proj A s\tpar\proj{\overline A}s=s$.
  \end{enumerate}
\end{fact}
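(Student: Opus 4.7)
The plan is to prove all three parts by structural induction on $s$, handling the constructors $1$, $a$, $\boxing{s'}$, $s_1 \tseq s_2$, and $s_1 \tpar s_2$. Part (i) is a direct unfolding: in the sequential case, writing $A_i \eqdef A \cap \events[\sem{s_i}]$, one gets $\sem{\proj A {s_1 \tseq s_2}} = \sem{\proj{A_1}{s_1}} \pomseq \sem{\proj{A_2}{s_2}}$, which by the induction hypothesis equals $\restrict{\sem{s_1}}{A_1} \pomseq \restrict{\sem{s_2}}{A_2}$, and this is seen to be $\restrict{\sem{s_1} \pomseq \sem{s_2}}{A}$ directly from the definitions. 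The boxing case rests on the observation that restricting $\boxing{Q}$ to a strict subset $A \subsetneq \events[Q]$ deletes exactly the outer box while leaving all inner boxes intact.

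Before addressing (ii) and (iii), I would record two auxiliary equalities, each established by an easy induction on $s$: $\axeq \vdash \proj{\emptyset}{s} = 1$ and $\axeq \vdash \proj{\events[\sem s]}{s} = s$. The first can alternatively be read off from (i) together with \Cref{lem:eq1}; the second uses only the obvious inductive step plus the axiom $\boxing 1 = 1$ where the boxing clause kicks in.

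For (ii), proceed by induction on $s$. The cases $s \in \set{1, a, \boxing{s'}}$ are forced by nestedness to have $A \in \set{\emptyset, \events[\sem s]}$, and collapse via the auxiliary equalities together with \eqref{bsp-eq-seq-left-unit} and \eqref{bsp-eq-seq-right-unit}. The main content is the case $s = s_1 \tseq s_2$: the prefix property of $A$ in the sequential composition forces either $A_2 = \emptyset$ or $A_1 = \events[\sem{s_1}]$, since an event of $A \cap \events[\sem{s_2}]$ cannot be ordered before a non-$A$ event of $\events[\sem{s_1}]$. In either sub-case, the nontrivial restriction on the active component is itself prefix and nested there (the order and boxes of $\sem s$ restrict faithfully), so the induction hypothesis applies, the other component collapses via an auxiliary equality, and associativity of $\tseq$ finishes. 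The case $s = s_1 \tpar s_2$ is more restrictive still: since events across the two parallel components are incomparable, any non-trivial prefix $A$ forces one of the components to be empty, which is discharged by \Cref{lem:eq1}.

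Part (iii) follows the dual pattern and is in fact simpler for parallel composition. For $s = s_1 \tpar s_2$, the isolated and nested properties pass verbatim from $A$ to both $A_1$ and $A_2$, so the induction hypothesis combined with the commutativity and associativity of $\tpar$ recombines the four projections into $s_1 \tpar s_2$. For $s = s_1 \tseq s_2$, the seq-order forces any non-trivial isolated $A$ to live entirely in one component with the other empty, again handled by \Cref{lem:eq1}. The only genuine difficulty throughout is bookkeeping: one must check in each case that prefix, isolated, and nested are inherited by the induced restrictions $A_i$, but this is immediate once the definitions of $\pomseq$ and $\pompar$ are unfolded.
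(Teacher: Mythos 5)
The paper states this Fact as an unproved observation, so there is no official proof to compare against; your structural induction is exactly the routine verification being elided. The argument is correct: part (i) is a direct unfolding, the two auxiliary equalities $\axeq\vdash\proj{\emptyset}{s}=1$ and $\axeq\vdash\proj{\events[\sem s]}{s}=s$ are the right scaffolding, and the case analyses in (ii) and (iii) --- nestedness collapsing $A$ to a trivial set in the boxing case, the prefix/isolated conditions forcing either $A_2=\emptyset$ or $A_1=\events[\sem{s_1}]$ in the sequential case and emptiness of a component in the mismatched cases --- are all sound, with the inheritance of prefix/isolated/nested by the restrictions $A_i$ checking out as you claim.
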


We may now establish the main result of this section:
\begin{theorem}\label{thm:bimon}
  For any pair of terms $s,t\in\spterms$, the following holds:
  $$\sem s \pomeq \sem t\Leftrightarrow \axeq\vdash s = t$$
\end{theorem}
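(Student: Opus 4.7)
The plan is to prove the two directions separately. For soundness ($\axeq\vdash s=t \Rightarrow \sem s\pomeq\sem t$), I would inspect each of the axioms \eqref{bsp-eq-seq-ass}--\eqref{bsp-eq-box-unit} and verify that the corresponding pomsets are isomorphic; each is a short direct check using the explicit definitions of $\pomseq$, $\pompar$, and $\boxing{\cdot}$. Congruence and reflexivity/transitivity are routine, since $\pomeq$ is an equivalence relation and $\pomseq$, $\pompar$, $\boxing{\cdot}$ act pointwise on equivalence classes.

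The real content is completeness. My plan is to proceed by strong induction on the size of $\sem s$, defined as $|\events[\sem s]|+|\boxes[\sem s]|$ (which equals the size of $\sem t$, since $\sem s\pomeq \sem t$). The base cases split three ways: if $\events[\sem s]=\emptyset$, apply \Cref{lem:eq1} on both sides to get $\axeq\vdash s=1=t$; if $\sem s$ is a single event with no box (resp.\ with a box), normalize both $s$ and $t$ to $a$ (resp.\ $\boxing a$) using the unit and box axioms.

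For the inductive step, I distinguish two cases. If $\events[\sem s]\in\boxes[\sem s]$, then also $\events[\sem t]\in\boxes[\sem t]$ since $\sem s\pomeq\sem t$; apply \Cref{lem:rmbox} to rewrite $\axeq\vdash s=\boxing{s'}$ and $\axeq\vdash t=\boxing{t'}$ with $\events[\sem{s'}]\notin\boxes[\sem{s'}]$ and similarly for $t'$. Peeling off the outermost box on each side gives $\sem{s'}\pomeq\sem{t'}$ of strictly smaller size, so the induction hypothesis applies and congruence recovers $\axeq\vdash s=t$. Otherwise $\sem s$ has at least two events with $\events[\sem s]\notin\boxes[\sem s]$; by \Cref{thm:char-sp} and \Cref{lem:aux-sp-pom}, there exists a non-trivial nested set $A\subseteq\events[\sem s]$ that is either prefix or isolated. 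Fix an isomorphism $\phi:\sem s\to\sem t$ and transport $A$ to $\phi(A)\subseteq\events[\sem t]$, which is again non-trivial, nested, and prefix (resp.\ isolated). Then \Cref{lem:proj}(ii) or (iii) yields
\[
\axeq\vdash s = \proj A s \mathbin{\odot} \proj{\overline A} s,\qquad \axeq\vdash t=\proj{\phi(A)} t \mathbin{\odot} \proj{\overline{\phi(A)}} t,
\]
where $\odot$ is $\tseq$ in the prefix case and $\tpar$ in the isolated case. By \Cref{lem:proj}(i), the restrictions yield isomorphic pomsets of strictly smaller size on each factor, so the induction hypothesis delivers provable equalities of the factors, and congruence closes the case.

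The main obstacle I anticipate is the isolated/parallel case: because $\pompar$ is commutative, one must be slightly careful that the nested set chosen in $\sem s$ transports cleanly under any isomorphism $\phi$ to a set of the same flavour in $\sem t$ (rather than forcing the factors to appear in a fixed order). This is absorbed by axiom \eqref{bsp-eq-par-comm}, but one has to keep track of which piece is matched with which. The prefix/sequential case is cleaner since sequential composition carries an intrinsic order. A second subtlety is verifying that the restriction pomsets shrink strictly in the size measure, which follows from the non-triviality of $A$ and the fact that $\restrict{\sem s}{A}$ and $\restrict{\sem s}{\overline A}$ inherit only boxes actually contained in each side (guaranteed by nestedness), so no box is double-counted.
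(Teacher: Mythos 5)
Your proof is correct, but it takes a genuinely different route from the paper's. The paper proves completeness by structural induction on the term $s$: when $s=s_1\tseq s_2$ (or $s_1\tpar s_2$), the set $A\eqdef\events[\sem{s_1}]$ is \emph{automatically} non-trivial (or trivial, handled via \Cref{lem:eq1}), nested, and prefix (resp.\ isolated), so the splitting set comes for free from the top-level operator of $s$, and only $t$ needs to be decomposed syntactically via $\proj{\phi(A)}{t}$ and \Cref{lem:proj}. Consequently the paper's argument needs neither \Cref{thm:char-sp}, nor Gischer's corollary, nor \Cref{lem:aux-sp-pom}. Your version instead inducts on the size $|\events[\sem s]|+|\boxes[\sem s]|$ and re-derives a splitting set from the forbidden-pattern machinery, essentially re-running the decomposition argument used in the proof of \Cref{thm:char-sp} with the two terms in parallel; this is heavier, but it is symmetric in $s$ and $t$ and it handles the boxing case explicitly and cleanly via \Cref{lem:rmbox} (the paper's structural induction would deal with $s=\boxing{s'}$ as a separate case). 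One small step you elide but which does hold: from $\boxing{\sem{s'}}\pomeq\boxing{\sem{t'}}$ with $\events[\sem{s'}]\notin\boxes[\sem{s'}]$ and $\events[\sem{t'}]\notin\boxes[\sem{t'}]$, box-reflection of the isomorphism lets you drop the outermost box on both sides and retain an isomorphism $\sem{s'}\pomeq\sem{t'}$. Your worry about commutativity in the isolated case is moot for exactly the reason you give: you decompose $t$ along $\phi(A)$ rather than along an arbitrary parallel factorization, so the factors are matched by construction.
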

\begin{proof}
  As often for this kind of result, the right-to-left implication; that is, 
  soundness, is very simple to check, by a simple induction on
  the derivation.

  For the converse direction, we prove the following statement by
  induction on the term $s$:
  \[\forall t\in \spterms,\, \sem s \pomeq \sem t \Rightarrow \axeq\vdash s=t.\]
  \begin{itemize}
  \item$s=1$: follows from \Cref{lem:eq1}.
  \item$s=a$: we prove the result by induction on $t$:
    \begin{itemize}
    \item$t=1$, $t=\boxing {t'}$: impossible since $\sem t\pomeq\atom a$;
    \item$t=b$: since $\sem t\pomeq\atom a$ this means $a=b$; i.e., by reflexivity $\axeq \vdash t=a$;
    \item$t=t_1\tseq t_2$: since $\atom a$ has a single event, and
      $\events[\sem{t_1\tseq t_2}] =
      \events[\sem{t_1}]\uplus\events[\sem{t_2}]$, that event must be
      either in $\events[\sem{t_1}]$ or in $\events[\sem{t_2}]$, and
      the other term has no event. The term that has no event, by
      \Cref{lem:eq1}, is provably equal to $1$. The term containing an
      event is isomorphic to $\atom a$, so by induction it is provably
      equal to $a$. Hence we get that $t$ is provably equal to either
      $a\tseq 1$ or $1\tseq a$, both of which are provably equal to
      $a$; 
    \item$t=t_1\tpar t_2$: same as $t_1\tseq t_2$.
    \end{itemize}
  \item$s=s_1\tseq s_2$: let
    $A\eqdef\events[\sem{s_1}]\subseteq \events[\sem s]$. Notice that
    $\overline A=\events[\sem
    s]\setminus\events[\sem{s_1}]=\events[\sem{s_2}]$.  Let $\phi$ be
    the isomorphism from $\sem s$ to $\sem t$ and let
    $t_1\eqdef\proj{\phi(A)}t$ and
    $t_2\eqdef \proj{\overline{\phi(A)}}t$. Since $A$ is nested and
    prefix, so is its image by the isomorphism $\phi$. Therefore by
    \Cref{lem:proj} we get that: $\axeq\vdash t=t_1\tseq t_2$.
    By properties of isomorphisms, we can also see that for any set
    $X\subseteq\events[\sem s]$ we have
    $\restrict{\sem s}X\pomeq\restrict{\sem t}{\phi(X)}$.  Hence we
    have
    $\sem{t_1}\pomeq\restrict{\sem t}{\phi(A)}\pomeq\restrict{\sem
      s}A\pomeq \sem{s_1}$. Similarly, and because by bijectivity of
    $\phi$ we have $\overline{\phi(A)}=\phi\paren{\overline A}$, we
    get $\sem {t_2}\pomeq\sem{s_2}$.  We may thus conclude by
    induction that $\axeq\vdash s_i=t_i$ ($i\in\set{1,2}$); that is, 
    \[\axeq\vdash s=s_1\tseq s_2=t_1\tseq t_2=t.\]
  \item$s=s_1\tpar s_2$: same as $s_1\tseq s_2$.\qedhere
  \end{itemize}
\end{proof}

\subsubsection{Posets up to subsumption: the free concurrent monoid}
\label{sec:concmon}

In this section we show a similar relation between the axioms of $\axinf$ and poset homomorphisms, namely that for any pair of terms $s,t$ we have $\sem s \subsume \sem t\Leftrightarrow \axinf\vdash s \leq t$.

To prove this result, we will deal with the two extra
axioms~\eqref{subs-exchange} and~\eqref{subs-box-inf} separately. In
order to do so, we define the following sub-orders of $\subsume$:
\begin{itemize}
\item we write $P\bsubsume Q$ if there is an
  order-reflecting poset homomorphism $\phi:Q\to P$;
\item we write $P\osubsume Q$ if there is an
  box-reflecting poset homomorphism $\phi:Q\to P$.
\end{itemize}
\begin{lemma}[Factorization of subsumption]\label{lem:fact-subs}
  If $P\subsume Q$, then there are $R_1,R_2$ such that:
  $P\osubsume R_1\bsubsume Q$ and $P\bsubsume R_2\osubsume Q$.
  In other words, $\subsume=\osubsume\circ\bsubsume$ and $\subsume=\bsubsume\circ\osubsume$.
\end{lemma}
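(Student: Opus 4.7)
The plan is to extract the witnessing homomorphism $\phi\colon Q \to P$ for $P \subsume Q$ and decompose it by building an intermediate poset that combines the events of $Q$ with either the box structure or the order structure pulled back from $P$ via $\phi^{-1}$. Since every homomorphism involved is, by definition, a bijection on events, $\phi$ lets us freely transport subsets of $\events[Q]$ to subsets of $\events[P]$ and vice versa, so pulling back $\boxes[P]$ or $\order[P]$ is well defined.

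For the first factorization, I would take
\[
  R_1 \eqdef \tuple{\events[Q],\order[Q],\labels[Q],\phi^{-1}(\boxes[P])}.
\]
Since $\phi$ is box-preserving, $\boxes[Q] \subseteq \phi^{-1}(\boxes[P]) = \boxes[R_1]$, so the identity map $\mathrm{id}\colon Q \to R_1$ is a bona fide homomorphism, and it is order-reflecting because $\order[Q] = \order[R_1]$ on the nose; hence $R_1 \bsubsume Q$. The map $\phi$ itself, viewed as a map $R_1 \to P$, is still bijective and label-preserving, its order condition $\phi(\order[Q]) \subseteq \order[P]$ is inherited from the original hypothesis, and it is now box-reflecting since $\phi(\boxes[R_1]) = \phi(\phi^{-1}(\boxes[P])) = \boxes[P]$; hence $P \osubsume R_1$.

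For the second factorization, I would dually set
\[
  R_2 \eqdef \tuple{\events[Q],\phi^{-1}(\order[P]),\labels[Q],\boxes[Q]}.
\]
Because $\phi$ preserves order, $\order[Q] \subseteq \phi^{-1}(\order[P]) = \order[R_2]$, so $\phi^{-1}(\order[P])$ is a partial order containing $\order[Q]$, and the identity $Q \to R_2$ is a homomorphism that is box-reflecting (boxes coincide); hence $R_2 \osubsume Q$. Again $\phi\colon R_2 \to P$ is bijective, label-preserving, satisfies $\phi(\order[R_2]) = \order[P]$ (so order-reflecting), and continues to satisfy $\phi(\boxes[R_2]) = \phi(\boxes[Q]) \subseteq \boxes[P]$; hence $P \bsubsume R_2$.

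The only point that needs a moment of care is checking that the pulled-back relations and set systems actually satisfy the axioms of a poset with boxes: $\phi^{-1}(\order[P])$ is reflexive, antisymmetric, and transitive because $\phi$ is a bijection and $\order[P]$ has these properties, and $\phi^{-1}(\boxes[P])$ contains no empty set because $\phi$ is injective and $\emptyset \notin \boxes[P]$. Everything else amounts to the one-line verifications above, so no serious obstacle arises; the essence of the lemma is simply that the defining data (order and boxes) of a subsumption homomorphism can be adjusted independently.
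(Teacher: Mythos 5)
Your proof is correct and follows essentially the same route as the paper: the paper's $R_1$ is exactly your $\phi^{-1}(\boxes[P])$ construction, and its $R_2$ is the isomorphic copy of yours carried on $\events[P]$ (order and labels of $P$, boxes $\phi(\boxes[Q])$) rather than on $\events[Q]$ with the order pulled back. Your added checks that the pulled-back data still form a poset with boxes are exactly the ``unfolding of definitions'' the paper leaves implicit.
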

\begin{proof}
  Let $\phi:Q\to P$ be a poset homomorphism witnessing $P\subsume Q$.
  We may define $R_1\eqdef\tuple{\events[Q],\order[Q],\labels[Q],\setcompr{B}{\phi\paren B\in\boxes[P]}}$ and $R_2\eqdef\tuple{\events[P],\order[P],\labels[P],\phi\paren {\boxes[Q]}}.$
  Checking that $P\osubsume R_1\bsubsume Q$ and
  $P\bsubsume R_2\osubsume Q$ hold is a simple matter of unfolding
  definitions.\qedhere
\end{proof}
We also notice the following properties of $\bsubsume$ and $\osubsume$
with respect to the forbidden patterns of series--parallel posets:
\begin{restatable}{lemma}{spsubsprop}\label{lem:sp-subs-prop}
  Let $P,Q$ be two posets:
  \begin{enumerate}[(i)]
  \item\label{item:p1-b} if $P\bsubsume Q$, then $P$ contains \refpattern{1} iff $Q$
    contains \refpattern{1};
  \item\label{item:p2-o} if $P\osubsume Q$, then $P$ contains \refpattern{2} iff $Q$
    contains \refpattern{2};
  \item\label{item:p3-b} if $P\bsubsume Q$ and $Q$ contains \refpattern{3}, then $P$
    contains \refpattern{3};
  \item\label{item:p4-b} if $P\bsubsume Q$ and $Q$ contains \refpattern{4}, then $P$
    contains \refpattern{4}.
  \end{enumerate}
\end{restatable}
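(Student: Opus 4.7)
The plan is to handle each of the four items by direct pattern-transport along the witnessing homomorphism. All four patterns are formulated purely in terms of the order relation and box membership, so the proof reduces to unwinding the preservation/reflection properties encoded in $\bsubsume$ and $\osubsume$, and then checking the bookkeeping term by term.

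For item~(\ref{item:p1-b}), I fix a witness $\phi:Q\to P$ of $P\bsubsume Q$. Bijectivity combined with order-reflection gives the full equivalence $e\order[Q]f\Leftrightarrow\phi(e)\order[P]\phi(f)$. Since $\refpattern{1}$ mentions only the order, any witness quadruple in $Q$ transports through $\phi$ to one in $P$, and conversely any witness in $P$ pulls back through $\phi^{-1}$. Item~(\ref{item:p2-o}) is the symmetric case: with $\phi:Q\to P$ witnessing $P\osubsume Q$, bijectivity plus box-reflection give a correspondence between $\boxes[Q]$ and $\boxes[P]$, and since $\refpattern{2}$ mentions only box membership (not order), witnesses transport freely in both directions.

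Items~(\ref{item:p3-b}) and~(\ref{item:p4-b}) require only one-directional transport. Given a witness $\phi:Q\to P$ of $P\bsubsume Q$ and elements $e_1,e_2,e_3$ together with a box $A\in\boxes[Q]$ realizing $\refpattern{3}$ (resp.~$\refpattern{4}$) in $Q$, I propose $\phi(e_1),\phi(e_2),\phi(e_3)$ and $\phi(A)$ as the candidate witness in $P$. Box-preservation yields $\phi(A)\in\boxes[P]$; injectivity of the bijection $\phi$ converts $e_2,e_3\in A$ into $\phi(e_2),\phi(e_3)\in\phi(A)$ and $e_1\notin A$ into $\phi(e_1)\notin\phi(A)$; order-preservation handles the positive order constraint $e_1\order[Q]e_2$ (resp.~$e_2\order[Q]e_1$); and order-reflection handles the negative constraint $e_1\not\order[Q]e_3$ (resp.~$e_3\not\order[Q]e_1$), since a positive relation in $P$ would pull back along $\phi$ to a positive relation in $Q$.

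There is no deep obstacle in any of this; the work is essentially bookkeeping. The only point worth flagging is the genuine asymmetry in items~(\ref{item:p3-b}) and~(\ref{item:p4-b}): a $\bsubsume$-witness preserves but need not reflect the box structure, so a pattern $\refpattern{3}$ or $\refpattern{4}$ in $P$ could involve a box $A'\in\boxes[P]$ that is not of the form $\phi(A)$ for any $A\in\boxes[Q]$, and in that case nothing lifts back to $Q$. This is exactly why the statement is phrased as an implication rather than an equivalence for these two items.
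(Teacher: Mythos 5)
Your proposal is correct and follows essentially the same route as the paper: each item is proved by transporting the pattern's witnesses along the homomorphism $\phi:Q\to P$, using bijectivity for membership, order/box preservation for the positive constraints, and order-reflection (resp.\ box-reflection) for the negative ones and for the two-way transport in items~(i) and~(ii). Your closing remark on why items~(iii) and~(iv) are only one-directional is accurate and matches the intent of the statement.
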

\begin{proof}
  \begin{description}
  \item[(\ref{item:p1-b})] if $\phi:Q\to P$ is order reflecting, then by definition we have
    \begin{mathpar}
      x\order[Q]y\Leftrightarrow \phi(x)\order[P]\phi(y).
    \end{mathpar}
    The result follows immediately.
  \item[(\ref{item:p2-o})] if $\phi:Q\to P$ is order reflecting, then by definition we have
    \begin{mathpar}
      B\in\boxes[Q]\Leftrightarrow\phi\paren B\in\boxes[P]\\
      e\in B\setminus C\Leftrightarrow\phi(e)\in\phi\paren B\setminus\phi\paren C\and
      e\in B\cap C\Leftrightarrow\phi(e)\in\phi\paren B\cap\phi\paren C.
    \end{mathpar}
    The result follows immediately.
  \item[(\ref{item:p3-b})] if $\phi:Q\to P$ is order reflecting and $Q$ contains
    \refpattern{3} then by definition of the pattern we have
    $e_1,e_2,e_3\in\events[Q]$ and $B\in\boxes[Q]$ such that
    \begin{mathpar}
      e_1\notin B\and e_2\in B\and e_3\in B\and e_1\order[Q]e_2 \and e_1\not\order[Q] e_3.
    \end{mathpar}
    Since $\phi$ is a poset homomorphism, we know that
    $\phi\paren B\in\boxes[P]$ and $\phi(e_1)\order[P]\phi(e_2)$. By
    definition of the direct image, we also know that
    $\phi(e_1)\notin\phi\paren B$ and
    $\phi(e_2),\phi(e_3)\in\phi\paren B$. Finally, since $\phi$ is
    order reflecting $\phi(e_1)\not\order[P]\phi(e_3)$.
  \item[(\ref{item:p4-b})] similar to the proof for~(\ref{item:p3-b}).\qedhere
  \end{description}
\end{proof}

We now prove that $\axinf$ completely axiomatizes $\bsubsume$:
\begin{lemma}\label{lem:bsubsume}
  If $\sem s\bsubsume\sem t$, then $\axinf\vdash s\leq t$.
\end{lemma}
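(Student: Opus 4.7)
The plan is to proceed by induction on the structure of $s$, following the template of \Cref{thm:bimon} but with one extra case for $s = \boxing{s'}$ that invokes the new axiom \eqref{subs-box-inf}. Fix an order-reflecting homomorphism $\phi : \sem t \to \sem s$ witnessing the hypothesis; recall that $\phi$ is a label-respecting bijection with $\phi(\order[\sem t]) = \order[\sem s]$ and $\phi(\boxes[\sem t]) \subseteq \boxes[\sem s]$.

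The base cases $s = 1$ and $s = a$ in fact force $\sem t \pomeq \sem s$: when $s = 1$, bijectivity gives $\events[\sem t] = \emptyset$ and \Cref{lem:eq1} yields $\axeq \vdash t = 1$; when $s = a$, the empty box-set of $\sem a$ together with non-emptiness of boxes forces $\boxes[\sem t] = \emptyset$, so $\sem t \pomeq \atom a$ and \Cref{thm:bimon} gives $\axeq \vdash t = a$. For $s = \boxing{s'}$, I case-split on whether $\events[\sem t] \in \boxes[\sem t]$. If yes, \Cref{lem:rmbox} produces $t'$ with $\axeq \vdash t = \boxing{t'}$ and $\events[\sem{t'}] \notin \boxes[\sem{t'}]$; stripping the outer boxes on both sides leaves $\phi$ as an order-reflecting homomorphism $\sem{t'} \to \sem{s'}$, and the induction hypothesis together with congruence of $\boxing{\cdot}$ yields $s = \boxing{s'} \leq \boxing{t'} = t$. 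If not, then any $\beta \in \boxes[\sem t]$ with $\phi(\beta) = \events[\sem s]$ would force $\beta = \events[\sem t]$ by bijectivity, contradicting the case assumption; hence $\phi(\boxes[\sem t]) \subseteq \boxes[\sem{s'}]$ and $\phi$ already witnesses $\sem{s'} \bsubsume \sem t$. The induction hypothesis gives $\axinf \vdash s' \leq t$, and axiom \eqref{subs-box-inf} closes with $\boxing{s'} \leq s' \leq t$.

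For $s = s_1 \tseq s_2$, set $A = \phi^{-1}(\events[\sem{s_1}]) \subseteq \events[\sem t]$. Order-reflectiveness together with the sequential-composition order on $\sem s$ shows that $A$ is prefix in $\sem t$, and nestedness follows from the structural fact that every box of $\sem s = \sem{s_1} \pomseq \sem{s_2}$ is contained entirely in $\events[\sem{s_1}]$ or entirely in $\events[\sem{s_2}]$, so the $\phi$-preimage of any box of $\sem t$ lies wholly in $A$ or wholly in $\overline A$. \Cref{lem:proj}(ii) then gives $\axeq \vdash t = \proj A t \tseq \proj{\overline A} t$ together with $\sem{\proj A t} \pomeq \restrict{\sem t}{A}$, and the restrictions of $\phi$ to $A$ and $\overline A$ supply order-reflecting homomorphisms witnessing $\sem{s_1} \bsubsume \sem{\proj A t}$ and $\sem{s_2} \bsubsume \sem{\proj{\overline A} t}$ (using again that boxes do not cross the $\tseq$-boundary). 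Two applications of the induction hypothesis combine into $s_1 \tseq s_2 \leq \proj A t \tseq \proj{\overline A} t = t$. The case $s = s_1 \tpar s_2$ is handled identically, replacing prefix by isolated and invoking \Cref{lem:proj}(iii) instead.

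The main subtlety is the prefix/isolated and nestedness verification for $A$ in the composition cases, which hinges on the structural fact that $\pomseq$ and $\pompar$ never introduce boxes straddling their two operands; the rest of the argument is structural bookkeeping, with axiom \eqref{subs-box-inf} intervening only to discard an unmatched outer box in the boxing case.
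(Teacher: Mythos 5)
Your proof is correct, but it takes a genuinely different route from the paper's. The paper does not re-run the structural induction of \Cref{thm:bimon}: instead it defines a box-removal operator $\mathcal H(P)$ collecting all posets $\tuple{\events[P],\order[P],\labels[P],B}$ with $B\subseteq\boxes[P]$, observes that $P\bsubsume Q$ holds exactly when $Q$ is isomorphic to some member of $\mathcal H(P)$, mirrors $\mathcal H$ syntactically by a finite set of terms $\mathrm H(s)$ with $\axinf\vdash s\leq t'$ for every $t'\in\mathrm H(s)$ (each obtained from $s$ by erasing some box constructors, so that only \eqref{subs-box-inf} and congruence are needed), and then closes the argument by invoking \Cref{thm:bimon} once to identify $t$ with the appropriate $t'\in\mathrm H(s)$ up to $\axeq$-provable equality. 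That reduction buys modularity --- the isomorphism completeness theorem is reused as a black box and the only new work is the (admittedly only sketched) correspondence between $\mathcal H$ and $\mathrm H$ --- at the cost of introducing auxiliary machinery. Your argument instead threads the order-reflecting bijection $\phi$ directly through the term decomposition: the prefix/nestedness verification for $A=\phi^{-1}(\events[\sem{s_1}])$, the observation that boxes of $\sem{s_1}\pomseq\sem{s_2}$ never straddle the two operands, and the bijectivity argument ruling out a box of $\sem t$ mapping onto $\events[\sem s]$ in the unboxed sub-case are all sound, and \eqref{subs-box-inf} enters exactly where it must. The result is self-contained but duplicates the bookkeeping already done for \Cref{thm:bimon}; either proof is acceptable.
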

\begin{proof}
  \newcommand\lessboxPom[1]{\mathcal{H}\paren{#1}}
  \newcommand\lessboxTm[1]{\mathrm{H}\paren{#1}}
  First, we define the following operator $\lessboxPom\_:\posets\to\fpset\posets$:
  \[\lessboxPom P\eqdef\setcompr{\tuple{\events[P],\order[P],\labels[P],B}}{B\subseteq\boxes[P]}.\]
  From the definitions it is straightforward to show that
  $P\bsubsume Q$ iff $Q$ is isomorphic to some $P'\in\lessboxPom
  P$. This operator $\lessboxPom P$ can be mirrored on terms; i.e., we can associate inductively 
  to each term $s$ a finite set of terms $\lessboxTm s$ such that 
  $\forall P\in\lessboxPom{\sem s},\exists t\in\lessboxTm s:\,\sem
  t\pomeq P$ and $\forall t\in\lessboxTm s,\,\axinf\vdash s\leq t$. We may therefore conclude:
  \begin{align*}
    \sem s\bsubsume\sem t&\Rightarrow\exists P\in\lessboxPom{\sem s}:\sem t\pomeq P\\
                         &\Rightarrow\exists t'\in\lessboxTm s:\sem {t'}\pomeq \sem t\\
                         &\Rightarrow\exists t'\in\lessboxTm s:\axeq\vdash t' = t \wedge \axinf\vdash s\leq t'\\
                         &\Rightarrow\axinf\vdash s\leq t'=t.\qedhere
  \end{align*}
\end{proof}
We now prove the same result for $\osubsume$. 
\begin{lemma}\label{lem:osubsume}
  If $\sem s\osubsume \sem t$ then $\axinf \vdash s\leq t$.
\end{lemma}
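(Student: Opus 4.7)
The plan is to mirror the strategy of~\Cref{lem:bsubsume}, with the roles of boxes and ordering interchanged. I would first define, at the semantic level, the operator
\[ \mathcal{O}(P) \eqdef \setcompr{\tuple{\events[P], {\leq}, \labels[P], \boxes[P]}}{\order[P]\subseteq {\leq} \text{ is a partial order and the resulting poset is SP}} \]
and observe that $\sem{s} \osubsume \sem{t}$ holds iff $\sem{s} \pomeq Q$ for some $Q \in \mathcal{O}(\sem{t})$. The task then reduces to constructing, by induction on $t$, a finite set of terms $\mathrm{O}(t)$ satisfying: (i) for every $Q \in \mathcal{O}(\sem{t})$ there is $t' \in \mathrm{O}(t)$ with $\sem{t'} \pomeq Q$; and (ii) for every $t' \in \mathrm{O}(t)$, $\axinf \vdash t' \leq t$. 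Once this is done, extracting $t' \in \mathrm{O}(t)$ with $\sem{t'} \pomeq \sem{s}$, using~\Cref{thm:bimon} to get $\axeq \vdash s = t'$, and combining with~(ii) yields $\axinf \vdash s \leq t$, exactly as in~\Cref{lem:bsubsume}.

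The clauses of $\mathrm{O}$ for $t \in \set{1, a, \boxing{t'}, t_1 \tseq t_2}$ will be routine, with $\mathrm{O}$ essentially commuting with each constructor (extending the order of a sequential composition, of a box, or of an atom cannot change its top-level shape, since the bipartition induced by the constructor is already prefix-and-nested or a singleton). The interesting case is parallel composition, where axiom~\eqref{subs-exchange} enters: beyond the parallel recombinations $t_1' \tpar t_2'$ with $t_i' \in \mathrm{O}(t_i)$, the set $\mathrm{O}(t_1 \tpar t_2)$ must also contain every sequentialized form $(u_1 \tpar u_2) \tseq (v_1 \tpar v_2)$ obtained when $u_i \tseq v_i \in \mathrm{O}(t_i)$, and be closed under iterating this rewriting. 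Each such sequentialization is justified by a single application of~\eqref{subs-exchange}:
\[ (u_1 \tpar u_2) \tseq (v_1 \tpar v_2) \leq (u_1 \tseq v_1) \tpar (u_2 \tseq v_2) \leq t_1 \tpar t_2, \]
which preserves~(ii).

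The main obstacle will be verifying~(i) in the parallel case: given an arbitrary SP-pomset $Q \in \mathcal{O}(\sem{t_1} \pompar \sem{t_2})$, we must show that $Q$ matches one of the terms produced by this inductive clause. The strategy is to analyze $Q$ along the bipartition $\events[\sem{t_1}] \uplus \events[\sem{t_2}]$, which is nested in $Q$ because the boxes are preserved. If it remains isolated in $Q$, then $Q$ is a parallel composition of two posets from $\mathcal{O}(\sem{t_1})$ and $\mathcal{O}(\sem{t_2})$ (by~\Cref{fact:nested-sub-posets}) and is handled directly by the induction hypothesis. Otherwise, cross-ordering has been added, and the SP structure of $Q$ --- via \Cref{lem:aux-sp-pom} together with the forbidden-pattern characterization of~\Cref{thm:char-sp} --- forces the existence of a non-trivial nested prefix set that splits both threads simultaneously; recursing on this decomposition on each side produces a pair of sequential factorizations $u_i \tseq v_i$ of terms in $\mathrm{O}(t_i)$, which recombine into precisely the sequentialized form $(u_1 \tpar u_2) \tseq (v_1 \tpar v_2)$ sitting in $\mathrm{O}(t_1 \tpar t_2)$.
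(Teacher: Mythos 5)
Your proposal takes a genuinely different route from the paper. The paper's proof inducts on the number of \emph{boxes} in $s$: it invokes Gischer's completeness theorem as a black box for the box-free case, and in the inductive step it replaces each maximal boxed sub-term by a fresh letter, applies the induction hypothesis inside the boxes, applies Gischer to the box-free skeletons, and recombines via substitution. You instead attempt a self-contained argument mirroring \Cref{lem:bsubsume}: a semantic operator $\mathcal O$ collecting SP order-extensions, a syntactic mirror $\mathrm O$, and the two properties (i) and (ii). The reduction of $\osubsume$ to order-extension is correct, the overall architecture is sound, and the cases $1$, $a$, $\boxing{\cdot}$, $\tseq$ go through as you say. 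But this means you are re-proving (a boxed generalization of) Gischer's completeness theorem for the exchange law, which is precisely the combinatorial content the paper deliberately outsources, and your sketch of that content is where the proof breaks.

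The gap is in the parallel case, in the claimed dichotomy ``either the syntactic bipartition is isolated in $Q$, or $Q$ has a non-trivial nested prefix set splitting both threads.'' This is false. Take $t_1 = a\tpar b$, $t_2 = c$, and let $Q$ be $\sem{t_1\tpar t_2}$ with the single added pair $e_a\order[]e_c$, so $Q\pomeq\paren{\atom a\pomseq \atom c}\pompar \atom b$. Cross-ordering has been added, yet $Q$ has \emph{no} non-trivial prefix set ($e_b$ is incomparable to everything); $Q$ decomposes only as a parallel composition, and along a bipartition ($\set{e_a,e_c}$ versus $\set{e_b}$) that does not refine the syntactic one. Consequently your $\mathrm O\paren{t_1\tpar t_2}$ --- parallel recombinations along the syntactic split, plus iterated sequentializations of the form $\paren{u_1\tpar u_2}\tseq\paren{v_1\tpar v_2}$ --- contains no term isomorphic to $Q$, and property (i) fails, even though $\axinf\vdash \paren{a\tseq c}\tpar b\leq \paren{a\tpar b}\tpar c$ does hold. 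To repair this you must additionally allow arbitrary regroupings of the parallel-prime factors of $t_1$ and $t_2$ before recursing (and then the recursion is no longer structural on the term, so the induction measure must change to, say, the number of events); handling this interaction between re-association and sequentialization is exactly the hard part of Gischer's proof. A secondary, fixable issue: even when a suitable prefix set $B$ exists, $\restrict Q B$ need not equal $\sem{u_1}\pompar\sem{u_2}$ on the nose, since $Q$ may add order \emph{inside} $B$ across the two threads, so the recombination must itself recurse on $u_1\tpar u_2$ and $v_1\tpar v_2$ rather than ``recombine into precisely'' the stated form.
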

\begin{remark}
  The proof we give below relies on Gischer's completeness
  theorem~\cite{gischer88}. In the Coq proof however, we make no
  assumptions, and we do not have access to Gischer's
  result. Therefore we perform a different, more technically involved
  proof there, with Gischer's theorem as a corollary.
\end{remark}
\begin{proof}
  \newcommand\B{\mathcal B}%
  \newcommand\letter[1]{\ell\paren{#1}}%
  \newcommand\smap[1]{\mathpzc{s}\paren{#1}}%
  \newcommand\tmap[1]{\mathpzc{t}\paren{#1}}%
  We will perform this proof by induction on the number of boxes in
  $s$. Let $\phi:\sem t\to \sem s$ be the box-reflecting homomorphism
  witnessing $\sem s\osubsume\sem t$.

  If $s$ contains no boxes, then by Gischer's completeness theorem we know that
  \[\sem s\osubsume\sem t\Rightarrow \axeq,\eqref{subs-exchange}\vdash s\leq t.\]
  Hence, as we have $\axeq,\eqref{subs-exchange}\subseteq\axinf$, we
  get $\axinf\vdash s\leq t$.

  If on the other hand $s$ has boxes, consider the following set of
  boxes:
  \[\B\eqdef\setcompr{B\in\boxes[\sem s]}{\forall C\in\boxes[\sem
      s], B\subseteq C\Rightarrow B=C}\paren{=\max\boxes[\sem s]}.\]
  Notice that since $\phi$ is box-reflecting, we have that
  $\B=\phi\paren{\max\boxes[\sem t]}.$ Furthermore, we
  have the following property: for any box
  $B\in\max\boxes[\sem t]$, the map $\restrict \phi B$ is
  a box-reflecting homomorphism from $\restrict{\sem t}B$ to
  $\restrict{\sem s}{\phi\paren B}$. We pick new symbols for the
  elements of $\B$; that is, we find a set $\Sigma'$ disjoint from
  $\Sigma$ and a bijection $\letter\_:\B\to\Sigma'$. Using the
  observation~\Cref{lem:rmbox} we made earlier, we find two maps
  $\smap\_,\tmap\_:\Sigma'$ such that:
  \begin{align}
    \forall B\in\B,&&\axeq\vdash\proj B s&=\boxing{\smap{\letter B}}.\label{eq:3}\\
    \forall x\in\Sigma',&&\events[\sem{\smap x}]&\notin\boxes[\sem{\smap x}]\\
    \forall B\in\max\boxes[\sem t],&&\axeq\vdash\proj B t&=\boxing{\tmap{\letter {\phi(B)}}}.\label{eq:4}\\
    \forall x\in\Sigma',&&\events[\sem{\tmap x}]&\notin\boxes[\sem{\tmap x}] 
  \end{align}
  Clearly, for every $x\in\Sigma'$, $\smap x$ has strictly less
  boxes than $s$. Our previous observations imply that
  $\sem{\smap x}\osubsume\sem{\tmap x}$. Therefore, by induction
  hypothesis, we get that
  $\forall x\in\Sigma',\,\axinf\vdash\smap x\leq\tmap x$, hence:
  \[
  	\forall x\in\Sigma',\,\axinf\vdash\boxing{\smap x}\leq\boxing{\tmap x}.
  \]
  Combined with~\eqref{eq:3} and~\eqref{eq:4} this yields that 
  $\forall B\in\max\boxes[\sem t],\,\axinf\vdash\proj {\phi\paren B}s\leq\proj B t$.
  We define two substitutions
  $\sigma,\tau:\alphabet\cup\Sigma'\to\spterms$ as follows:
  \begin{mathpar}
    \sigma(x)\eqdef\left\{
      \begin{array}{ll}
        \proj B s&\text{ if }x=\letter B\\
        x&\text{ if }x\in\alphabet
      \end{array}
    \right.\and\tau(x)\eqdef\left\{
      \begin{array}{ll}
        \proj B t&\text{ if }x=\letter{\phi\paren{B}}\\
        x&\text{ if }x\in\alphabet
      \end{array}
    \right.
  \end{mathpar}
  Finally, we syntactically substitute maximal boxed sub-terms with
  letters from $\Sigma'$ in $s,t$, yielding terms
  $s',t'\in\spterms[\alphabet\cup\Sigma']$ such that
  $\hat\sigma\paren{s'}=s$, $\hat\tau\paren{t'}=t$, and neither $s'$
  nor $t'$ has any box. By unfolding the definitions, we can check
  that since $\sem s\osubsume\sem t$ we have
  $\sem s'\osubsume\sem t'$. Furthermore, since $s'$ and $t'$ do not
  contain any box, we may use Gischer's theorem to prove that
  $\axinf\vdash s'\leq t'$. We may now conclude: by applying
  $\sigma$ everywhere in the proof of $\axinf\vdash s'\leq t'$, we
  get that
  $\axinf\vdash s=\hat \sigma\paren{s'}\leq\hat\sigma\paren{t'}$.
  Since $\forall a\in\alphabet\cup\Sigma'$, we have
  $\axinf\vdash\sigma(a)\leq\tau(a)$, we get
  $\axinf\vdash \hat\sigma\paren{t'}\leq\hat\tau\paren{t'}=t$.\qedhere
\end{proof}

\begin{theorem}\label{thm:concmon}
  For any pair of terms $s,t\in\spterms$, the following holds:
  $$\sem s \subsume \sem t\Leftrightarrow \axinf\vdash s \leq t.$$
\end{theorem}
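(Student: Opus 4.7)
The right-to-left direction (soundness) is routine: one checks that each axiom of $\axinf$ is valid under $\subsume$ (e.g., the exchange axiom \eqref{subs-exchange} corresponds to the standard fact that $(P \pompar Q) \pomseq (U \pompar V) \subsume (P \pomseq U) \pompar (Q \pomseq V)$, and \eqref{subs-box-inf} follows from the fact that $\boxing P$ just adds a box to $P$) and that inference rules preserve validity, giving a straightforward induction on derivations.

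For the left-to-right direction (completeness), the plan is to factor $\subsume$ through an intermediate SP-pomset and invoke the two completeness results already proved for $\bsubsume$ and $\osubsume$. Given a homomorphism $\phi : \sem t \to \sem s$ witnessing $\sem s \subsume \sem t$, I use the construction of \Cref{lem:fact-subs} to form
\[
  R \eqdef \tuple{\events[\sem s],\,\order[\sem s],\,\labels[\sem s],\,\phi(\boxes[\sem t])},
\]
so that $\sem s \bsubsume R \osubsume \sem t$.

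The crucial step is to check that $R$ is series--parallel, so that \Cref{thm:char-sp} produces a term $r \in \spterms$ with $\sem r \pomeq R$. I verify the four forbidden patterns of \Cref{fig:sp-patterns} in turn. Patterns \refpattern{1}, \refpattern{3}, \refpattern{4} involve only the order of $R$ (which coincides with $\order[\sem s]$) and boxes of $R$ (which form a subset of $\boxes[\sem s]$ because $\phi$ is a homomorphism). Hence any such pattern in $R$ is already present in $\sem s$, contradicting the SP-ness of $\sem s$. Pattern \refpattern{2} involves only boxes: since $\phi$ is a bijection, $\phi^{-1}$ transports any occurrence of \refpattern{2} in $R$ back to an occurrence in $\sem t$, contradicting the SP-ness of $\sem t$. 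I expect this pattern-chasing verification to be the main obstacle, though it is essentially unwinding the definitions once the forbidden-patterns characterization is in hand.

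Once $r$ is obtained, I apply \Cref{lem:bsubsume} to $\sem s \bsubsume \sem r$ to derive $\axinf \vdash s \leq r$, and \Cref{lem:osubsume} to $\sem r \osubsume \sem t$ to derive $\axinf \vdash r \leq t$; transitivity of $\leq$ then yields $\axinf \vdash s \leq t$, concluding the proof.
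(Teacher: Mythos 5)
Your proof is correct and follows essentially the same route as the paper: factor the subsumption witness through an intermediate poset via \Cref{lem:fact-subs}, then combine \Cref{lem:bsubsume} and \Cref{lem:osubsume} by transitivity of $\leq$. Your explicit verification that the intermediate poset avoids the four forbidden patterns (and hence, by \Cref{thm:char-sp}, is the interpretation of a term, so that those two lemmas actually apply) is precisely the role played by \Cref{lem:sp-subs-prop} in the paper, which leaves this step implicit in its one-line proof of the theorem.
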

\begin{proof}
  Soundness is straightforward, and completeness arises from \Cref{lem:fact-subs},\Cref{lem:bsubsume}, and \Cref{lem:osubsume}.
\end{proof}

\subsubsection{Completeness results for sets of posets}
\label{sec:ax-sets-of-posets}

The following lemma allows us to extend seamlessly our completeness
theorem from $\axeq$ to $\axsr$ and from $\axinf$ to $\axsrinf$.
\begin{restatable}{lemma}{exprtoterm}\label{lem:expr-to-term}
  There exists a function $T_\_:\terms\to\fpset\spterms$ such that:
  $\axsr\vdash e = \bigjoin_{s\in T_e}s$ and 
  $\sem e \pomeq\setcompr{\sem s}{s\in T_e}$.
\end{restatable}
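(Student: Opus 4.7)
The plan is to define $T_e$ by structural induction on $e\in\terms$, and then verify both conclusions simultaneously by the same induction, using the distributivity, absorption, and zero axioms of $\axsr$.

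Concretely, I would set
\begin{align*}
  T_0 &\eqdef \emptyset, & T_1 &\eqdef \set{1}, & T_a &\eqdef \set{a},\\
  T_{e\join f} &\eqdef T_e\cup T_f, & T_{\boxing e} &\eqdef \setcompr{\boxing s}{s\in T_e},\\
  T_{e\tseq f} &\eqdef \setcompr{s\tseq t}{s\in T_e,\,t\in T_f}, & T_{e\tpar f} &\eqdef \setcompr{s\tpar t}{s\in T_e,\,t\in T_f}.
\end{align*}
A straightforward check (using the definition of $\spterms$ and the fact that the constructors used above are all in $\spterms$) shows $T_e\subseteq\spterms$ and $T_e$ is finite. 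The convention for the empty sum is $\bigjoin_{s\in\emptyset}s=0$, so in particular the axiomatic equality holds on the nose for $e=0$.

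For the axiomatic equality $\axsr\vdash e=\bigjoin_{s\in T_e}s$, I would proceed by induction on $e$. The cases $0,1,a$ are immediate. For $e\join f$, the result follows from associativity, commutativity, and idempotence of $\join$ (axioms \eqref{bbsr-eq-plus-ass}--\eqref{bbsr-eq-plus-idem}) applied to the union. For $e\tseq f$, the induction hypothesis gives $\axsr\vdash e=\bigjoin_{s\in T_e}s$ and $\axsr\vdash f=\bigjoin_{t\in T_f}t$; then repeated application of the distributivity laws \eqref{bbsr-eq-left-distr}--\eqref{bbsr-eq-right-distr} rewrites the product of two sums into the double sum, and the case where $T_e$ or $T_f$ is empty is covered by the absorbing axiom \eqref{bbsr-eq-seq-absorbing}. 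The case $e\tpar f$ is analogous, using \eqref{bbsr-eq-par-distr} and \eqref{bbsr-eq-par-zero}. For $\boxing e$, distributivity \eqref{bbsr-eq-box-plus} and the case $\boxing 0=0$ from \eqref{bbsr-eq-box-zero} give the result.

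For the semantic equivalence $\sem e\pomeq\setcompr{\sem s}{s\in T_e}$, I would again induct on $e$. The base cases $0,1,a$ are immediate from the definitions of $\sem{\_}$ on terms and the empty set. The inductive cases reduce to the pointwise liftings: by definition, $\sem{e\tseq f}=\sem e\pomseq\sem f$, and unfolding the pointwise lifted product using the induction hypothesis yields exactly $\setcompr{\sem s\pomseq\sem t}{s\in T_e,\,t\in T_f}$, which matches $\setcompr{\sem u}{u\in T_{e\tseq f}}$ since $\sem{s\tseq t}=\sem s\pomseq\sem t$. The cases for $\tpar$, $\boxing{\_}$ and $\join$ are handled identically, using the definition of the operations on sets of posets and the fact that $\sem{e\join f}=\sem e\cup\sem f$.

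There is no real obstacle: the only subtle point is the careful handling of the empty case $T_e=\emptyset$, which must thread correctly through the product, parallel, and boxing cases so that the absorbing axioms can reduce the distributed expression to $0$. With those cases isolated, the rest of the proof is a routine induction whose force comes entirely from the bisemiring axioms already collected in $\axsr$.
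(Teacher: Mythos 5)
Your proposal is correct and follows essentially the same route as the paper: the definition of $T_e$ is identical clause for clause, and the paper simply leaves the verification as ``a simple induction'', which you have carried out in more detail using exactly the axioms of $\axsr$ one would expect. No issues.
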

\begin{proof}
  $T_e$ is defined by induction on $e$:
  \begin{align*}
    T_0&\eqdef\emptyset
    &T_1&\eqdef \set 1\\
    T_a&\eqdef\set a
    &T_{\boxing e}&\eqdef \setcompr{\boxing s}{s\in T_e}\\
    T_{e\tseq f}&\eqdef\setcompr{s\tseq t}{\tuple{s,t}\in T_e\times T_f}
    &T_{e\tpar f}&\eqdef\setcompr{s\tpar t}{\tuple{s,t}\in T_e\times T_f}\\
    T_{e\join f}&\eqdef T_e\cup T_f.&&
  \end{align*}
  Checking the lemma is done by a simple induction.\qedhere
\end{proof}

From there, we can easily establish two completeness results. 
\begin{restatable}{theorem}{CompletenessSets}\label{thm:CompletenessSets}
  For any pair of terms $e,f\in \terms$, the following hold:
  \begin{align}
    \sem e\pomeq\sem f &\Leftrightarrow \axsr\vdash e = f\label{cor:bi-sr}\\
    \clpom{\sem e}\pomeq\clpom{\sem f} &\Leftrightarrow \axsrinf\vdash e = f \label{cor:conc-sr}
  \end{align}
\end{restatable}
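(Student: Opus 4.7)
The plan is to derive both equivalences as direct consequences of Lemma~\ref{lem:expr-to-term} combined with the completeness theorems already established for SP-terms (Theorem~\ref{thm:bimon} and Theorem~\ref{thm:concmon}). Soundness in both directions is routine: given a derivation of $e=f$ in $\axsr$ (resp.\ $\axsrinf$), we check by induction on the derivation that each axiom is valid under $\pomeq$ on sets (resp.\ under $\clpom{\_}\pomeq\clpom{\_}$), which amounts to verifying the semiring identities and (for $\axsrinf$) the exchange and box-absorption inequations at the level of sets of posets, using the definition of the lifted operators and closures.

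For the completeness part of \eqref{cor:bi-sr}, I would argue as follows. Assume $\sem e\pomeq\sem f$. By Lemma~\ref{lem:expr-to-term}, $\sem e\pomeq\{\sem s\mid s\in T_e\}$ and $\sem f\pomeq\{\sem s\mid s\in T_f\}$, so we obtain $\{\sem s\mid s\in T_e\}\pomeq\{\sem s\mid s\in T_f\}$. Unfolding isomorphic equivalence of sets, for every $s\in T_e$ there is some $t\in T_f$ with $\sem s\pomeq\sem t$, and symmetrically; Theorem~\ref{thm:bimon} then upgrades each such isomorphism to $\axeq\vdash s=t$, hence $\axsr\vdash s=t$. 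Using the semilattice axioms \eqref{bbsr-eq-plus-ass}, \eqref{bbsr-eq-plus-comm}, and \eqref{bbsr-eq-plus-idem}, a standard manipulation of finite sums shows $\axsr\vdash \bigjoin_{s\in T_e}s = \bigjoin_{t\in T_f}t$ (adjoin each $t\in T_f$ to the left sum via its isomorphic preimage in $T_e$, apply idempotency to collapse duplicates, and symmetrically absorb the left-hand terms). Combined with the identity $\axsr\vdash e=\bigjoin_{s\in T_e}s$ from Lemma~\ref{lem:expr-to-term}, this yields $\axsr\vdash e=f$.

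For the completeness part of \eqref{cor:conc-sr}, I would first observe, using the remark following the definition of closure, that $\clpom{\sem e}\pomeq\clpom{\sem f}$ is equivalent to $\sem e\subsume\sem f$ and $\sem f\subsume\sem e$ both holding. Applying Lemma~\ref{lem:expr-to-term} again to rewrite each $\sem{\_}$ as a set of pomsets coming from SP-terms, $\sem e\subsume\sem f$ unfolds to: for every $s\in T_e$ there exists $t\in T_f$ with $\sem s\subsume\sem t$. Theorem~\ref{thm:concmon} then gives $\axinf\vdash s\leq t$, hence $\axsrinf\vdash s\leq t$. Now using the fact, noted right after the list of axiom sets, that in $\axsrinf$ we have $\axsrinf\vdash e\leq f\Leftrightarrow \axsrinf\vdash e+f=f$, together with the distributivity, commutativity and idempotency of $\join$, we lift these pointwise inequalities to $\axsrinf\vdash\bigjoin_{s\in T_e}s\leq\bigjoin_{t\in T_f}t$. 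Applying the symmetric direction yields the converse inequality, and antisymmetry gives the desired equality between the sums; composing with the identities $\axsr\vdash e=\bigjoin T_e$ and $\axsr\vdash f=\bigjoin T_f$ concludes.

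The only mildly delicate step is the lifting from pointwise inequalities $\axsrinf\vdash s\leq t$ (for $s\in T_e$, with matching $t\in T_f$) to the inequality between the two sums. It is essentially a bookkeeping argument about finite joins in a semilattice, but it is the one place where one has to be careful to use only axioms available in $\axsrinf$ (namely, the semilattice fragment of $\axsr$) and to track the fact that the correspondence $s\mapsto t$ need not be a function, only a relation, so idempotency of $\join$ is genuinely used to eliminate repeated right-hand summands.
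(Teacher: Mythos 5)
Your proposal is correct and follows essentially the same route as the paper: decompose $e$ and $f$ into finite joins of SP-terms via Lemma~\ref{lem:expr-to-term}, match summands up to $\pomeq$ (resp.\ $\subsume$) and invoke Theorem~\ref{thm:bimon} (resp.\ Theorem~\ref{thm:concmon}), then recombine with the semilattice axioms for $\join$. The only cosmetic difference is that the paper phrases the recombination as $s\leq f$ for each $s\in T_e$, hence $e=\bigjoin_{s\in T_e}s\leq f$, and concludes by antisymmetry, rather than your direct idempotency manipulation of the two sums; these are the same bookkeeping.
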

\begin{proof}
  \begin{description}
  \item [(\ref{cor:bi-sr})]
    Soundness is easy to check. By a simple induction on the derivation
    tree, we can ensure that
    $\axsr\vdash e = f \Rightarrow \sem e \pomeq \sem f$. 
    
    \medskip 
    
    Using \Cref{lem:expr-to-term}, 
    we may rewrite any term $e$ as a finite union of series--parallel terms.
    Let $s\in T_e$. By soundness, there is $P\in \sem e$ such that
    $P\pomeq \sem s$. Since $\sem e \pomeq \sem f$, there is
    $Q\in \sem f$ such that $P\pomeq Q$. Since
    $\axsr\vdash f = \bigjoin_{s\in T_f}s$, by soundness there is
    $t\in T_f$ such that $Q\pomeq t$. Therefore $\sem s\pomeq \sem t$,
    so by \Cref{thm:bimon} we have $\axeq\vdash s = t$.  Since
    $\axeq\subseteq \axsr$, we also have $\axsr\vdash s = t$, and
    because $t\in T_f$ we get $\axsr\vdash s\leq f$. Since this holds for every $s\in T_e$, this means that 
    \[
    	\axsr\vdash e= \bigjoin_{s\in T_e}s\leq f.
    \]
    By a symmetric argument, we obtain $\axsr\vdash f\leq e$, allowing us
    to conclude by antisymmetry that $\axsr\vdash e = f$.
  \item[(\ref{cor:conc-sr})]
    Again, soundness is straightforward. For completeness, it is
    sufficient to show if
    $\sem e\subsume \sem f$ then $\axsrinf\vdash e\leq f$.
    Assume $\sem e\subsume\sem f$; i.e., every poset in~$\sem e$ is subsumed by some poset in $\sem f$. Since
    $\axsr\subseteq\axsrinf$, we get, by \Cref{lem:expr-to-term}, that 
    \begin{mathpar}
      \axsrinf\vdash e=\bigjoin_{s\in T_e}s \and
      \axsrinf\vdash f=\bigjoin_{t\in T_f}t.
    \end{mathpar}
    Let $s\in T_e$. Since $\sem e\subsume\sem f$, and because of
    \Cref{lem:expr-to-term}, we know that there is a
    term~${t\in T_f}$ such that $\sem s\subsume\sem t$. By
    \Cref{thm:concmon}, that means $\axinf\vdash s\leq t$. Since
    $\axinf\subseteq\axsrinf$, we get $\axsrinf\vdash s\leq t\leq
    f$. This means that for every~${s\in T_e}$, we have
    $\axsrinf\vdash s\leq f$, hence that
    $\axsrinf\vdash e=\bigjoin_{s\in T_e}s\leq f$.\qedhere
  \end{description}
\end{proof}

\section{Logic for pomsets with boxes} \label{sec:logic}

We introduce a logic for reasoning about pomsets with boxes, in the form of a bunched 
modal logic, in the sense of \cite{OP99,GMP05,CP09,AP16,Pym2019}, with substructural 
connectives corresponding to each of sequential and concurrent composition. Modalities 
characterize boxes and locality. The logic is also conceptually related to Concurrent 
Separation Logic \cite{CSL,BrookesO'Hearn2016}.

In contrast with other work, pomset logic is a logic of
\emph{behaviours}. A behaviour is a run of some program, represented
as a pomset. The logic describes such behaviours in terms of the
order in which instructions are, or can be, executed, and the
separation properties of sub-runs. Note, in particular, that we do not
define any notion of \emph{state}.
On the contrary, existing approaches, such as dynamic logic and
Hennessy--Milner logic for example, put the emphasis on the state of
the machine before and after running the program. Typically, the
assertion language describes the memory-states, and some accessibility
relations between them. The semantics then relies on labelled
transition systems to interpret action modalities.

Here, the satisfaction relation (given in
\Cref{def:pomset-satisfaction}) directly defines a relation
between sets of behaviours and formulas. An intuitionistic version of
the semantics given in \Cref{def:pomset-satisfaction} might
be set up --- cf. Tarski's semantics and the semantics of relevant logic 
--- in terms of (ternary) relations on behaviours. 

\subsection{Pomset logic: definitions}
\label{sec:pom-logic}

We generate the set of formulas $\Kformulas$ and the set of positive formulas $\Jformulas$ as follows:
\begin{align*}
  \phi,\psi\in \Jformulas&\Coloneqq
                           \emptyprop
                           \Mid a
                           \Mid \phi\vee \psi
                           \Mid \phi\wedge\psi
                           \Mid \phi\then\psi
                           \Mid \phi \nextto \psi
                           \Mid \boxing \phi
                           \Mid \context \phi\\
  \phi,\psi\in \Kformulas&\Coloneqq
                           \emptyprop
                           \Mid a
                           \Mid \phi\vee \psi
                           \Mid \phi\wedge\psi
                           \Mid \phi\then\psi
                           \Mid \phi \nextto \psi
                           \Mid \boxing \phi
                           \Mid \context \phi
                           \Mid \neg \phi
\end{align*}\vspace{-5mm}
\begin{remark}
  Here the atomic predicates are chosen to be exactly
  $\alphabet$. Another natural choice would be a separate set $Prop$
  of atomic predicates, together with a valuation
  $v:Prop\to\pset\alphabet$ to indicate which actions satisfy which
  predicate. Both definitions are equivalent:
  \begin{itemize}
  \item to encode a formula over $Prop$ as a formula over $\alphabet$,
    simply replace every predicate $p\in Prop$ with the formula
    $\bigvee_{a\in v(p)}a$
  \item to encode a formula over $\alphabet$ as one over $Prop$, we
    need to make the customary assumption that
    $\forall a\in\alphabet,\,\exists p\in Prop:\,v(p)=\set a$.
  \end{itemize}
\end{remark}

These formulas are interpreted over posets. We define a satisfaction
relation $\models_R$ that is parametrized by a relation
$R\subseteq\posets\times\posets$ (to be instantiated later on with $\pomeq$, $\subsume$, and
$\revsubsume$). 
\begin{definition}\label{def:pomset-satisfaction}
$P\models_R\phi$ is defined by induction on $\phi\in\Kformulas$:
\begin{itemize}
\item $P\models_R \emptyprop$ iff $R\paren{P,\unitposet}$
\item $P\models_R a$ iff $R\paren{P,\atom a}$
\item $P\models_R \neg\phi$ iff $P\not\models_R\phi$
\item $P\models_R \phi\vee \psi$ iff $P\models_R \phi$ or $P\models_R \psi$
\item $P\models_R \phi\wedge \psi$ iff $P\models_R \phi$ and $P\models_R \psi$
\item $P\models_R \phi\then\psi$ iff $\exists P_1,P_2$ such that $R\paren{P, P_1\pomseq P_2}$ and $P_1\models_R \phi$ and $P_2\models_R \psi$
\item $P\models_R \phi \nextto \psi$ iff $\exists P_1,P_2$ such that $R\paren{P, P_1\pompar P_2}$ and $P_1\models_R \phi$ and $P_2\models_R \psi$
\item $P\models_R \boxing \phi$ iff $\exists Q$ such that $R\paren{P, \boxing {Q}}$ and $Q\models_R \phi$
\item $P\models_R \context \phi$ iff $\exists P',Q$ such that $R\paren{P,P'}$ and $P'\contains Q$ and $Q \models_R \phi$.
\end{itemize}
\end{definition}

\noindent The operator $\boxing -$ describes the (encapsulated) properties of boxed terms. The operator 
$\context -$ identifies a property of a term that is obtained by removing parts, including boxes 
and events, of its satisfying term (i.e., its world) such that remainder satisfies the formula that it guards.  
The meanings of these operators are discussed more fully in \Cref{sec:frame}.

Note that $\satJu$ and $\satJd$ will only be used with positive
formulas. Given a formula $\phi$ and a relation $R$, we may define the
$R$-semantics of $\phi$ as
$\psem[R]\phi\eqdef\setcompr{P\in\posets}{P\models_R\phi}$.

\begin{example}
  Recall the problematic pattern we saw in the running example; that is,  
  \begin{center}
    \begin{tikzpicture}[yscale=.8]
      \node (p1)at (1,0.5){$\Read x$};
      \node (p3)at (3,0.5) {$\Write x$};
      \node (q1)at (1,-0.5){$\Read y$};
      \node (q3)at (3,-0.5){$\Write y$};
      \draw[thick,->,>=stealth](p1)--(p3);
      \draw[thick,->,>=stealth](p1)--(q3);
      \draw[thick,->,>=stealth](q1)--(p3);
      \draw[thick,->,>=stealth](q1)--(q3);
    \end{tikzpicture}
  \end{center}
  This pattern can be represented by the formula
  $\conflict{}\eqdef\context{\paren{\Read x\nextto\Read y}\then\paren{\Write x\nextto \Write y}}$.
\end{example}

We may also interpret these formulas over sets of posets. We consider here two ways a set of posets $X$ may satisfy a formula:
\begin{itemize}
\item $X$ satisfies $\phi$ \emph{universally} if every poset in $X$ satisfies
  $\phi$;
\item $X$ satisfies $\phi$ \emph{existentially} if some poset in $X$ satisfies
  $\phi$.
\end{itemize}

Combined with our three satisfaction relations for pomsets,
this yields six definitions:
\begin{align*}
  X\satKU \phi&\text{ iff }\forall P\in X,P\satK\phi
  &X\satKE \phi&\text{ iff }\exists P\in X,P\satK\phi\\
  X\satJuU \phi&\text{ iff }\forall P\in X,P\satJu\phi
  &X\satJuE \phi&\text{ iff }\exists P\in X,P\satJu\phi\\
  X\satJdU \phi&\text{ iff }\forall P\in X,P\satJd\phi
  &X\satJdE \phi&\text{ iff }\exists P\in X,P\satJd\phi.
\end{align*}
For a term $e\in\terms$, we write $e\models_R^y\phi$ to mean
$\sem e\models_R^y\phi$. In terms of $R$-semantics, these definitions
may be formalized as 
\begin{align}
  e\models_R^\exists\phi&\Leftrightarrow \sem e\cap\psem[R]\phi\neq\emptyset
  	\quad\mbox{\rm and}\quad 
  &e\models_R^\forall\phi&\Leftrightarrow \sem e\subseteq\psem[R]\phi.\label{eq:sat}
\end{align}

\subsection{Properties of pomset logic}
\label{sec:facts-logic}

We now discuss some of the properties of pomset logic. First, notice that if the relation $R$ is transitive, 
then for any posets $P,Q$ and any formula $\phi\in\Jformulas$, we have that 
\begin{equation} 
  \label{eq:R-closure}
  R\paren{P,Q} \text{ and }Q\models_R\phi\Rightarrow P\models_R\phi.
\end{equation}
\begin{proof}
  Let $R$ be a transitive relation, we show by induction on
  $\Phi\in\Jformulas$ that $\forall P,Q\in\posets$, if $R\paren{P,Q}$
  and $Q\models_R\Phi$ then $P\models_R\Phi$.
  \begin{itemize}
  \item If $\Phi=\phi\vee \psi$ or $\Phi=\phi\wedge \psi$, we use the
    induction hypothesis to show that $Q\models_R \Phi$ implies
    $P\models_R\Phi$.
  \item If $\Phi=\emptyprop$, $\Phi=a$, $\Phi=\phi\then\psi$,
    $\Phi=\phi\nextto\psi$, $\Phi=\boxing\phi$, or
    $\Phi=\context \phi$, then the satisfaction relation says
    $Q\models_R\Phi$ iff $R\paren{Q,Q'}$ and $h(Q')$. Since
    $R\paren{P,Q}$ and $R\paren{Q,Q'}$, by transitivity of $R$ we get that
    $R\paren{P,Q'}$ and thus we conclude that $P\models_R\Phi$
    without using the induction hypothesis.\qedhere
  \end{itemize}  
\end{proof}

If, additionally, $R$ is symmetric, this property may be strengthened to 
\begin{equation}
  \label{eq:R-sym-closure}
  \forall P,Q\in\posets,\,\forall \phi\in\Kformulas,\,\text{if }R\paren{P,Q}\text{, then } 
  P\models_R\phi\Leftrightarrow Q\models_R\phi.
\end{equation}
\begin{proof}
  Let $R$ be a symmetric and transitive relation, we show by induction
  on $\Phi\in\Kformulas$ that $\forall P,Q\in\posets$, if
  $R\paren{P,Q}$ then $P\models_R\Phi$ iff $Q\models_R\Phi$.
  \begin{itemize}
  \item If $\Phi=\neg\phi$, $\Phi=\phi\vee \psi$, or $\Phi=\phi\wedge
    \psi$, we use the induction hypothesis to conclude.
  \item If $\Phi=\emptyprop$, $\Phi=a$, $\Phi=\phi\then\psi$,
    $\Phi=\phi\nextto\psi$, $\Phi=\boxing\phi$, or
    $\Phi=\context \phi$, then the satisfaction relation says
    $Q\models_R\Phi$ iff $R\paren{Q,Q'}$ and $h(Q')$. Since
    $R\paren{P,Q}$, by symmetry and transitivity of $R$ we get that
    $R\paren{P,Q'}\Leftrightarrow R\paren{Q,Q'}$ and thus we
    conclude that $P\models_R\Phi\Leftrightarrow Q\models_R \Phi$
    without using the induction hypothesis.\qedhere
  \end{itemize}
\end{proof}

Furthermore, increasing the relation $R$ increases the satisfaction relation as well:
\begin{equation}
  \label{eq:R-extension}
  R\subseteq R'\Rightarrow\forall \phi\in\Jformulas,\forall P\in\posets,\,
  P\models_R\phi\Rightarrow P\models_{R'}\phi.
\end{equation}
\begin{proof}
This follows by a straightforward induction on formulas.
\end{proof}

From these observations and~\eqref{eq:sat}, we obtain the following
characterizations of the universal satisfaction relations for
$R\in\set{\pomeq,\subsume,\revsubsume}$:
\begin{align}
  e\satKU\phi&\Leftrightarrow \sem e\subsetsim\psem[\K]\phi\label{eq:satKU}\\
  e\satJdU\phi&\Leftrightarrow \sem e\subsume\psem[\Jd]\phi\label{eq:satJdU}\\
  e\satJuU\phi&\Leftrightarrow\forall P\in\sem e,\,\exists Q\in\psem[\Ju]\phi:P\revsubsume Q. \label{eq:satJuU}
\end{align}

Additionally, the following preservation properties hold for sets of posets:
\begin{align}
  e\subsetsim f\Rightarrow \forall\phi\in\Kformulas,\,& \paren{e\satKE \phi \Rightarrow f\satKE \phi}
\wedge
         \paren{f\satKU \phi \Rightarrow e\satKU \phi}\label{eq:semK}\\
  e\subsume f\Rightarrow\forall\phi\in\Jformulas,\,&\paren{ e\satJuE \phi \Rightarrow f\satJuE \phi}
  \wedge\paren{f\satJdU \phi \Rightarrow e\satJdU \phi} \label{eq:semJ}
\end{align}
\begin{proof}
  \begin{description}
  \item[(\ref{eq:semK})] Assume $e\subsetsim f$.
    \begin{itemize}
    \item If $e\satKE \phi$, then there exists a poset
      $P\in\sem e\cap\psem[\K]\phi$. Because $e\subsetsim f$, we can
      find a poset $Q\in\sem f$ such that $P\pomeq Q$. Since
      $P\in\psem[\K]\phi$ and $\psem[\K]\phi$ is closed under $\pomeq$
      we get $Q\in\sem f\cap\psem[\K]\phi$.

    \item If $f\satKU \phi$, then $\sem f\subsetsim\psem[\K]\phi$. By
      transitivity, we get that
      $\sem e\subsetsim f\subsetsim\psem[\K]\phi$; i.e.,
      $e\satKU\phi$.
    \end{itemize}
  \item[(\ref{eq:semJ})] Assume $e\subsume f$.
    \begin{itemize}
    \item If $e\satJuE \phi$, then there exists a poset
      $P\in\sem e\cap\psem[\Ju]\phi$. Because $e\subsume f$, we can
      find a poset $Q\in\sem f$ such that $P\subsume Q$. Since
      $P\in\psem[\Ju]\phi$ and $\psem[\Ju]\phi$ is upwards-closed we
      get $Q\in\sem f\cap\psem[\Ju]\phi$.
      
    \item If $f\satJdU \phi$, then $\sem f\subsume\psem[\Jd]\phi$.  By
      transitivity, we get that
      $\sem e\subsume\sem f\subsume\psem[\Jd]\phi$,
      that is, $e\satJdU\phi$.\qedhere
    \end{itemize}
  \end{description}
\end{proof}

We can build formulas from series--parallel terms: $\phi(a)\eqdef a$,
$\phi(1)\eqdef \emptyprop$,
$\phi\paren{\boxing s}\eqdef\boxing{\phi(s)}$,
$\phi(s\tseq t)\eqdef \phi(s)\then\phi(t)$, and
$\phi(s \tpar t)\eqdef \phi(s)\nextto \phi(t)$. Using $T_\_$, we
generalize this construction our full syntax: given a term
$e\in\terms$, we define the formula
$\Phi(e)\eqdef\bigvee_{s\in T_e}\phi(s)$.  These formulas are closely
related to terms thanks to the following lemma:

\begin{restatable}{lemma}{termtoform}\label{lem:term-to-form}
  For any term $s\in\spterms $ and any poset $P$, we have that 
  \begin{mathpar}
    P\satK \phi (s) \Leftrightarrow P \pomeq \sem s\and
    P\satJu \phi (s) \Leftrightarrow P \revsubsume \sem s\and
    P\satJd \phi (s) \Leftrightarrow P \subsume \sem s.
  \end{mathpar}
  For a term $e\in\terms$ and a set of posets $X\subseteq\posets$, we
  have that 
  \begin{mathpar}
    X\satKU\Phi(e)\Leftrightarrow X\subsetsim\sem e\and
    X\satJdU\Phi(e)\Leftrightarrow X\subsume\sem e.
  \end{mathpar}
\end{restatable}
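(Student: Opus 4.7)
The plan is to prove the three pomset-level equivalences first, by a uniform structural induction on the series-parallel term $s \in \spterms$ that simultaneously handles the three relations $\pomeq$, $\subsume$, and $\revsubsume$. The key observation driving the induction is that all three relations are reflexive and transitive, and each is monotone with respect to the poset constructors: from $P_i \mathrel{R} Q_i$ one obtains $P_1 \pomseq P_2 \mathrel{R} Q_1 \pomseq Q_2$, $P_1 \pompar P_2 \mathrel{R} Q_1 \pompar Q_2$, and $\boxing{P_1} \mathrel{R} \boxing{Q_1}$, simply by taking the disjoint union of the witnessing (iso/homo)morphisms. I would state and quickly justify this monotonicity lemma up front.

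The base cases $s = 1$ and $s = a$ are then direct, since $\phi(1) = \emptyprop$ and $\phi(a) = a$, and the semantic clauses for these atomic formulas read exactly $R(P,\unitposet)$ and $R(P,\atom a)$, matching the respective unfoldings of $\pomeq$, $\subsume$, and $\revsubsume$ against $\sem 1$ and $\sem a$. For the inductive step, consider $s = s_1 \tseq s_2$: unfolding the $\then$ clause, $P \models_R \phi(s_1)\then\phi(s_2)$ holds iff there exist $P_1, P_2$ with $R(P, P_1 \pomseq P_2)$ and $P_i \models_R \phi(s_i)$. The induction hypothesis rewrites the inner conjuncts as $P_i \mathrel{R} \sem{s_i}$; the forward direction then closes by monotonicity plus transitivity ($P \mathrel{R} P_1 \pomseq P_2 \mathrel{R} \sem{s_1}\pomseq\sem{s_2} = \sem{s_1 \tseq s_2}$), while the backward direction is witnessed by $P_1 \eqdef \sem{s_1}$, $P_2 \eqdef \sem{s_2}$ together with reflexivity. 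The cases $s = s_1 \tpar s_2$ and $s = \boxing{s'}$ are strictly parallel, using the semantic clauses for $\nextto$ and $\boxing-$ respectively.

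For the set-level equivalences, I would unfold $X \satKU \Phi(e)$ as $\forall P \in X,\, P \satK \bigvee_{s \in T_e} \phi(s)$, apply the disjunction clause of the satisfaction relation to obtain $\forall P \in X,\, \exists s \in T_e,\, P \satK \phi(s)$, and then invoke the already-proven pomset-level result to rewrite the inner statement as $P \pomeq \sem s$. At this point \Cref{lem:expr-to-term}, which gives $\sem e \pomeq \setcompr{\sem s}{s \in T_e}$, identifies this condition with $X \subsetsim \sem e$. The case for $\satJdU$ proceeds identically, using characterisation~\eqref{eq:satJdU} in place of the direct unfolding of $\subsetsim$, and the same rewriting via $T_e$.

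I do not foresee a serious technical obstacle: the argument is essentially bookkeeping. The only delicate points are (i) making the monotonicity of $\pomseq$, $\pompar$, $\boxing$ under the preorders $\subsume$ and $\revsubsume$ explicit so that the transitivity steps in the $\satJd$ and $\satJu$ cases are justified, and (ii) noting that \Cref{lem:expr-to-term} yields equivalence of set-of-posets up to $\pomeq$ rather than set equality; since both $\subsetsim$ and $\subsume$ are invariant under $\pomeq$ on sets, this presents no issue.
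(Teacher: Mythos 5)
Your proposal is correct and follows essentially the same route as the paper's own proof: a structural induction on $s\in\spterms$ treating $\pomeq$, $\subsume$, and $\revsubsume$ uniformly via reflexivity, transitivity, and monotonicity of $\pomseq$, $\pompar$, and $\boxing{-}$, followed by reduction of the set-level claims to the pomset-level ones through $T_e$ and \Cref{lem:expr-to-term}. The only difference is presentational: you state the monotonicity lemma explicitly, whereas the paper uses it silently inside its chains of equivalences.
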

\noindent%
The proof of this lemma may be found in the appendix.

As an immediate corollary, for any $e\in\terms$ and any
$s\in\spterms$, we obtain that:
\begin{align}
  \label{eq:phi}
    e\satKE\phi(s)\Leftrightarrow& \sem s\in\sem e
  &e\satJuE\phi(s)\Leftrightarrow& \sem s\in\clpom{\sem e}
\end{align}
\begin{proof}
  \begin{align*}
  e\satKE\phi(s)\Leftrightarrow&\sem e
  \cap\psem[\K]{\phi(s)}\neq\emptyset\Leftrightarrow\sem e
  \cap\set{\sem s}\neq\emptyset\Leftrightarrow\sem s\in\sem e.\\
  e\satJuE\phi(s)\Leftrightarrow&\sem e
  \cap\psem[\Ju]{\phi(s)}\neq\emptyset\Leftrightarrow\sem e
  \cap\upclpom{\set{\sem s}}\neq\emptyset\Leftrightarrow\sem
  s\in\clpom{\sem e}.\qedhere
  \end{align*}
\end{proof}

We can now establish adequacy lemmas. These should be understood as appropriate formulations 
of the completeness theorems relating operational equivalence and logical equivalence in the sense 
of van~Benthem~\cite{vB14} and Hennessy--Milner~\cite{HP80,Milner89} for this logic (cf.~\cite{AP16}).  
From the results we have established so far, we may directly prove the following:

\begin{restatable}{proposition}{adequacyPom}\label{lem:adequacy}
  For a pair of series--parallel terms $s,t\in\spterms$,
  \begin{align}
    \axeq\vdash s = t
    &\Leftrightarrow\forall\phi\in\Kformulas,\,\paren{\sem s
      \satK \phi\Leftrightarrow\sem t \satK \phi} 
      \label{eq:adequacy-K-pomset}\\
    \axinf\vdash s \leq t
    &\Leftrightarrow\forall\phi\in\Jformulas,\,
    \paren{\sem s \satJu \phi\Rightarrow\sem t \satJu \phi}. 
      \label{eq:adequacy-J-pomset}
  \end{align}
\end{restatable}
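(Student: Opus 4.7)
The plan is to derive both equivalences by combining the completeness theorems established earlier (\Cref{thm:bimon} and \Cref{thm:concmon}) with the formula-from-term construction of \Cref{lem:term-to-form}, using the closure properties of satisfaction under the relevant relations (equations (\ref{eq:R-closure}) and (\ref{eq:R-sym-closure})).

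For (\ref{eq:adequacy-K-pomset}), the forward direction follows immediately: if $\axeq\vdash s=t$ then $\sem s\pomeq\sem t$ by \Cref{thm:bimon}; since $\pomeq$ is an equivalence relation (in particular symmetric and transitive), (\ref{eq:R-sym-closure}) yields $\sem s\satK\phi\Leftrightarrow\sem t\satK\phi$ for every $\phi\in\Kformulas$. For the converse, instantiate the hypothesis at $\phi=\phi(s)$. By \Cref{lem:term-to-form}, $\sem s\satK\phi(s)$ is equivalent to $\sem s\pomeq\sem s$ and thus trivially holds, so we obtain $\sem t\satK\phi(s)$, hence $\sem t\pomeq\sem s$ by the same lemma, and then $\axeq\vdash s=t$ by \Cref{thm:bimon}.

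The proof of (\ref{eq:adequacy-J-pomset}) is parallel in structure. For the forward direction, $\axinf\vdash s\leq t$ gives $\sem s\subsume\sem t$ by \Cref{thm:concmon}, equivalently $\sem t\revsubsume\sem s$. Since $\revsubsume$ is transitive and $\satJu={\models_{\revsubsume}}$, property (\ref{eq:R-closure}) applied with $P=\sem t$ and $Q=\sem s$ shows that $\sem s\satJu\phi$ entails $\sem t\satJu\phi$ for every $\phi\in\Jformulas$. For the converse, instantiate at $\phi=\phi(s)$: by the $\revsubsume$-clause of \Cref{lem:term-to-form}, $\sem s\satJu\phi(s)$ is just $\sem s\revsubsume\sem s$, so the hypothesis gives $\sem t\satJu\phi(s)$, i.e.\ $\sem t\revsubsume\sem s$, i.e.\ $\sem s\subsume\sem t$, and one final appeal to \Cref{thm:concmon} concludes.

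There is no genuine obstacle in this proof: all of the combinatorial and syntactic work has been absorbed into the completeness theorems for $\axeq$ and $\axinf$ and into \Cref{lem:term-to-form}. The only subtle point to check is that the chosen satisfaction relation has the appropriate closure behaviour under the algebraic relation in play — symmetric-transitive closure for $\pomeq$ on $\Kformulas$, and merely transitive forward-closure for $\revsubsume$ on the positive fragment $\Jformulas$ — both of which are already recorded as (\ref{eq:R-sym-closure}) and (\ref{eq:R-closure}).
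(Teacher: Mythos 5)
Your proof is correct and follows essentially the same route as the paper: completeness (\Cref{thm:bimon}, \Cref{thm:concmon}) plus the closure properties \eqref{eq:R-sym-closure} and \eqref{eq:R-closure} for the forward directions, and instantiation at a canonical formula built from a term via \Cref{lem:term-to-form} for the converses. The only (immaterial) difference is that for \eqref{eq:adequacy-K-pomset} you instantiate at $\phi(s)$ where the paper uses $\phi(t)$, which is equivalent by symmetry of $\pomeq$.
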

\begin{proof}
  \begin{align*}
    \axeq\vdash s = t
    &\Leftrightarrow \sem s\pomeq \sem t\tag{\Cref{thm:bimon}}\\
    &\Rightarrow\forall\phi\in\Kformulas,\,
      \sem s \satK \phi \Leftrightarrow \sem t \satK \phi.
      \tag{\Cref{eq:R-sym-closure}}\\
    &\Rightarrow\sem s\satK\phi(t)\tag{Since $\sem t\satK\phi(t)$ follows from \Cref{lem:term-to-form}}\\
    &\Leftrightarrow\sem s\pomeq\sem t\tag{\Cref{lem:term-to-form}}\\
    &\Leftrightarrow \axeq\vdash s = t \tag{\Cref{thm:bimon}}
  \end{align*}
  \begin{align*}
    \axinf\vdash s \leq t
    &\Leftrightarrow\sem s \subsume \sem t\tag{\Cref{thm:concmon}}\\
    &\Rightarrow\forall\phi\in\Jformulas,\, \sem s \satJu \phi \Rightarrow \sem t \satJu \phi.\tag{\Cref{eq:R-closure}}\\
    &\Rightarrow\sem t\satJu\phi(s)\tag{Since $\sem s\satJu\phi(s)$ follows from \Cref{lem:term-to-form}}\\
    &\Leftrightarrow\sem t\revsubsume\sem s \tag{\Cref{lem:term-to-form}}\\
    &\Leftrightarrow \axinf\vdash s \leq t \tag{\Cref{thm:concmon}}
  \end{align*}
  \qedhere
\end{proof}

This extends to sets of pomsets in the following sense:
\begin{restatable}{proposition}{adequacyK}
  Given two terms $e,f\in\terms$, the following equivalences hold:
  \begin{align}
    \axsr\vdash e\leq f&\Leftrightarrow\paren{\forall \phi,\,
                         e\satKE \phi \Rightarrow f \satKE \phi}\Leftrightarrow \paren{\forall \phi,\,
                         f\satKU \phi \Rightarrow e\satKU \phi}\label{eq:adeqKinf}\\
    \axsr\vdash e = f&\Leftrightarrow \paren{\forall \phi,\,
                       e\satKE \phi \Leftrightarrow f\satKE \phi}\Leftrightarrow \paren{\forall \phi,\,
                       e\satKU \phi \Leftrightarrow f\satKU \phi} \label{eq:adeqKeq}\\
    \axsrinf\vdash e\leq f
    &\Leftrightarrow\paren{\forall \phi,\,
      e\satJuE \phi \Rightarrow f \satJuE \phi}
      \Leftrightarrow\paren{\forall \phi,\,
      f\satJdU \phi \Rightarrow e \satJdU \phi}\label{eq:adeqJinf}\\
    \axsrinf\vdash e = f
    &\Leftrightarrow \paren{\forall \phi,\,
      e\satJuE \phi \Leftrightarrow f\satJuE \phi}
      \Leftrightarrow \paren{\forall \phi,\,
      e\satJdU \phi \Leftrightarrow f\satJdU \phi} \label{eq:adeqJeq}
  \end{align}
\end{restatable}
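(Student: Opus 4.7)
The plan is to reduce all four equivalences to the set-level completeness results of Theorem~\ref{thm:CompletenessSets}, combined with the preservation properties~(\ref{eq:semK}) and~(\ref{eq:semJ}) and the term-to-formula translation of Lemma~\ref{lem:term-to-form}.

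First I would upgrade Theorem~\ref{thm:CompletenessSets} to the inequational level using the remark that $e \leq f$ is interchangeable with $e \join f = f$ in both $\axsr$ and $\axsrinf$. Explicitly, $\axsr \vdash e \leq f$ iff $\axsr \vdash e \join f = f$ iff $\sem{e \join f} \pomeq \sem f$ iff $\sem e \subsetsim \sem f$, and analogously $\axsrinf \vdash e \leq f$ iff $\sem e \subsume \sem f$.

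For~(\ref{eq:adeqKinf}), the argument proceeds in three steps. If $\axsr \vdash e \leq f$ then $\sem e \subsetsim \sem f$, and~(\ref{eq:semK}) delivers both $e \satKE \phi \Rightarrow f \satKE \phi$ and $f \satKU \phi \Rightarrow e \satKU \phi$ for every $\phi$. Conversely, from the existential hypothesis: for each $s \in T_e$, Lemma~\ref{lem:expr-to-term} gives $\sem s \in \sem e$, so by~(\ref{eq:phi}) $e \satKE \phi(s)$; the hypothesis yields $f \satKE \phi(s)$, whence $\sem s \in \sem f$. Varying $s$ and applying Lemma~\ref{lem:expr-to-term} to $f$ yields $\sem e \subsetsim \sem f$. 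For the universal converse, instantiate with $\phi \eqdef \Phi(f)$: Lemma~\ref{lem:term-to-form} makes $f \satKU \Phi(f)$ trivially equivalent to $\sem f \subsetsim \sem f$, so the hypothesis gives $e \satKU \Phi(f)$, which Lemma~\ref{lem:term-to-form} translates back to $\sem e \subsetsim \sem f$. Equation~(\ref{eq:adeqJinf}) is handled by the same strategy, replacing $\subsetsim$ by $\subsume$, $\satKE/\satKU$ by $\satJuE/\satJdU$, and using~(\ref{cor:conc-sr}) together with the preservation property~(\ref{eq:semJ}); for the universal converse one again takes $\Phi(f)$ as test formula and uses its $\satJdU$-characterisation from Lemma~\ref{lem:term-to-form}. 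Finally, the equational statements~(\ref{eq:adeqKeq}) and~(\ref{eq:adeqJeq}) follow from the inequational ones applied to both $e \leq f$ and $f \leq e$, by antisymmetry of $\subsetsim$ (resp.\ $\subsume$) up to the corresponding equivalence.

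The only mildly delicate point is the choice of test formula in each converse direction: $\phi(s)$ with $s \in T_e$ is the right witness for the existential statements, while $\Phi(f)$ is the right witness for the universal ones. This pairing is forced by the fact that Lemma~\ref{lem:term-to-form} characterises $\satKU$ and $\satJdU$ —but not $\satJuU$— of $\Phi(-)$, which also explains why, on the universal side, the $\Ju$-case in the proposition is naturally paired with $\satJdU$ rather than $\satJuU$.
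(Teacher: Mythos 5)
Your proposal is correct and follows essentially the same route as the paper: soundness via the set-level completeness theorem plus the preservation properties~(\ref{eq:semK})/(\ref{eq:semJ}) for the forward directions, and the test formulae $\phi(s)$ for $s\in T_e$ (existential side) and $\Phi(f)$ (universal side) via Lemma~\ref{lem:term-to-form} and~(\ref{eq:phi}) for the converses, with the equational cases obtained from the inequational ones by antisymmetry. The paper only writes out the $\K$-cases and leaves the rest implicit, whereas you also make explicit the (correct) inequational upgrade of Theorem~\ref{thm:CompletenessSets} via $e\leq f\Leftrightarrow e\join f=f$; these are presentational differences only.
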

\begin{proof}
  \begin{description}
  \item[\eqref{eq:adeqKinf}] We prove both directions. 
    \begin{description}
    \item[$(\Rightarrow)$] Assume $\axsr\vdash e\leq f$. By
      \Cref{cor:bi-sr} this means $\sem e\subsetsim\sem
      f$. Therefore, we may conclude by \Cref{eq:semK}.
    \item[$(\Leftarrow)$] We show that each LHS implies $\sem e\subsetsim\sem
      f$; that is, $\axsr\vdash e \leq f$:
      \begin{itemize}
      \item Assume
        $\forall \phi,\, f\satKU \phi \Rightarrow e\satKU \phi$. Then
        in particular, since $\sem f\subsetsim \sem f$ by
        \Cref{lem:term-to-form} we have $f\satKU \Phi(f)$, hence
        $e\satKU\Phi(f)$ ergo $\sem e\subsetsim \sem f$; 
      \item Assume
        $\forall \phi,\, e\satKE \phi \Rightarrow f \satKE \phi$, and
        let $P\in\sem e$. By \Cref{lem:expr-to-term} we know
        that there is $s\in T_e$ such that $P\pomeq\sem s$, and by
        \Cref{lem:term-to-form} we get $e\satKE\phi(s)$. Hence
        $f\satKE\phi(s)$. Hence, by~\Cref{eq:phi}, we get
        $P\pomeq\sem s\in\sem f$.
      \end{itemize}

    \end{description}
  \item[\eqref{eq:adeqKeq}] follows from \eqref{eq:adeqKinf},
    and the fact that $\leq$ is antisymmetric.\qedhere
  \end{description}
\end{proof}

\section{Local Reasoning} \label{sec:local}
Some of the discussions in this section do not rely on which satisfaction
relation we pick. When this is the case, we use the symbol $\models$ to mean any of the relations $\satK,\satJu,\satJd$.

\subsection{Modularity}
\label{sec:modularity}

Pomset logic enjoys a high level of compositionality, much like
algebraic logic. Formally, this comes from the following principle:
\[\text{If }e\models\phi\text{ and }\forall a,\,\sigma a\models\tau a,\text{ then }\hat\sigma e\models \hat\tau \phi.\]
This makes possible the following verification scenario: Let $P$ be a
large program, involving a number of simpler sub-programs
$P_1,\dots,P_n$. We may simplify $P$ by replacing the sub-programs by
uninterpreted symbols $x_1,\dots,x_n$. We then check that this
simplified program satisfies a formula $\Phi$, the statement of which
might involve the $x_i$. We then separately determine for each
sub-program $P_i$ some specification $\phi_i$. Finally, using the
principle we just stated, we can show that the full program $P$
satisfies the formula $\Phi'$, obtained by replacing the $x_i$ with
$\phi_i$. 


\subsection{Frame rule}
\label{sec:frame}

A key of objective of applied, modelling-oriented, work in logic and 
semantics is to understand systems --- such as complex programs, 
large-scale distributed systems, and organizations --- compositionally. 
That is, we seek understand how the system is made of components 
that can be understood independently of one another. A key aspect 
of this is what has become known as \emph{local reasoning}.  That 
is, that the pertinent logical properties of the components of a system 
should be independent of their context. 

In the world of Separation Logic \cite{Rey02,IO01,CSL}, 
for reasoning about how computer programs manipulate memory, O'Hearn, 
Reynolds, and Yang \cite{OHRY2001} suggest that 
\begin{quote}
  `To understand how a program works, it should be possible for reasoning and 
  specification to be confined to the cells that the program actually accesses. 
  The value of any other cell will automatically remain unchanged.'
\end{quote}
In this context, a key idea is that of the `footprint' of a program; that is, that 
part of memory that is, in an appropriate sense, used by the program 
\cite{RazaGardner2008}. If, in an appropriate sense, a program executes 
correctly, or `safely', on its footprint, then the so-called `frame property' 
ensures that the resources present outside of the footprint and, by implication, 
their inherent logical properties, are unchanged by the program.    

In the setting of Separation Logic, the frame property is usually represented by
a Hoare-triple rule of the form 
\[
  \frac{ \{ \phi \} C \{ \psi \} }{ \{ \phi * \chi \} C \{ \psi * \chi \} } 
  \quad \mbox{\rm $C$ is independent of $\chi$.} 
\]
That is, the formula $\chi$ does not include any variables (from the memory) 
that are modified by the program $C$. 

In order to formulate the frame property in our framework, we first
fix the notion of independence between a program and a formula. We say
that a pomset $P$ is \emph{$R$-independent} of a formula $\phi$,
written $P\independent[R] \phi$ if $P\not\models_R \context{\boxing\phi}$. 
Since independence is meant to prevent overlap, the use of the $\context{-}$ 
modality should come as no surprise.

To explain the need for the $\boxing{-}$ modality, first consider a pomset 
$P$ satisfying $\context\phi$. To extract a witness of this fact, we must 
remove parts of $P$, including boxes and events, such that the remainder 
satisfies $\phi$. However, there are no restrictions on the relationship 
between the remaining events and those we have deleted. In a sequence 
of three events, we are allowed to keep the two extremities, and delete 
the middle one. In contrast, to get a witness of $\context{\boxing\phi}$, 
we need to identify a box on $P$ whose contents satisfy $\phi$, and 
remove all events external to that box. The result is that the deleted 
events, that is, the context of our witness, can only appear outside the 
box, and must treat all events inside uniformly. In other words, these 
events can interact with the behaviour encapsulated in the box, but 
cannot interact with individual components inside. For this reason, the 
frame properties given in Proposition~\ref{lem:framerule} are expressed 
using $\boxing \phi$  --- that is, the encapsulation of $\phi$ --- rather 
than $\phi$.

With this definition, we can now state three frame rules, enabling
local reasoning with respect to the parallel product, sequential
prefixing, and sequential suffixing.
\begin{restatable}[Frame properties]{proposition}{framerule}\label{lem:framerule}
  If $P\independent\phi$, and $Q\satK\boxing \phi$, then it holds that:
  \begin{enumerate}[(i)]
  \item $\forall \psi\in\Kformulas,\,P\satK\psi\Leftrightarrow P\pompar Q\satK \psi\nextto\boxing\phi$;
  \item $\forall \psi\in\Kformulas,\,P\satK\psi\Leftrightarrow P\pomseq Q\satK \psi\then\boxing\phi$;
  \item $\forall \psi\in\Kformulas,\,P\satK\psi\Leftrightarrow Q\pomseq P\satK \boxing\phi\then\psi$.
  \end{enumerate}
\end{restatable}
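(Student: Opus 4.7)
The plan is to prove each biconditional by treating the two directions separately. The forward direction $(\Rightarrow)$ is immediate in all three cases: given $P\satK\psi$ and $Q\satK\boxing\phi$, the existential clauses for $\nextto$ and $\then$ in Definition~\ref{def:pomset-satisfaction} are witnessed by taking $P_1\eqdef P$ and $P_2\eqdef Q$, with the required $\pomeq$-relation holding by reflexivity.

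For the backward direction of (i), suppose $P\pompar Q\satK\psi\nextto\boxing\phi$. Unfolding the semantics twice, there exist $P_1$ and $R$ with $P\pompar Q\pomeq P_1\pompar\boxing R$, $P_1\satK\psi$, and $R\satK\phi$. Let $f$ denote the underlying event-set isomorphism, and set $B\eqdef f^{-1}(\events[R])$. Since $\events[R]$ is a box of the target pomset, $B$ is a box of $P\pompar Q$; and because boxes of a parallel composition lie entirely within $\events[P]$ or entirely within $\events[Q]$, we have $B\in\boxes[P]\cup\boxes[Q]$.

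If $B\in\boxes[P]$, then $f$ restricts to an isomorphism $\restrict{P}{B}\pomeq\boxing R$ (a routine check), so $\restrict{P}{B}\satK\boxing\phi$. Since $\restrict{P}{B}\issubpom P$, this yields $P\satK\context{\boxing\phi}$, contradicting $P\independent\phi$. So $B\in\boxes[Q]$. Using $Q\satK\boxing\phi$, write $Q\pomeq\boxing{R'}$ with $R'\satK\phi$, so $\boxes[Q]=\boxes[R']\cup\set{\events[R']}$. The real obstacle is to exclude the case $B\in\boxes[R']$ with $B\subsetneq\events[R']$. Here I use a cardinality argument: the outer box $\events[R']$ is itself in $\boxes[P\pompar Q]$, so $f(\events[R'])$ is a box of $P_1\pompar\boxing R$, hence in $\boxes[P_1]\cup\boxes[R]\cup\set{\events[R]}$. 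The last two classes consist of subsets of $\events[R]$, whose cardinality is $|B|<|\events[R']|=|f(\events[R'])|$; therefore $f(\events[R'])\in\boxes[P_1]$ and thus $f(\events[R'])\subseteq\events[P_1]$. But $B\subseteq\events[R']$ and $f(B)=\events[R]$ give $\events[R]\subseteq\events[P_1]$, contradicting the disjointness of $\events[R]$ and $\events[P_1]$ (using that $\events[R]\neq\emptyset$, as $B$ is a box and boxes are non-empty).

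Thus $B=\events[Q]$, from which a routine argument shows that $f$ restricts to an isomorphism $P\pomeq P_1$, and $P_1\satK\psi$ transfers to $P\satK\psi$ by~\eqref{eq:R-sym-closure}. The arguments for (ii) and (iii) follow the same template: sequential composition also splits boxes cleanly between operands, so the case analysis on $B$ is unchanged, and the cardinality argument goes through verbatim. The only new subtlety is the cross-ordering $\events[P]\times\events[Q]$ (or $\events[Q]\times\events[P]$) imposed by $\pomseq$, which is easily tracked when showing $P\pomeq P_1$. The main obstacle throughout is the cardinality argument ruling out the proper sub-box case, which crucially exploits the full hypothesis $Q\satK\boxing\phi$ rather than merely $Q\satK\phi$.
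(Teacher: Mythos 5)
Your proof is correct in substance but takes a genuinely different route from the paper's. The paper decomposes the witnessing isomorphism $P\pompar Q\pomeq P'\pompar Q'$ into a four-way product decomposition ($P\pomeq P_1\pompar P_2$, $Q\pomeq Q_1\pompar Q_2$, $P'\pomeq P_1\pompar Q_1$, $Q'\pomeq P_2\pompar Q_2$), observes that a non-empty box cannot be split by $\pompar$ or $\pomseq$, so that one of $Q_1,Q_2$ and one of $P_2,Q_2$ must be empty, and then uses independence to kill the case $Q_2\pomeq\unitposet$. You instead work directly at the level of events: you chase the distinguished box $\events[R]$ backwards through the isomorphism, use $\boxes[P\pompar Q]=\boxes[P]\cup\boxes[Q]$ to locate its preimage $B$, rule out $B\in\boxes[P]$ by independence, and pin $B=\events[Q]$ by your cardinality argument (equivalently: otherwise $f(\events[Q])$ would be a box of $P_1\pompar\boxing{R}$ strictly containing $\events[R]$, hence meeting both components, which is impossible). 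This buys you something concrete: you never need the four-way decomposition, which the paper asserts without justification and which is the one step that is actually delicate to establish for $\pomseq$; your case analysis transfers essentially verbatim because $\boxes[P\pomseq Q]=\boxes[P]\cup\boxes[Q]$ as well.

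One point to patch: your justification that $\events[R]\neq\emptyset$ is circular. You derive it from ``$B$ is a box,'' but $B=f^{-1}(\events[R])$ is a box of $P\pompar Q$ only because $\events[R]$ is a box of $\boxing{R}$, which itself already presupposes $\events[R]\neq\emptyset$ (boxes are non-empty by definition, so boxing the empty poset adds no box). If $R$ were empty, your case analysis would not get started. The fix is the observation the paper makes up front: if $\unitposet\satK\phi$, then $\unitposet\pomeq\boxing{\unitposet}\satK\boxing\phi$ and $P\contains\unitposet$, so $P\satK\context{\boxing\phi}$, contradicting $P\independent\phi$. Hence $R$ is non-empty, $\events[R]\in\boxes[\boxing{R}]$, and the rest of your argument goes through.
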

\begin{proof}
  Clearly, if $P\satK\psi$, since $Q\satK\boxing \phi$ we have
  immediately that $P\pompar Q\satK \psi\nextto\boxing\phi$,
  $P\pomseq Q\satK \psi\then\boxing\phi$, and
  $P\pomseq Q\satK \boxing\phi\then\psi$.

  Now assume $P\pompar Q\satK \psi\nextto\boxing\phi$. This means that
  there exists $P',Q'$ such that $P\pompar Q \pomeq P'\pompar Q'$,
  $P'\satK\psi$ and $Q'\satK\boxing \phi$. The factorization
  $P\pompar Q \pomeq P'\pompar Q'$ may be further decomposed into
  pomsets $P_1,P_2,Q_1$, and $Q_2$ such that $P\pomeq P_1\pompar P_2$,
  $Q\pomeq Q_1\pompar Q_2$, $P_1\pompar Q_1\pomeq P'$, and
  $P_2\pompar Q_2\pomeq Q'$.

  Note that $\phi$ cannot be satisfied by the empty pomset: otherwise,
  since the empty pomset is contained in any other pomset, we would
  have $P\satK\context{\boxing\phi}$. Therefore since both $Q'$ and
  $Q$ satisfy $\boxing\phi$, both have to be non-empty boxes. As a
  result, we get:
  \begin{itemize}
  \item since $Q\pomeq Q_1\pompar Q_2$, either $Q_1$ or $Q_2$ have to be empty;
  \item since $Q'\pomeq P_2\pompar Q_2$, either $P_2$ or $Q_2$ have to be empty.
  \end{itemize}
  If $Q_2$ is empty, we violate again the hypothesis that $P$ is
  independent from $\phi$, since
  $P\pomeq P_1\pompar P_2\pomeq P_1\pompar Q'\contains Q'$, and
  $Q'\satK\boxing \phi$.

  Hence we know that $Q_1$ and $P_2$ are empty, meaning that
  $P\pomeq P_1\pomeq P'\satK\psi$.

  The same argument works for the sequential product.
\end{proof}
\begin{remark}
  Note that this lemma does not hold for $\subsume$ or $\revsubsume$
  instead of $\simeq$. The left-to-right implications always hold, but
  the converse may be fail, as we now demonstrate with some examples.
  \begin{itemize}
  \item For $\subsume$, consider the following:
    \begin{mathpar}
      P\eqdef \atom a\and Q\eqdef \boxing{\atom b\pompar\boxing{\atom c}}\and \phi\eqdef\context c \and \psi\eqdef a \nextto b.
    \end{mathpar}
    We may check that in this case we have:
    \begin{mathpar}
      P\not\satJd \context{\boxing\phi}\and Q\satJd\boxing\phi\and
      P\pompar Q\satJd\psi\nextto\boxing\phi\and P\not\satJd\psi.
    \end{mathpar}
    This happens because in order to satisfy $\psi\nextto\boxing\phi$
    we may rearrange the pomsets using the ordering. More precisely, we
    use the fact that 
    \[P\pompar Q= \atom a\pompar \boxing{\atom b\pompar\boxing{\atom
          c}}\subsume \paren{\atom a\pompar\atom
        b}\pompar\boxing{\atom c}.\]
  \item For $\revsubsume$, we have a similar example. The difference is
    that instead of $Q$ overspiling into $P$, we have the converse.
    \begin{mathpar}
      P\eqdef \atom a \pompar\atom b\and
      Q\eqdef \atom c\and
      \phi\eqdef \context c\and
      \psi\eqdef \atom a.
    \end{mathpar}
    Since $P$ does not contain any $c$, but has more than one event,
    $P\not\satJd \context{\boxing\phi}\vee\psi$. On the other hand
    $Q\satJd\boxing\phi$ because
    $Q\revsubsume\boxing Q$ and $Q\contains Q\satJu c$.
    Finally,
    $P\pompar Q=\atom a \pompar\atom b\pompar \atom c\revsubsume \atom
    a \pompar\boxing{\atom b\pompar \atom c}$, and we have both
    $\atom a \satJd\psi$ and
    $\boxing{\atom b\pompar\atom c}\satJd\boxing\phi$.
  \end{itemize}
\end{remark}
\begin{remark}
  this principle may be extended to sets of pomsets. Indeed, if we
  define the independence relation for sets of pomsets as
  $A\independent[R]\phi\eqdef \forall P\in A,\,P\independent[R]\phi$,
  then Proposition \ref{lem:framerule} holds for both $\satKU$ and
  $\satKE$.
\end{remark}



\subsection{Example}

In this section, we present an example program, and showcase reasoning
principles of pomset logic. In particular, we will highlight the use of
local reasoning when appropriate.

Consider the following voting protocol: a fixed number of voters,
$v_1,\dots,v_n$, are each asked to increment one of the counters
$c_1,\dots,c_k$. The tally is then sent to each of the $v_i$, to
inform them of the result. The increment is implemented similarly to
our running example of the distributed
counter~(Example~\ref{ex:distrib-count}).
The implementation of the protocol is displayed in Figure~\ref{fig:voting}, 
together with the intended semantics of the atomic actions. 

\begin{figure}[t]
  \centering
  \fbox{
    \begin{minipage}{.8\linewidth}
      \begin{align*}
        \VoteProc&\eqdef~ \Choose \tseq \Publish\\
        \Choose&\eqdef~ \boxing{\vote 1}\tpar\cdots\tpar\boxing{\vote n}\\
        \vote i&\eqdef~ \sum_{1\leqslant j\leqslant k}\choose{i,j}\tseq \Read{j}\tseq\Compute{}\tseq\Write{j}\\
        \Publish&\eqdef~ \send 1\tpar\cdots\tpar\send n
      \end{align*}
      \begin{tabular}{c@{:~}l}
        $\send i$& the contents of counters $c_1,\dots,c_k$ is sent to voter $v_i$\\
        $\choose{i,j}$& voter $v_i$ chooses counter $c_j$\\
        $\Read{j}$& the content of counter $c_j$ is loaded into a local variable\\
        $\Compute{}$& the local variable is incremented\\
        $\Write{j}$& the content of the local variable is stored in counter $c_j$
      \end{tabular}
    \end{minipage}
  }
  \caption{Voting protocol}
  \label{fig:voting}
\end{figure}

\paragraph*{Conflict}
As in Example~\ref{ex:distrib-count}, if we forgo the boxes in $\Choose$, we cannot enforce 
mutual exclusion. Recall that the undesirable behaviour is captured by the following formula:
\[
	\conflict{j}\eqdef\context{\paren{\Read {j}\nextto\Read {j}}\then\paren{\Write {j}\nextto \Write {j}}} 
\]
We may see this by defining an alternative (faulty) protocol:
\[
	\VoteProc'\eqdef \paren{{\vote 1}\tpar\cdots\tpar{\vote n}}\tseq\Publish
\]
and then checking that this protocol displays the behaviour we wanted to avoid:
\[
	\VoteProc'\satJuE \conflict{j}.
\]
This statement should be read as `there is a pomset in $\sem {\VoteProc'}$ that
is larger than one containing a conflict'.
We can show the existence of this `bug' by local reasoning. We may
first prove that $\vote i\tpar\vote {i'}\satJuE \conflict{j}$ (for some arbitrary $i\neq i'$). The
properties of $\context{-}$ then allow us to deduce that
\[\VoteProc' \pomeq \paren{\vote i\tpar\vote
    {i'}\tpar\cdots}\tseq\cdots\satJuE \context{\conflict{j}}\equiv
  \conflict{j}.\]
The implementation in Figure~\ref{fig:voting} avoids this problem, and
indeed it holds that:
\[\VoteProc\not\satJuE \conflict{j}.\]
However, showing that this formula is \emph{not} satisfied by the
program is less straightforward and, in particular, cannot be done
locally: we have to enumerate all possible sub-pomsets, and check that
none provide a suitable witness.

\paragraph*{Sequential separation}
In our protocol, the results of the vote are only communicated after
every participant has voted. This is specified by the following
statement:
\[\seqsep\eqdef\context{\neg\paren{\bigvee_i \send{i}}}\then\context{\neg\paren{\bigvee_{i,j}{\choose{i,j}}}}.\]
This may be checked modularly. Indeed, one may prove by simple
syntactic analysis that
\begin{mathpar}
  \Choose\satKU\context{\neg\paren{\bigvee_i \send{i}}}\and\text{and}\and
  \Publish\satKU\context{\neg\paren{\bigvee_{i,j}{\choose{i,j}}}}.
\end{mathpar}
Therefore, we may combine these to get that:
\[\VoteProc=\Choose\tseq\Publish\satKU\context{\neg\paren{\bigvee_i \send{i}}}\then\context{\neg\paren{\bigvee_{i,j}{\choose{i,j}}}}=\seqsep.\]

For voter $i$, two of the most meaningful steps are $\choose{i,j}$ and
$\send{i}$, i.e. when the vote is cast and when the result of the vote
is forwarded to them. Using the macro
$\fchoose i\eqdef\bigvee_j\choose{i,j}$, we can specify that during
the protocol, each voter first votes, and then gets send the result:
\[\votethensend\eqdef\context{\paren{\fchoose 1\then\send 1}\nextto\dots\nextto\paren{\fchoose n\then\send n}}.\]

\paragraph*{Unique votes}

Another important feature of this protocol is that each voter may only
cast a single vote. Knowing that each voter controls a single box, we
express this property with the statement:
\[
	\VoteProc\not\satJdE\bigvee_{j,j'}\context{\boxing{\context{\Write{j}\nextto\Write{j'}}}}.
\]
Since we use the relation $\satJdE$ with the connective $\nextto$, we
allow any possible ordering of the two write events. The only 
constraint is that there should be at least two of them in the same
box. As for the `conflict' property, if the `bad' behaviour were to
happen, one could prove it compositionally. However, disproving the
existence of such a behaviour is a more global process, involving the
exploration of all possible sub-pomsets.

\paragraph*{Frame property}

As we have seen in previous examples, proving that a formula does not
hold can be challenging, because the non-existence of a local pattern
is \emph{not} a local property. We may circumvent this problem by
adding more boxes in both programs and formulas.  This is related to a
common pattern in parallel programming: in a multi-threaded program,
one may insert fences to `tame' concurrency. Doing so simplifies
program analysis, at the cost of some efficiency. Similarly, since
adding boxes restricts behaviours --- thus disallowing some possible
optimizations --- the analysis of a program becomes simpler and more
efficient.

We illustrate this with the following statement:
\begin{mathpar}
	\boxing\Choose\tseq\boxing\Publish\not\satKU\context{\Write{}\then\Write{}}\then\boxing\phi\and\text{where }	
	\phi\eqdef\neg\paren{\emptyprop\vee\context{\boxing{\context{\Write{}}}}}.
\end{mathpar}

\noindent $\context{\Write{}\then\Write{}}$ indicates that two `write' instructions can be
executed in sequence, while $\phi$ denotes a non-empty pomset, not containing any boxes
with a `write' event inside. We can first prove properties of the subprograms:
\begin{mathpar}
  \boxing{\Publish}\satKU\boxing{\phi}\and
  \boxing{\Choose}\not\satKU\context{\boxing{\phi}}\and
  \boxing{\Choose}\not\satKU\context{\Write{}\then\Write{}}.
\end{mathpar}
Since $\boxing\Choose\independent\phi$ and
$\boxing\Publish\satKU\boxing\phi$, we obtain from the frame rule that
\[
\boxing\Choose\tseq\boxing\Publish\satKU\context{\Write{}\then\Write{}}\then\boxing\phi\Leftrightarrow\boxing{\Choose}\satKU\context{\Write{}\then\Write{}}.
\]
Since we have locally disproved the latter, we may deduce that the former does not hold.

\section{Future work}
\label{sec:future}

In this paper, we have not considered the CKA operator $-^\star$. A
natural further step would be to do so, with the corresponding need to
consider versions of pomset logic with fixed points.
Connections with Hoare-style program logics, such as Concurrent
Separation Logic~\cite{BrookesO'Hearn2016,OPVH2015} with its concrete
semantics, should also be considered.

Our satisfaction relation over pomsets is defined inductively. However, the 
satisfaction relations we define for sets of pomsets is not: we define in terms 
of the former relation. For practical purposes, such as model-checking, it 
would be useful to have a similar inductive definition for sets of pomsets.

It is also worth noticing that the definitions and statements in
Section~\ref{sec:algebra} are straight-forward generalizations of
their counterparts in CKA (without boxes); even the proofs of those
results follow a similar strategy. However, we could reuse almost no
result from CKA: instead we had to reprove everything from
scratch. This situation is deeply unsatisfactory, and we plan on
investigating techniques to better `recycle' proofs in this
context. Recent work on (C)KA with
\emph{hypotheses}~\cite{dkpp19,kbswz20} seems to be a step towards 
this goal.

\bibliography{bibli}

\clearpage
\appendix
\section{Proof of Lemma~\ref{lem:term-to-form}}
\allowdisplaybreaks
\termtoform*
\begin{proof}
  By induction on $s$:
  \begin{align*}
    s&=1:
    &P\satK\phi(1)=\emptyprop
    &\Leftrightarrow P\pomeq\unitposet=\sem 1.\\
     &&P\satJu\phi(1)=\emptyprop
    &\Leftrightarrow P\pomeq\unitposet\Leftrightarrow P\revsubsume\unitposet=\sem 1.\\
     &&P\satJd\phi(1)=\emptyprop
    &\Leftrightarrow P\pomeq\unitposet\Leftrightarrow P\subsume\unitposet=\sem 1.\\
    s&=a:
    &P\satK\phi(a)=a
    &\Leftrightarrow P\pomeq\atom a=\sem a.\\
     &&P\satJu\phi(a)=a
    &\Leftrightarrow P\revsubsume\atom a=\sem a.\\
     &&P\satJd\phi(a)=a
    &\Leftrightarrow P\subsume\atom a=\sem a.\\
    s&=\boxing t:
    &P\satK\boxing{\phi(t)}
    &\Leftrightarrow P\pomeq\boxing Q
      \wedge Q\satK\phi(t)\\
     &&&\Leftrightarrow P\pomeq\boxing Q
         \wedge Q\pomeq\sem t\\
     &&&\Leftrightarrow P\pomeq \boxing{\sem t}=\sem{\boxing t}.\\
     &&P\satJu\boxing{\phi(t)}
    &\Leftrightarrow P\revsubsume\boxing Q
      \wedge Q\satJu\phi(t)\\
     &&&\Leftrightarrow P\revsubsume\boxing Q
         \wedge Q\revsubsume\sem t\\
     &&&\Leftrightarrow P\revsubsume \boxing{\sem t}=\sem{\boxing t}.\\
     &&P\satJd\boxing{\phi(t)}
    &\Leftrightarrow P\subsume\boxing Q
      \wedge Q\satJd\phi(t)\\
     &&&\Leftrightarrow P\subsume\boxing Q
         \wedge Q\subsume\sem t\\
     &&&\Leftrightarrow P\subsume \boxing{\sem t}=\sem{\boxing t}.\\
    s&=s_1\tseq s_2:
    &P\satK\phi(s_1)\then\phi(s_2)
    &\Leftrightarrow P\pomeq P_1\pomseq P_2\\
     &&&{\color{white}\Leftrightarrow}\;\wedge P_1\satK\phi(s_1)
         \wedge P_2\satK\phi(s_2)\\
     &&&\Leftrightarrow P\pomeq P_1\pomseq P_2
         \wedge P_1\pomeq\sem{s_1}
         \wedge P_2\pomeq\sem{s_2}\\
     &&&\Leftrightarrow P\pomeq \sem{s_1}\pomseq\sem{s_2}=\sem {s_1\tseq s_2}.\\
     &&P\satJu\phi(s_1)\then\phi(s_2)
    &\Leftrightarrow P\revsubsume P_1\pomseq P_2\\
     &&&{\color{white}\Leftrightarrow}\;\wedge P_1\satJu\phi(s_1)
         \wedge P_2\satJu\phi(s_2)\\
     &&&\Leftrightarrow P\revsubsume P_1\pomseq P_2
         \wedge P_1\revsubsume\sem{s_1}
         \wedge P_2\revsubsume\sem{s_2}\\
     &&&\Leftrightarrow P\revsubsume \sem{s_1}\pomseq\sem{s_2}=\sem {s_1\tseq s_2}.\\
     &&P\satJd\phi(s_1)\then\phi(s_2)
    &\Leftrightarrow P\subsume P_1\pomseq P_2\\
     &&&{\color{white}\Leftrightarrow}\;\wedge P_1\satJd\phi(s_1)
         \wedge P_2\satJd\phi(s_2)\\
     &&&\Leftrightarrow P\subsume P_1\pomseq P_2
         \wedge P_1\subsume\sem{s_1}
         \wedge P_2\subsume\sem{s_2}\\
     &&&\Leftrightarrow P\subsume \sem{s_1}\pomseq\sem{s_2}=\sem {s_1\tseq s_2}.\\
    s&=s_1\tpar s_2:
    &P\satK\phi(s_1)\nextto\phi(s_2)
    &\Leftrightarrow P\pomeq P_1\pompar P_2\\
     &&&{\color{white}\Leftrightarrow}\;
         \wedge P_1\satK\phi(s_1)
         \wedge P_2\satK\phi(s_2)\\
     &&&\Leftrightarrow P\pomeq P_1\pompar P_2
         \wedge P_1\pomeq\sem{s_1}
         \wedge P_2\pomeq\sem{s_2}\\
     &&&\Leftrightarrow P\pomeq \sem{s_1}\pompar\sem{s_2}=\sem{s_1\tpar s_2}.\\
     &&P\satJu\phi(s_1)\nextto\phi(s_2)
    &\Leftrightarrow P\revsubsume P_1\pompar P_2\\
     &&&{\color{white}\Leftrightarrow}\;
         \wedge P_1\satJu\phi(s_1)
         \wedge P_2\satJu\phi(s_2)\\
     &&&\Leftrightarrow P\revsubsume P_1\pompar P_2
         \wedge P_1\revsubsume\sem{s_1}
         \wedge P_2\revsubsume\sem{s_2}\\
     &&&\Leftrightarrow P\revsubsume \sem{s_1}\pompar\sem{s_2}=\sem{s_1\tpar s_2}.\\
     &&P\satJd\phi(s_1)\nextto\phi(s_2)
    &\Leftrightarrow P\subsume P_1\pompar P_2\\
     &&&{\color{white}\Leftrightarrow}\;
         \wedge P_1\satJd\phi(s_1)
         \wedge P_2\satJd\phi(s_2)\\
     &&&\Leftrightarrow P\subsume P_1\pompar P_2
         \wedge P_1\subsume\sem{s_1}
         \wedge P_2\subsume\sem{s_2}\\
     &&&\Leftrightarrow P\subsume \sem{s_1}\pompar\sem{s_2}=\sem{s_1\tpar s_2}.
  \end{align*}
  
  Recall that since $\upclpom\_$ and $\clpom\_$ are Kuratowski closure
  operators, they distribute over unions. We may thus obtain:
  \begin{align*}
    \psem[\K]{\Phi(e)}&=\bigcup_{s\in T_e}\psem[\K]{\phi(s)}\pomeq\bigcup_{s\in T_e}\set{\sem s}\pomeq\sem e.\\
    \psem[\Jd]{\Phi(e)}&=\bigcup_{s\in T_e}\psem[\Jd]{\phi(s)}\pomeq\bigcup_{s\in T_e}\clpom{\set{\sem s}}\pomeq\clpom{\sem e}.
  \end{align*}
  The statements then follow by~\eqref{eq:satKU} and~\eqref{eq:satJdU}.\qedhere
\end{proof}

\end{document}